\numberwithin{equation}{section}
\newtheorem{theorem}{Theorem}[section]
\newtheorem{lemma}[theorem]{Lemma}
\newtheorem{proposition}[theorem]{Proposition}
\newtheorem{corollary}[theorem]{Corollary}
\theoremstyle{definition}
\newtheorem{example}[theorem]{Example}
\newtheorem{remark}[theorem]{Remark}
\newcommand{\Z}{{\mathbb Z}}
\newcommand{\C}{{\mathbb C}}
\begin{document}

\title[Multispecies TAZRP I]{Multispecies totally asymmetric zero range process:\\
I. Multiline process and combinatorial $R$}

\author{Atsuo Kuniba}
\email{atsuo@gokutan.c.u-tokyo.ac.jp}
\address{Institute of Physics, University of Tokyo, Komaba, Tokyo 153-8902, Japan}

\author{Shouya Maruyama}
\email{maruyama@gokutan.c.u-tokyo.ac.jp}
\address{Institute of Physics, University of Tokyo, Komaba, Tokyo 153-8902, Japan}

\author{Masato Okado}
\email{okado@sci.osaka-cu.ac.jp}
\address{Department of Mathematics, Osaka City University, 
3-3-138, Sugimoto, Sumiyoshi-ku, Osaka, 558-8585, Japan}


\maketitle

\vspace{0.5cm}
\begin{center}{\bf Abstract}
\end{center}
We introduce an $n$-species totally asymmetric zero range process
($n$-TAZRP) on one-dimensional periodic lattice with $L$ sites.
It is a continuous time Markov process in which 
$n$ species of particles hop to the adjacent site 
only in one direction under the condition that 
smaller species ones have the priority to do so.
Also introduced is an $n$-line process, a companion stochastic system 
having the uniform steady state from which
the $n$-TAZRP is derived as the image by a certain projection $\pi$.
We construct the $\pi$ by a combinatorial $R$ 
of the quantum affine algebra $U_q(\widehat{sl}_L)$ and 
establish a matrix product formula of the steady state probability 
of the $n$-TAZRP
in terms of corner transfer matrices
of a $q=0$-oscillator valued vertex model. 
These results parallel the recent reformulation of 
the $n$-species totally asymmetric simple exclusion process 
($n$-TASEP) by the authors, 
demonstrating that $n$-TAZRP and $n$-TASEP are the canonical sister models
associated with the symmetric and 
the antisymmetric tensor representations of 
$U_q(\widehat{sl}_L)$ at $q=0$, respectively.

\vspace{0.5cm}

\section{Introduction}\label{sec:intro}

Zero range processes are stochastic dynamical systems 
modeling a variety of nonequilibrium phenomena in 
biology, chemistry, economics, networks, physics, sociology and so forth.
In this article and the next \cite{KMO4} 
we introduce and study a new zero range process on 
one-dimensional (1D) periodic lattice of length $L$.
There are $n$ species of particles living on the sites
with no constraint on their occupation numbers.
Particles within a site hop to the left adjacent site or  
remain unmoved under the condition that 
smaller species ones have the priority to hop.
We call it {\em $n$-species totally asymmetric zero range process} ($n$-TAZRP),
where TA refers to the unidirectional move and ZR signifies that 
the interaction of particles via the priority constraint 
works only among those occupying the same departure site.
A cheerful realization of such a system is children's play along 
a circle divided into $L$ segments.
The species of particles are interpreted as ages of the children.
They are allowed to move forward to the next segment only when
accompanying all the strictly younger fellows than themselves to look after.
As one may imagine from such an example there is a general tendency of 
{\em condensation}, whose symptom is indeed observed in our TAZRP.
See Example \ref{ex:LL}.  

There are several kinds of one-dimensional zero range processes 
studied in the literature.
They are mostly one or two-species models.
See for example \cite{EH,P,BCS} and references therein.
The $n$-TAZRP in this article and \cite{KMO4} is 
the first multispecies example which allows an explicit matrix product formula 
for the steady state probability for general $n \ge 1$.
It possesses a number of distinctive features summarized below.

(i) Our $n$-TAZRP is the image of a certain projection $\pi$ from 
another stochastic system, the {\em $n$-line process} ($n$-LP), 
which we also introduce in this paper.
It has the steady state with uniform probability distribution. 
Denoting their Markov matrices by 
$H_\mathrm{TAZRP}$ and $H_\mathrm{LP}$ respectively, 
we have the intertwining relation
\begin{align*}
\pi H_\mathrm{LP} = H_\mathrm{TAZRP}\, \pi.
\end{align*}
The map $\pi$ is a source of many intriguing features in our construction.
It is realized as a composition of a {\em combinatorial $R$} \cite{NY} of the 
quantum affine algebra $U_q(\widehat{sl}_L)$ \cite{D86,J}.
The combinatorial $R$ is a bijection between finite sets called {\em crystals} and 
arises as a quantum $R$ matrix at $q=0$ \cite{Ka1,KMN,HK}.
Systematic use of the Yang-Baxter equation \cite{Bax} 
satisfied by the combinatorial $R$ is a key maneuver in our working.
In particular the projection $\pi$ admits 
a queueing type description (Section \ref{ss:mst}) analogous 
to the Ferrari-Martin algorithm \cite{FM} for the $n$-species totally asymmetric 
simple exclusion process ($n$-TASEP).

(ii) Our main result, Theorem \ref{th:aoy}, is 
a matrix product formula of the steady state probability of the 
configuration $(\sigma_1, \ldots, \sigma_L)$ of the $n$-TAZRP:
\begin{align*}
{\mathbb P}(\sigma_1, \ldots, \sigma_L) = 
\mathrm{Tr}\bigl(X_{\sigma_1} \cdots X_{\sigma_L}\bigr).
\end{align*}
The operator $X_{\sigma}$ assigned with a local state $\sigma$ 
is a configuration sum that can be viewed as a   
{\em corner transfer matrix} \cite{Bax} of a $q=0$-oscillator valued vertex model.
It acts on the $n(n-1)/2$-fold tensor product of the $q=0$-oscillator Fock space.
See (\ref{mrn:ssi}).
The $X_{\sigma}$ can also be regarded  
as a layer-to-layer transfer matrix of a 3D lattice model, and 
${\mathbb P}(\sigma_1, \ldots, \sigma_L)$ is thereby interpreted as 
a {\em partition function} of the 3D model with prism shape under a prescribed 
boundary condition.

(iii) By extending the setting to generic $q$,  
the corner transfer matrix $X_\sigma$ is naturally embedded into a layer-to-layer
transfer matrix of a more general 3D lattice model.
Then the most local hence fundamental relation 
responsible for the steady state condition turns out to be 
the {\em tetrahedron equation} \cite{Zam80}, 
which is a 3D generalization of the Yang-Baxter equation.
The result reveals the 3D integrability in the matrix product construction.

(iv) Our $n$-TAZRP is a model in which 
the ``physical space" is of size $L$ and the ``internal space" is of size $n$.
In contrast, the internal symmetry of the combinatorial $R$ is 
$U_{q=0}(\widehat{sl}_L)$ and the system size of 
the corner transfer matrix is $n$.
See Theorem \ref{th:mnm}.
In particular the cyclic symmetry $\Z_L$ of the original lattice has been 
incorporated into the Dynkin diagram of the internal symmetry algebra 
$U_q(\widehat{sl}_L)$.
In this sense our approach captures the {\em cross channel} of the original problem
where the two kinds of spaces and symmetries are interchanged.
It is a manifestation of the {\em rank-size duality} commonly recognized 
for a class of 3D systems associated with the tetrahedron equation \cite{BS, KOS}.
We stress that such a hidden 3D structure can be elucidated only
by a systematic investigation on the multispecies case $n\ge 1$. 
An alternative approach via the direct channel 
based on some rank $n$ internal symmetry algebra
is yet to be undertaken.

(v) The whole story about the $n$-TAZRP in this paper and \cite{KMO4}
is closely parallel with the recent result 
on the $n$-TASEP by the authors \cite{KMO, KMO2}.
In fact the $n$-TAZRP and the $n$-TASEP turn out to be 
the canonical sister models associated with the 
symmetric and the antisymmetric tensor representations of 
$U_q(\widehat{sl}_L)$ at $q=0$, respectively.
The combinatorial $R$'s for both of them had been obtained in \cite{NY}.
In terms of the 3D picture, the two models are associated with 
3D $R$-operator and the 3D $L$-operator, respectively.
They are distinguished solutions to the tetrahedron equation which  
have a rich background going back to the 
representation theory of the quantized algebra of functions \cite{KV}.
See \cite{BS, KOS, Ku} and references therein.

In this paper we will demonstrate the features 
(i) and (ii) of combinatorial nature mainly, and leave the 
issue (iii) related to the tetrahedron equation to the subsequent paper \cite{KMO4}. 
Although the main idea comes from the crystal base theory,
a theory of quantum groups at $q=0$ \cite{Ka1, HK},
the article has been designed to be readable without knowledge of it.

In Section \ref{sec:taz}
the $n$-TAZRP is defined and examples of the steady states are presented.
In Section \ref{sec:mp}  the $n$-line process is introduced 
on the set $B({\bf m})$ 
which is the crystal of the $n$-fold tensor product of  
the symmetric tensor representation of $U_q(\widehat{sl}_L)$. 
It is shown that the steady state of the 
$n$-line process has the uniform probability distribution 
(Theorem \ref{th:akn}).
In Section \ref{sec:pr}
the projection $\pi$ from the $n$-line process to the 
$n$-TAZRP is constructed from a composition of the combinatorial $R$. 
It is also described in terms of a multiple queueing 
type algorithm in Section \ref{ss:mst}, which contrasts with 
the analogous procedure for TASEP \cite{FM,KMO}. 
The steady state of the $n$-TAZRP is the image of the uniform state 
in the $n$-line process.
In Section \ref{sec:f} 
a matrix product formula for the steady state probability 
is derived for the $n$-TAZRP, which is expressed 
by corner transfer matrices of $q=0$-oscillator valued 
vertex model.

Throughout the paper we use the characteristic function $\theta$ defined by 
$\theta(\text{true}) = 1, \theta(\text{false}) = 0$.

\section{$n$-TAZRP}\label{sec:taz}

\subsection{Definition of $n$-TAZRP}
Consider a periodic one-dimensional chain $\Z_L$ with $L$ sites.
Each site $i \in \Z_L$ is assigned with a local state 
$\sigma_i=(\sigma_i^1,\ldots, \sigma^n_i)\in (\Z_{\ge 0})^n$ which is 
interpreted as an assembly of $n$ species of particles as
\begin{equation}\label{cie}
\begin{picture}(260,25)(-80,0)

\put(-1,5){$\overbrace{1 \ldots 1}^{\sigma^1_i} \,
\overbrace{2 \ldots 2}^{\sigma^2_i} \,
\ldots \,
\overbrace{n \ldots n}^{\sigma^n_i} $}
\put(7,0){\put(-16,1){\line(1,0){115}}\put(-16,1){\line(0,1){12}}
\put(99,1){\line(0,1){12}}}

\end{picture}
\end{equation}
The ordering of particles within a site does not matter.
A local state $\alpha$ is specified uniquely either by 
{\em multiplicity representation}
$\alpha=(\alpha^1,\ldots, \alpha^n) \in (\Z_{\ge 0})^n$ as above or 
{\em multiset representation}
$\alpha=(\alpha_1,\ldots, \alpha_r) \in [1,n]^r$ 
with $1 \le \alpha_1 \le \cdots \le \alpha_r\le n$.
They are related by
$\alpha^a = \#\{j \in [1,r]\mid \alpha_j = a\}$ and 
$r = |\alpha|:=\alpha^1 + \cdots + \alpha^n$.

Let $({\alpha}, {\beta})$ 
and $({\gamma},{\delta})$ be pairs of local states.
Let $(\beta_1, \ldots, \beta_r)$ 
be the multiset representation of $\beta$, hence 
$1 \le \beta_1 \le \cdots \le \beta_r \le n$.
For the two pairs we define $>$ by
\begin{align}\label{kyk}
({\alpha}, {\beta}) > ({\gamma},{\delta})
\overset{\text{def}}{\Longleftrightarrow}
\gamma = \alpha \cup \{\beta_1,\ldots, \beta_k\},\;
\delta=(\beta_{k+1},\ldots, \beta_r)\;\;
\text{for some}\; k \in [1,r],
\end{align}
where $\alpha \cup \{\beta_1,\ldots, \beta_k\}$ is a union as a multiset.
For instance in multiset representation we have\footnote{
Here and in what follows, 
a multiset (set accounting for multiplicity of elements), say  $\{1,1,3,5,6\}$,
is abbreviated to $11356$, which does not cause a confusion since
all the examples in this paper shall be concerned with the case $n\le 9$.}
\begin{equation}\label{ykw}
\begin{split}
(1356,114)&> (11356,14),  (111356,4), (1113456,\emptyset),\\
(235,12446) &> (1235,2446), (12235,446), (122345,46), (1223445,6), (12234456, \emptyset),\\
(\emptyset,225) & > (2,25), (22,5), (225,\emptyset),\\
(344,\emptyset) &> \text{none}.
\end{split}
\end{equation} 
We let $({\alpha}, {\beta}) \ge ({\gamma},{\delta})$ 
mean $({\alpha}, {\beta}) > ({\gamma},{\delta})$ 
or $({\alpha}, {\beta}) = ({\gamma},{\delta})$.

By $n$-TAZRP
we mean a stochastic process on $\Z_L$ 
in which neighboring pairs of local states
$(\sigma_i, \sigma_{i+1})=(\alpha, \beta)$ 
change into $(\gamma, \delta)$ such that 
$({\alpha}, {\beta}) > ({\gamma},{\delta})$
with a uniform transition rate.
For example the first line in (\ref{ykw}) implies that 
the following local transitions take place with an equal rate:
\begin{align*}
\begin{picture}(500,105)(-130,-77)
\put(10,3.5){1356} \put(53,3.5){114}
\put(97,0){\put(10,3.5){11356} \put(58,3.5){14}}
\put(38,25){1}
\put(20,22){\vector(0,-1){8}}\put(20,22){\line(1,0){40}}
\put(60,22){\line(0,-1){7}}
\put(0,0){\line(1,0){80}}
\put(0,0){\line(0,1){10}}
\put(40,0){\line(0,1){10}}
\put(80,0){\line(0,1){10}}
\put(85, 6){\vector(1,0){10}}
\put(100,0){
\put(0,0){\line(1,0){80}}
\put(0,0){\line(0,1){10}}
\put(40,0){\line(0,1){10}}
\put(80,0){\line(0,1){10}}}
%
\put(0,-40){
\put(10,3.5){1356} \put(53,3.5){114}
\put(97,0){\put(8,3.5){111356} \put(58,3.5){4}}
\put(36,25){11}
\put(20,22){\vector(0,-1){8}}\put(20,22){\line(1,0){40}}
\put(60,22){\line(0,-1){7}}
\put(0,0){\line(1,0){80}}
\put(0,0){\line(0,1){10}}
\put(40,0){\line(0,1){10}}
\put(80,0){\line(0,1){10}}
\put(85, 6){\vector(1,0){10}}
\put(100,0){
\put(0,0){\line(1,0){80}}
\put(0,0){\line(0,1){10}}
\put(40,0){\line(0,1){10}}
\put(80,0){\line(0,1){10}}}
}
\put(0,-80){
\put(10,3.5){1356} \put(53,3.5){114}
\put(97,0){\put(6,3.5){1113456} }
\put(34,25){114}
\put(20,22){\vector(0,-1){8}}\put(20,22){\line(1,0){40}}
\put(60,22){\line(0,-1){7}}
\put(0,0){\line(1,0){80}}
\put(0,0){\line(0,1){10}}
\put(40,0){\line(0,1){10}}
\put(80,0){\line(0,1){10}}
\put(85, 6){\vector(1,0){10}}
\put(100,0){
\put(0,0){\line(1,0){80}}
\put(0,0){\line(0,1){10}}
\put(40,0){\line(0,1){10}}
\put(80,0){\line(0,1){10}}}
}
\end{picture}
\end{align*}
In general we let $\tau^k_i: (\sigma_1,\ldots, \sigma_L) \mapsto 
(\sigma'_1,\ldots, \sigma'_L)$ denote the transition (\ref{kyk}) in which smaller species 
$k$ particles move from the 
$(i+1)$-th site to the $i$-th site with no change elsewhere:
\begin{equation}\label{kgc2}
\begin{picture}(120,50)(7,-13)

\put(-140,5){$\tau^k_i :$}


\put(-85,5){${\alpha}_1\, \ldots \,{\alpha}_s$}
\put(-13,5){${\beta}_1 \ldots \ldots .\,. \, {\beta}_r$}
\put(-110,0){

\put(65,26){${\beta}_1 \ldots {\beta}_k$}
\put(41.5,22){\vector(0,-1){9}}\put(41.5,22){\line(1,0){83}}
\put(124.5,22){\line(0,-1){8}}

\put(0,0){\line(1,0){166}}\put(0,0){\line(0,1){12}}
\put(83,0){\line(0,1){12}}\put(166,0){\line(0,1){12}}

\put(38,-12){$\sigma_i$}\put(115,-11){$\sigma_{i+1}$}

}

\put(63,6){\vector(1,0){15}}

\put(-10,0){
\put(100,5){${\alpha}_1 \ldots {\alpha}_s \,
{\beta}_1 \ldots {\beta}_k$}
\put(195,5){${\beta}_{k+1} \,\ldots \,{\beta}_r$}
\put(95,0){\put(0,0){\line(1,0){166}}\put(0,0){\line(0,1){12}}
\put(83,0){\line(0,1){12}}\put(166,0){\line(0,1){12}}
\put(38,-12){$\sigma'_i$}\put(115,-11){$\sigma'_{i+1}$}
}}
\end{picture}
\end{equation}
where $1 \le {\beta}_1 \le \cdots \le 
{\beta}_r \le n, \, k \in [1,r], i \in \Z_L$, and 
$\sigma'_j = \sigma_j$ for $j\neq i, i+1$.
For a later convenience 
we extend $\tau^k_i$ to all $k \in \Z_{\ge 1}$
by setting $\tau^k_i = 1$ for $k>r$ which means to move no particle.

This dynamics is {\em totally asymmetric} in that 
particles can hop only to the left adjacent site.
Their interaction is of {\em zero range} in that 
the hopping priority for smaller species particles is respected 
only among those occupying the same site.
There is no constraint on the status of the destination site 
nor number of particles that hop at a transition.  
A pair $(\alpha, \beta)$ of adjacent local states has $|\beta|$ 
possibilities to change into.

The $n$-TAZRP dynamics obviously preserves 
the number of particles of each species.
Thus we introduce {\em sectors} labeled with 
{\em multiplicity} ${\bf m}=(m_1,\ldots, m_n) \in (\Z_{\ge 0})^n$ of 
the species of particles:
\begin{align}\label{kgc}
S({\bf m}) = 
\{{\boldsymbol \sigma}=(\sigma_1,\ldots, \sigma_L)\mid
\sigma_i = (\sigma^{1}_i, \ldots, \sigma^{n}_i)  \in (\Z_{\ge 0})^n,\;
\sum_{i=1}^L \sigma^{a}_i=m_a,\forall a \in [1,n]\}.
\end{align}
A configuration will also be written as 
${\boldsymbol \sigma}=(\sigma^{a}_i)$.
A sector $S({\bf m})$ such that $m_a \ge 1$ for all $a \in [1,n]$
is called {\em basic}.
Non-basic sectors are equivalent to a basic sector for $n'$-TAZRP with some 
$n'<n$ by a suitable relabeling of species.
Thus we shall exclusively deal with basic sectors in this paper.

A local state $\sigma_i$ in (\ref{kgc}) 
can take $N=\prod_{a=1}^n(m_a+1)$ possibilities in view of (\ref{cie}).
Let $\{|{\boldsymbol \sigma} \rangle = |\sigma_1,\ldots, \sigma_L\rangle\}$ 
be a basis of
$(\C^N)^{\otimes L}$.
Denoting by ${\mathbb P}(\sigma_1,\ldots, \sigma_L; t)$ the probability of finding 
the system in the configuration 
${\boldsymbol \sigma}=(\sigma_1,\ldots, \sigma_L)$ at time $t$,  we set 
\begin{align*}
|P(t)\rangle
= \sum_{{\boldsymbol \sigma} \in S({\bf m})}
{\mathbb P}(\sigma_1,\ldots, \sigma_L; t)|\sigma_1,\ldots, \sigma_L\rangle.
\end{align*}
This actually belongs to a
subspace of $(\C^N)^{\otimes L}$ of dimension 
$\# S({\bf m}) = \prod_{a=1}^n\binom{L+m_a-1}{m_a}$ which is in general much 
smaller than $N^L$ reflecting the constraint in (\ref{kgc}).

Our $n$-TAZRP is a stochastic system 
governed by the continuous-time master equation
\begin{align*}
\frac{d}{dt}|P(t)\rangle
= H_{\mathrm{TAZRP}} |P(t)\rangle,
\end{align*}
where the Markov matrix has the form
\begin{align}\label{hrk2}
H_{\mathrm{TAZRP}}  = \sum_{i \in \Z_L} h_{i,i+1},\qquad
h |\alpha, \beta\rangle =\sum_{\gamma,\delta}h^{\gamma,\delta}_{\alpha,\beta}
|\gamma,\delta\rangle.
\end{align}
Here $h_{i,i+1}$ is the local Markov matrix that  
acts as $h$ on the $i$-th and the $(i+1)$-th components nontrivially and 
as the identity elsewhere.
If the transition rate of the adjacent pair of local states
$(\alpha, \beta) \rightarrow (\gamma,\delta)$ is denoted by 
$w(\alpha\beta \rightarrow \gamma\delta)$,
the matrix element of the Markov matrix is given by 
$h^{\gamma,\delta}_{\alpha,\beta}
= w(\alpha\beta \rightarrow \gamma\delta)
-\theta\bigl((\alpha,\beta)=(\gamma,\delta)\bigr)
\sum_{\gamma',\delta'}w(\alpha\beta \rightarrow \gamma'\delta')$.
Our $n$-TAZRP corresponds to the choice $w(\alpha\beta \rightarrow \gamma\delta) = 
\theta\bigl((\alpha,\beta)>(\gamma,\delta)\bigr)$, therefore the general formula,
which is independent of $w(\alpha\beta \rightarrow \alpha\beta)$, gives
\begin{align*}
h^{\gamma,\delta}_{\alpha,\beta}=
\begin{cases}
1 & \text{if }\;(\alpha, \beta) > (\gamma,\delta),\\
-|\beta| & \text{if } \;(\alpha, \beta)  = (\gamma,\delta),\\
0 & \text{otherwise}.
\end{cases}
\end{align*}
The Markov matrix (\ref{hrk2}) is expressed as 
\begin{equation}\label{ymi:s}
H_{\mathrm{TAZRP}}  = \sum_{i \in \Z_L}\sum_{k\ge 1}(\tau^k_i-1),
\end{equation}
which is actually a finite sum due to the convention 
explained after (\ref{kgc2}).

\subsection{Steady state}
As time goes on, the distribution of the particles converges to the state
that we consider from now on.
Given a system size $L$ and a sector $S({\bf m})$
there is a unique vector 
\begin{align}\label{ymi:cu}
|\bar{P}_L({\bf m})\rangle = \sum_{{\boldsymbol \sigma} \in S({\bf m})}
\mathbb{P}({\boldsymbol \sigma} ) |{\boldsymbol \sigma} \rangle
\end{align} 
up to a normalization, called the {\em steady state}, 
which satisfies $H_{\mathrm{TAZRP}}  |\bar{P}_L({\bf m})\rangle=0$ hence is 
time-independent.
In what follows we will always take ${\mathbb P}({\boldsymbol \sigma})$ 
so that 
\begin{align*}
\sum_{{\boldsymbol \sigma} \in S({\bf m})}
{\mathbb P}({\boldsymbol \sigma}) = \# B({\bf m})
\end{align*}
holds, where $\# B({\bf m})$ is given by (\ref{mkr:arkr}) and (\ref{mkr:akci}).
The unnormalized 
${\mathbb P}({\boldsymbol \sigma})$ will be called the steady state probability
by abusing the terminology. 
The properly normalized one is equal to 
${\mathbb P}_\text{normalized}({\boldsymbol \sigma})=
{\mathbb P}({\boldsymbol \sigma})/(\# B({\bf m}))$.
This convention is convenient for our working below in that 
${\mathbb P}({\boldsymbol \sigma}) \in \Z_{\ge 1}$ holds
as we will see in Theorem \ref{th:szk} and (\ref{szk:i}).

The steady state for $1$-TAZRP is trivial 
under the present periodic boundary condition in that  
all the configurations are realized with an equal probability.

\begin{example}\label{ex:LL}
We present the steady state in small sectors of
2-TAZRP and 3-TAZRP in the form
\begin{align*}
|\bar{P}_L({\bf m})\rangle = |\xi_L({\bf m})\rangle
+ C|\xi_L({\bf m})\rangle + \cdots + 
C^{L-1} |\xi_L({\bf m})\rangle
\end{align*}
respecting the symmetry 
$H_{\mathrm{TAZRP}} C=CH_{\mathrm{TAZRP}} $ under the $\Z_L$ cyclic shift
 $C: |\sigma_1, \sigma_2,\ldots, \sigma_L\rangle \mapsto 
 |\sigma_L, \sigma_1, \ldots, \sigma_{L-1}\rangle$.
The choice of the vector $|\xi_L({\bf m})\rangle$ is not unique.
We employ multiset representation like 
$|\emptyset, 3, 122\rangle$, which would have looked as 
$|000,001,120\rangle$ in the multiplicity representation for 3-TAZRP.

For the 2-TAZRP  one has
\begin{align*}
|\xi_2(1,1)\rangle&= 2|\emptyset, 12\rangle + | 1, 2\rangle,\\
|\xi_3(1,1)\rangle& = 3|\emptyset, \emptyset, 12\rangle +
2|\emptyset, 1, 2\rangle +
|\emptyset, 2, 1\rangle,\\
|\xi_4(1,1)\rangle& = 4|\emptyset, \emptyset, \emptyset, 12\rangle +
3|\emptyset, \emptyset, 1,2 \rangle +
2|\emptyset, 1,\emptyset, 2 \rangle +
|\emptyset, \emptyset, 2,1 \rangle,\\
|\xi_2(2,1)\rangle& =
2|\emptyset, 112\rangle + |1,12\rangle + |2, 11\rangle,\\
|\xi_3(2,1)\rangle &=
3|\emptyset, \emptyset, 112\rangle +
2|\emptyset, 1,12\rangle +
|\emptyset, 2, 11\rangle +
2|\emptyset, 11, 2\rangle +
|\emptyset, 12,1\rangle +
|1,1,2\rangle,\\
|\xi_4(2,1)\rangle &=
2|\emptyset, 1, 1, 2\rangle 
+ |\emptyset, 1, 2, 1\rangle + 
 2 |\emptyset, 1, \emptyset, 12\rangle + |\emptyset, 2, 1, 1\rangle + 
 2 |\emptyset, 2, \emptyset, 11\rangle + 
 3 |\emptyset, \emptyset, 1, 12\rangle \\
&+ 
 |\emptyset, \emptyset, 2, 11\rangle + 
 3 |\emptyset, \emptyset, 11, 2\rangle + 
 |\emptyset, \emptyset, 12, 1\rangle + 
 4 |\emptyset, \emptyset, \emptyset, 112\rangle,\\
|\xi_2(1,2)\rangle &=
3|\emptyset,122\rangle + |1,22\rangle + 2 |2,12\rangle,\\
|\xi_3(1,2)\rangle &=
6|\emptyset, \emptyset, 122\rangle +
3|\emptyset, 1, 22\rangle +
3|\emptyset, 2, 12\rangle +
5|\emptyset, 12, 2\rangle +
|\emptyset, 22, 1\rangle +
2|1,2,2\rangle,\\
|\xi_4(1,2)\rangle &=
5 |\emptyset, 1, 2, 2\rangle + 3 |\emptyset, 1, \emptyset, 22\rangle + 
 3 |\emptyset, 2, 1, 2\rangle + 2 |\emptyset, 2, 2, 1\rangle + 
 7 |\emptyset, 2, \emptyset, 12\rangle + 
 6 |\emptyset, \emptyset, 1, 22\rangle \\
&+ 
 4 |\emptyset, \emptyset, 2, 12\rangle + 
 9 |\emptyset, \emptyset, 12, 2\rangle + 
 |\emptyset, \emptyset, 22, 1\rangle + 
 10 |\emptyset, \emptyset, \emptyset, 122\rangle,\\
|\xi_2(3,1)\rangle&=
2 |\emptyset, 1112\rangle + |{1}, 112\rangle + |{2}, 111\rangle + 
 |11, 12\rangle,\\
|\xi_3(3,1)\rangle&=
3 |\emptyset, \emptyset, 1112\rangle + 2 |\emptyset, {1}, 112\rangle + 
 |\emptyset, {2}, 111\rangle + 2 |\emptyset, 11, 12\rangle + 
 |\emptyset, 12, 11\rangle \\
&+ 2 |\emptyset, 111, {2}\rangle + 
 |\emptyset, 112, {1}\rangle + |{1}, {1}, 12\rangle + 
 |{1}, {2}, 11\rangle + |{1}, 11, {2}\rangle,\\
|\xi_2(2,2)\rangle&=
3|\emptyset,1122\rangle + |1 , 122\rangle + 2|2,112\rangle + |11,22\rangle
+\textstyle\frac{1}{2}|12,12\rangle, \\
|\xi_3(2,2)\rangle&=
|1, 1, 22\rangle + |1, 2, 12\rangle + 2 |1, 12, 2\rangle + 2 |2, 2, 11\rangle + 
 3 |\emptyset, 1, 122\rangle + 3 |\emptyset, 2, 112\rangle \\
 &+ 
 3 |\emptyset, 11, 22\rangle + 2 |\emptyset, 12, 12\rangle + 
 |\emptyset, 22, 11\rangle + 5 |\emptyset, 112, 2\rangle + 
 |\emptyset, 122, 1\rangle + 6 |\emptyset, \emptyset, 1122\rangle,\\
|\xi_2(1,3)\rangle&=
|1, 222\rangle + 3 |2, 122\rangle + 2 |12, 22\rangle + 4 |\emptyset, 1222\rangle,\\
|\xi_3(1,3)\rangle&=
2 |1, 2, 22\rangle + 3 |1, 22, 2\rangle + 5 |2, 2, 12\rangle + 
 4 |\emptyset, 1, 222\rangle + 6 |\emptyset, 2, 122\rangle + 
 7 |\emptyset, 12, 22\rangle \\
 &+ 3 |\emptyset, 22, 12\rangle + 
 9 |\emptyset, 122, 2\rangle + |\emptyset, 222, 1\rangle + 
 10 |\emptyset, \emptyset, 1222\rangle.
\end{align*}
Although $|\xi_2(2,2)\rangle$ contains 
$\textstyle\frac{1}{2}|12,12\rangle$,
the coefficients in
$|\bar{P}_2(2,2)\rangle$ are integers  
as this configuration is an order 2 fixed point of $C$. 

For the 3-TAZRP one has
\begin{align*}
|\xi_2(1,1,1)\rangle&=
2 |1, 23\rangle + |2, 13\rangle + 3 |3, 12\rangle + 6 |\emptyset, 123\rangle
,\\
|\xi_3(1,1,1)\rangle&=
5 |1, 2, 3\rangle + |1, 3, 2\rangle + 9 |\emptyset, 1, 23\rangle + 
 3 |\emptyset, 2, 13\rangle + 6 |\emptyset, 3, 12\rangle + 
 12 |\emptyset, 12, 3\rangle \\
 &+ 3 |\emptyset, 13, 2\rangle + 
 3 |\emptyset, 23, 1\rangle + 18 |\emptyset, \emptyset, 123\rangle,\\
|\xi_4(1,1,1)\rangle&=
17 |\emptyset, 1, 2, 3\rangle + 3 |\emptyset, 1, 3, 2\rangle + 
 12 |\emptyset, 1, \emptyset, 23\rangle + 3 |\emptyset, 2, 1, 3\rangle + 
 7 |\emptyset, 2, 3, 1\rangle + 8 |\emptyset, 2, \emptyset, 13\rangle \\
&+ 
 9 |\emptyset, 3, 1, 2\rangle + |\emptyset, 3, 2, 1\rangle + 
 20 |\emptyset, 3, \emptyset, 12\rangle + 
 24 |\emptyset, \emptyset, 1, 23\rangle + 
 6 |\emptyset, \emptyset, 2, 13\rangle + 
 10 |\emptyset, \emptyset, 3, 12\rangle \\
&+ 
 30 |\emptyset, \emptyset, 12, 3\rangle + 
 6 |\emptyset, \emptyset, 13, 2\rangle + 
 4 |\emptyset, \emptyset, 23, 1\rangle + 
 40 |\emptyset, \emptyset, \emptyset, 123\rangle,\\
|\xi_2(2,1,1)\rangle&=
2 |1, 123\rangle + |2, 113\rangle + 3 |3, 112\rangle + 2 |11, 23\rangle + 
 |12, 13\rangle + 6 |\emptyset, 1123\rangle
,\\
|\xi_3(2,1,1)\rangle&=
3 |1, 1, 23\rangle + 2 |1, 2, 13\rangle + |1, 3, 12\rangle + 5 |1, 12, 3\rangle + 
 |1, 13, 2\rangle + 5 |2, 3, 11\rangle + |2, 11, 3\rangle \\
 &+ 
 9 |\emptyset, 1, 123\rangle + 3 |\emptyset, 2, 113\rangle + 
 6 |\emptyset, 3, 112\rangle + 9 |\emptyset, 11, 23\rangle + 
 3 |\emptyset, 12, 13\rangle + 3 |\emptyset, 13, 12\rangle \\
 &+ 
 3 |\emptyset, 23, 11\rangle + 12 |\emptyset, 112, 3\rangle + 
 3 |\emptyset, 113, 2\rangle + 3 |\emptyset, 123, 1\rangle + 
 18 |\emptyset, \emptyset, 1123\rangle
,\\
|\xi_2(1,2,1)\rangle&=
2 |1, 223\rangle + 2 |2, 123\rangle + 4 |3, 122\rangle + 3 |12, 23\rangle + 
 |13, 22\rangle + 8 |\emptyset, 1223\rangle
,\\
|\xi_3(1,2,1)\rangle&=
5 |1, 2, 23\rangle + |1, 3, 22\rangle + 7 |1, 22, 3\rangle + 2 |1, 23, 2\rangle + 
 3 |2, 2, 13\rangle + 9 |2, 3, 12\rangle + 3 |2, 12, 3\rangle \\
 &+ 
 12 |\emptyset, 1, 223\rangle + 8 |\emptyset, 2, 123\rangle + 
 10 |\emptyset, 3, 122\rangle + 17 |\emptyset, 12, 23\rangle + 
 4 |\emptyset, 13, 22\rangle + 3 |\emptyset, 22, 13\rangle \\
 &+ 
 6 |\emptyset, 23, 12\rangle + 20 |\emptyset, 122, 3\rangle + 
 7 |\emptyset, 123, 2\rangle + 3 |\emptyset, 223, 1\rangle + 
 30 |\emptyset, \emptyset, 1223\rangle
,\\
|\xi_2(1,1,2)\rangle&=
3 |1, 233\rangle + |2, 133\rangle + 8 |3, 123\rangle + 4 |12, 33\rangle + 
 2 |13, 23\rangle + 12 |\emptyset, 1233\rangle
,\\
|\xi_3(1,1,2)\rangle&=
9 |1, 2, 33\rangle + 3 |1, 3, 23\rangle + 17 |1, 23, 3\rangle + 
 |1, 33, 2\rangle + 7 |2, 3, 13\rangle + 3 |2, 13, 3\rangle + 
 20 |3, 3, 12\rangle \\
 &+ 24 |\emptyset, 1, 233\rangle + 
 6 |\emptyset, 2, 133\rangle + 30 |\emptyset, 3, 123\rangle + 
 30 |\emptyset, 12, 33\rangle + 12 |\emptyset, 13, 23\rangle + 
 8 |\emptyset, 23, 13\rangle \\
 &+ 10 |\emptyset, 33, 12\rangle + 
 50 |\emptyset, 123, 3\rangle + 4 |\emptyset, 133, 2\rangle + 
 6 |\emptyset, 233, 1\rangle + 60 |\emptyset, \emptyset, 1233\rangle.
\end{align*}
As these coefficients indicate, steady states are highly nontrivial for 
$n$-TAZRP with $n\ge 2$.
The main issue in the subsequent sections is to characterize them
in terms of the $n$-line process and the combinatorial $R$.
Note that the maximally localized configuration like 
$60 |\emptyset, \emptyset, 1233\rangle$  
just above and their cyclic permutations have the largest probability.
It is a symptom of {\em condensation} generally expected 
in the zero-range processes \cite{EH}.
See (\ref{kyk:sn}) for the result on the general case. 
\end{example}

\section{$n$-line process}\label{sec:mp}

\subsection{$n$-line states}\label{ss:nms}

Fix the multiplicity array ${\bf m}$ and 
$\ell_1, \ldots, \ell_n$ as
\begin{align}\label{mkr:akci}
{\bf m}=(m_1,\ldots, m_n) \in (\Z_{\ge 1})^n,\qquad
\ell_a = m_a+m_{a+1}+ \cdots + m_n\quad (1 \le a \le n)
\end{align}
so that $\ell_1> \ell_2 > \cdots > \ell_n\ge 1$. 
Associated with these data we introduce the finite sets
\begin{equation}\label{mkr}
\begin{split}
B({\bf m}) &= B_{\ell_1} \otimes \cdots \otimes B_{\ell_n},\\
B_\ell &=\{(x_1,\ldots, x_L)\in (\Z_{\ge 0})^L\mid
x_1+ \cdots + x_L = \ell\}\quad (\ell \in \Z_{\ge 1}),
\end{split}
\end{equation} 
where $\otimes$ may just be regarded as the product of sets.
By the definition we have 
\begin{align}\label{mkr:arkr}
\# B({\bf m}) = \prod_{a=1}^n \binom{L-1+\ell_a}{\ell_a}.
\end{align}
The sets $B_\ell$ and $B({\bf m})$ will be called {\em crystals}
bearing in mind, though not utilized significantly in this paper, 
that they can be endowed with the structure of the 
$U_q(\widehat{sl}_L)$-{\em crystals} 
of the $\ell$-fold symmetric tensor representation 
and their tensor product, respectively \cite{Ka1,HK}.

Our $n$-line process is a stochastic dynamical system on $B({\bf m})$.
Its elements will be called {\em $n$-line states} and 
denoted by
\begin{align}\label{mkr:gk}
{\bf x} = {\bf x}^1 \otimes \cdots \otimes {\bf x}^n \in B({\bf m}),\quad
{\bf x}^a = (x^a_1,\ldots, x^a_L) \in B_{\ell_a}.
\end{align}
We will represent it as an array ${\bf x}=({\bf x}^a)$ or as the $n \times L$ tableau:
\begin{align*}
{\bf x}=(x^a_i) = 
\begin{pmatrix}
x^n_1 & \cdots & x^n_L\\
\vdots & \ddots & \vdots\\
x^1_1 & \cdots & x^1_L
\end{pmatrix},
\end{align*}
which is conveniently depicted by a dot diagram.  
For instance when $(n,L)=(4,3)$, it looks as
\begin{equation}\label{hrk}
\begin{picture}(200,75)(-40,-6)
\put(-100,25){
${\bf x}=\begin{pmatrix}
2 & 0 & 1\\
1 & 2 & 2\\
1 & 1 & 4\\
3 & 2 & 3
\end{pmatrix}= $}
\multiput(0,0)(0,15){5}{\put(0,0){\line(1,0){45}}}
\multiput(0,0)(15,0){4}{\put(0,0){\line(0,1){60}}}

\put(2,50){$\bullet$}\put(8,50){$\bullet$}
\put(35,50){$\bullet$}

\put(5,35){$\bullet$}
\put(17,35){$\bullet$}\put(23,35){$\bullet$}
\put(32,35){$\bullet$}\put(38,35){$\bullet$}

\put(5,20){$\bullet$}
\put(20,20){$\bullet$}
\put(32,23){$\bullet$}\put(38,23){$\bullet$}
\put(32,17){$\bullet$}\put(38,17){$\bullet$}

\put(2,2){$\bullet$}\put(5,8){$\bullet$}\put(9,2){$\bullet$}
\put(17,5){$\bullet$}\put(23,5){$\bullet$}
\put(32,2){$\bullet$}\put(35,8){$\bullet$}\put(39,2){$\bullet$}

\put(50,25) {
$\in B_8 \otimes B_6 \otimes B_5 \otimes B_3 = B(2,1,2,3)$.}

\end{picture}
\end{equation}
The dynamics of our $n$-line process will be 
described as a motion of dots.
There is another multiline process relevant to the $n$-TASEP \cite{FM} 
which was reformulated in terms of crystals in \cite{KMO}.
The basic set there is 
$B^\ell = \{(x_1,\ldots, x_L) \in \{0,1\}^L\mid x_1+\cdots + x_L = \ell\}$
corresponding to the antisymmetric tensor representation
instead of the $B_\ell$ in (\ref{mkr}).

\subsection{Auxiliary sets ${\mathcal A}$ and ${\mathcal B}$}
\label{ss:ab}
 
Let $[a,b]$ denote the set $\{a,a+1,\ldots, b\}$ for the integers 
$a\le b$.
We set $(x)_+ = \max(x,0)$ so that $(x)_+- (-x)_+ = x$ holds.

Given an $n$-line state ${\bf x} \in B({\bf m})$
we attach to it the sets 
${\mathcal A}_{\bf x}$ and ${\mathcal B}_{\bf x}$ defined by
\begin{align}
{\mathcal A}_{\bf x}&=
\{(i,a,k) \mid 
(i,a) \in \Z_L \times [1,n], 
1 \le k \le (x^a_{i+1}-x^{a-1}_i)_+\}, \label{ykn1}\\
{\mathcal B}_{\bf x}&=
\{(i,a,k) \mid 
(i,a) \in \Z_L \times [1,n], 
1 \le k \le (x^a_{i}-x^{a+1}_{i+1})_+\}, \label{ykn2}
\end{align}
where the convention $x^0_i=x^{n+1}_i=0$ for all $i \in \Z_L$ 
should be applied throughout. 
\begin{example}\label{ex:yna}
For ${\bf x}$ in (\ref{hrk}) they read
\begin{align*}
{\mathcal A}_{\bf x} =\{
&(1,1,1), (1,1,2), (1,3,1), (2,1,1), (2,1,2), (2,1,3),\\
&(2,2,1), (2,2,2), (2,3,1), (3,1,1), (3,1,2),(3,1,3)\},\\
{\mathcal B}_{\bf x} =\{
&(1,1,1), (1,1,2), (1,3,1),(1,4,1),(1,4,2),(2,3,1),\\
&(3,1,1), (3,1,2), (3,2,1), (3,2,2), (3,2,3), (3,4,1)\}.
\end{align*}
\end{example}
\begin{example}\label{ex:tsk}
Take 
\begin{align*}
{\bf x} = \begin{pmatrix}
0 & 0 & 1\\
2 & 1 & 0\\
2 & 0 & 2\\
1 & 1 & 4
\end{pmatrix} \in B_6 \otimes B_4 \otimes B_3 \otimes B_1 = B(2,1,2,1)
\end{align*}
with $(n,L)=(4,3)$. Then we have
\begin{align*}
{\mathcal A}_{\bf x} &= \{
(1, 1, 1), (2, 1, 1), (2, 1, 2), (2, 1, 3), (2, 1, 4), (2, 2, 1), (3, 1, 1)\},\\
{\mathcal B}_{\bf x}&= \{
(1, 1, 1), (1, 2, 1), (1, 3, 1), (1, 3, 2), (3, 1, 1), (3, 1, 2), (3, 4, 1)\}.
\end{align*}
\end{example}
The coincidence of the cardinality of 
${\mathcal A}_{\bf x}$ and ${\mathcal B}_{\bf x}$ 
in these example is not accidental.

\begin{lemma}\label{le:air}
For any ${\bf x} \in B({\bf m})$, 
$\#{\mathcal A}_{\bf x} = \#{\mathcal B}_{\bf x}$ holds.
\end{lemma}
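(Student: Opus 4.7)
The plan is to turn the claim into a short telescoping identity. By definition
\begin{align*}
\#\mathcal{A}_{\bf x} = \sum_{i\in\Z_L}\sum_{a=1}^n (x^a_{i+1}-x^{a-1}_i)_+, \qquad
\#\mathcal{B}_{\bf x} = \sum_{i\in\Z_L}\sum_{a=1}^n (x^a_i-x^{a+1}_{i+1})_+.
\end{align*}
My first step is to rewrite $\#\mathcal{B}_{\bf x}$ so that its summands share the same shape as those in $\#\mathcal{A}_{\bf x}$. Shifting the $a$-index upward by one, one obtains $\#\mathcal{B}_{\bf x}=\sum_{i}\sum_{a=2}^{n+1}(x^{a-1}_i-x^a_{i+1})_+$. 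The boundary convention $x^0_\cdot=x^{n+1}_\cdot=0$ makes the $a=1$ summand of this expression, as well as the $a=n+1$ summand one could append to $\#\mathcal{A}_{\bf x}$, both zero. Thus both cardinalities can be written over the common index set $(i,a)\in\Z_L\times[1,n+1]$, with summands $(x^a_{i+1}-x^{a-1}_i)_+$ and $(x^{a-1}_i-x^a_{i+1})_+$ respectively.

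Next I would subtract the two and apply the elementary identity $(y)_+-(-y)_+=y$, which collapses everything to
\begin{align*}
\#\mathcal{A}_{\bf x}-\#\mathcal{B}_{\bf x}=\sum_{i\in\Z_L}\sum_{a=1}^{n+1}\bigl(x^a_{i+1}-x^{a-1}_i\bigr).
\end{align*}
Because $i$ runs over $\Z_L$, the inner sum $\sum_i x^a_{i+1}$ coincides with $\sum_i x^a_i$, which equals $\ell_a$ for $a\in[1,n]$ and $0$ for $a\in\{0,n+1\}$. Setting $\ell_0=\ell_{n+1}=0$, the $a$-sum telescopes to $\ell_{n+1}-\ell_0=0$, yielding $\#\mathcal{A}_{\bf x}=\#\mathcal{B}_{\bf x}$.

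The argument is essentially bookkeeping, so no step is genuinely difficult; the one place where care is needed is verifying that the boundary terms at $a=1$ and $a=n+1$ vanish in the correct sums when extending the two summation ranges to a common set, so that the re-indexing and the final telescoping are both legitimate. No property of $\bf x$ is used beyond the cyclicity in $i$ and the conservation $\sum_i x^a_i=\ell_a$, so the same identity would hold for any cyclically arranged array of nonnegative integers bordered by zero rows.
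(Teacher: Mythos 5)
Your proof is correct and follows essentially the same route as the paper's: extend both sums to a common index range using the boundary conventions $x^0_i=x^{n+1}_i=0$, apply $(y)_+-(-y)_+=y$ termwise, and telescope using cyclicity in $i$. The only cosmetic difference is that you sum over $i$ first (producing the $\ell_a$) before telescoping in $a$, whereas the paper telescopes the $a$-sum directly.
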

\begin{proof}
\begin{align*}
\#{\mathcal A}_{\bf x} -\#{\mathcal B}_{\bf x}&=
\sum_{i \in \Z_L}\sum_{a=1}^n(x^a_{i+1}-x^{a-1}_i)_+
-\sum_{i \in \Z_L}\sum_{a=1}^n(x^a_{i}-x^{a+1}_{i+1})_+\\
&=\sum_{i \in \Z_L}\sum_{a=0}^n
\Bigl( (x^{a+1}_{i+1}-x^{a}_{i})_+-
(x^a_{i}-x^{a+1}_{i+1})_+\Bigr)\\
&= \sum_{i \in \Z_L}\sum_{a=0}^n(x^{a+1}_{i+1}-x^{a}_{i})
=\sum_{i \in \Z_L}\sum_{a=0}^n(x^{a+1}_{i}-x^{a}_{i})
=\sum_{i \in \Z_L}(x^{n+1}_{i+1}-x^{0}_{i})=0.
\end{align*}
\end{proof}
We introduce the auxiliary sets 
${\mathcal A}$ and ${\mathcal B}$ by
\begin{align*}
{\mathcal A} &= \bigsqcup_{{\bf x} \in B({\bf m})} 
\{{\bf x}\}\times {\mathcal A}_{\bf x}
=\{\bigl({\bf x},(i,a,k)\bigr) \mid {\bf x} \in B({\bf m}), 
(i,a,k) \in {\mathcal A}_{\bf x}\},\\
{\mathcal B} &= \bigsqcup_{{\bf x} \in B({\bf m})} 
{\mathcal B}_{\bf x}\times \{{\bf x}\}
=\{\bigl((i,a,k),{\bf x}\bigr) \mid {\bf x} \in B({\bf m}), 
(i,a,k) \in {\mathcal B}_{\bf x}\}.
\end{align*}
The order of ${\bf x}$ and $(i,a,k)$ in the products
is a matter of convention but taking them differently as above
helps to distinguish ${\mathcal A}$ and ${\mathcal B}$.
By Lemma \ref{le:air} we know 
$\#{\mathcal A} = \#{\mathcal B}$.

\subsection{Bijection between ${\mathcal A}$ and ${\mathcal B}$}

Pick any $\bigl({\bf x},(i,a,k)\bigr) \in {\mathcal A}$.
It implies $1 \le k \le x^{a}_{i+1}-x^{a-1}_i$, hence
$x^{a-1}_i < x^{a}_{i+1}$ holds.
Suppose that
$x^{b-1}_i < x^{b}_{i+1}$ holds for $a \le b \le c$ until it
ceases to do so at $b=c+1$, namely $x^{c}_i \ge x^{c+1}_{i+1}$.
Due to the convention $x^{n+1}_i=0$, such $c \in [a,n]$ 
exists uniquely.
We define the map $T$ by
\begin{equation}\label{nzm}
\begin{split}
T: &\;{\mathcal A} \rightarrow {\mathcal B};
\;\;\bigl({\bf x},(i,a,k)\bigr) \mapsto \bigl((i,c,l),{\bf y}\bigr),\\
{\bf y}&= (y^{b}_j),\;\;
y^b_j = x^b_j\; \;\text{except}\;\;
(y^b_i,y^b_{i+1}) =  \begin{cases}
(x^b_i+x^{b}_{i+1} - x^{b-1}_{i}, x^{b-1}_i) & \text{if }  b \in [a+1,c],\\
(x^a_i+k, x^{a}_{i+1}-k) & \text{if }  b=a,
\end{cases}\\
l &= \begin{cases}k & \text{if } c=a,\\
x^{c}_{i+1}-x^{c-1}_{i} & \text{if }  c>a.
\end{cases}
\end{split}
\end{equation}
When $c=a$, the case $b \in [a+1,c]$ can just be omitted.
It is easy to see $T{\mathcal A} \subseteq {\mathcal B}$.
In fact from (\ref{ykn2}) and (\ref{nzm})
it suffices to check
$y^b_i \ge 0\; (b \in [a+1,c])$, $y^a_{i+1}\ge 0$ and 
$1 \le l \le y^c_i- y^{c+1}_{i+1}$. 
They all follow from the definition straightforwardly.

Similarly pick any $\bigl((i,a,k), {\bf x}\bigr) \in {\mathcal B}$.
It implies $1 \le k \le x^a_i - x^{a+1}_{i+1}$ hence 
$x^a_i > x^{a+1}_{i+1}$ holds.
Suppose that $x^b_i > x^{b+1}_{i+1}$ holds for 
$d \le b \le a$ until it ceases to do so at $b=d-1$, namely
$x^{d-1}_i \le x^d_{i+1}$.
Due to the convention $x^0_i = 0$, such $d \in [1, a]$ exists uniquely.
We define the map $S$ by
\begin{equation}\label{rna}
\begin{split}
S: &\;{\mathcal B} \rightarrow {\mathcal A};
\;\;\bigl((i,a,k), {\bf x}\bigr) \mapsto \bigl({\bf y}, (i,d,m)\bigr),\\
{\bf y}&= (y^{b}_j),\;\;
y^b_j = x^b_j\; \;\text{except}\;\;
(y^b_i,y^b_{i+1}) =  \begin{cases}
(x^a_i-k, x^{a}_{i+1}+k) & b=a,\\
(x^{b+1}_{i+1}, x^b_i+x^{b}_{i+1} - x^{b+1}_{i+1}) & b \in [d,a-1],
\end{cases}\\
m &= \begin{cases}
k & \text{if } d=a,\\
x^{d}_{i}- x^{d+1}_{i+1} & \text{if } d<a.
\end{cases}
\end{split}
\end{equation}
When $d=a$, the case $b \in [d,a-1]$ can just be omitted.
Again $S{\mathcal B} \subseteq {\mathcal A}$ is easily seen.

Focusing on the $i$-th and the $(i+1)$-th columns of the tableaux 
${\bf x}$ and ${\bf y}$, 
we set $(\alpha_b, \beta_b) = (x^b_i, x^b_{i+1})$ for simplicity.
Then their nontrivial changes in (\ref{nzm})  and (\ref{rna}) look as follows:
\begin{equation}\label{syr}
\begin{picture}(200,155)(80,-39)

\put(0,105){$\;\;x^b_i \;\;\; x^b_{i+1}$}
\multiput(-3,56)(0,18){2}{
\put(0,0){\line(1,0){52}}}
\put(24,51){\line(0,1){40}}

\put(0,80){$\alpha_{c+1}\;\;\,\beta_{c+1}$}
\put(0,62){$\,\;\alpha_{c}\;\;\;\;\;\;\beta_{c}$}

\multiput(-3,-5)(0,18){3}{
\put(0,0){\line(1,0){52}}}
\put(24,-23){\line(0,1){59}}\put(22.5,40){$\vdots$}

\put(0,20){$\alpha_{a+1}\;\;\beta_{a+1}$}
\put(0,1){$\,\;\alpha_{a}\;\;\;\;\;\,\beta_{a}$}
\put(0,-17){$\alpha_{a-1}\;\;\beta_{a-1}$}

\put(-3,-23){\line(0,1){115}}
\put(49,-23){\line(0,1){115}}

\put(18,73){\rotatebox{45}{\put(0,-4){$\ge$}}}
\put(18,52.5){\rotatebox{45}{\put(0,-4){$<$}}}
\put(18,28){\rotatebox{45}{\put(0,-4){$<$}}}
\put(18,10){\rotatebox{45}{\put(0,-4){$<$}}}
\put(18,-8){\rotatebox{45}{\put(0,-4){$<$}}}

\put(19,-37){(i)}

\put(65,37){$\overset{T}{\longmapsto}$}

\put(100,0){
\put(0,105){$\;\;y^b_i \;\;\; y^b_{i+1}$}
\multiput(-3,56)(0,18){2}{
\put(0,0){\line(1,0){52}}}
\put(24,51){\line(0,1){40}}

\put(0,80){$\alpha_{c+1}\,\;\;\beta_{c+1}$}
\put(1,62){$\,\;\,\,\ast\;\;\;\;\,\alpha_{c-1}$}

\multiput(-3,-5)(0,18){3}{
\put(0,0){\line(1,0){52}}}
\put(24,-23){\line(0,1){59}}\put(22.5,40){$\vdots$}

\put(0,20){$\;\;\;\ast\;\;\;\;\;\;\alpha_{a}$}
\put(0,1){$\;\;\;\ast\;\;\;\,\, \beta_a\!\!-\!k$}
\put(0,-17){$\alpha_{a-1}\;\;\beta_{a-1}$}

\put(-3,-23){\line(0,1){115}}
\put(49,-23){\line(0,1){115}}
\put(18,-37){(ii)}
}


\put(200,0){
\put(0,105){$\;\;x^b_i \;\;\; x^b_{i+1}$}

\multiput(-3,38)(0,18){3}{
\put(0,0){\line(1,0){52}}}
\put(24,33){\line(0,1){58}}

\put(0,80){$\alpha_{a+1}\;\;\beta_{a+1}$}
\put(0,62){$\,\;\alpha_{a}\;\;\;\;\;\;\beta_{a}$}
\put(0,44){$\alpha_{a-1}\;\;\beta_{a-1}$}

\multiput(-3,-5)(0,18){2}{
\put(0,0){\line(1,0){52}}}
\put(24,-23){\line(0,1){41}}\put(22.5,22){$\vdots$}

\put(0,1){$\,\;\alpha_{d}\;\;\;\;\;\,\beta_{d}$}
\put(0,-17){$\alpha_{d-1}\;\;\beta_{d-1}$}

\put(-3,-23){\line(0,1){115}}
\put(49,-23){\line(0,1){115}}

\put(20,73){\rotatebox{48}{\put(0,-4){$>$}}}
\put(20,55){\rotatebox{45}{\put(0,-4){$>$}}}
\put(20,37){\rotatebox{45}{\put(0,-4){$>$}}}
\put(20,12){\rotatebox{45}{\put(0,-4){$>$}}}
\put(19,-7){\rotatebox{45}{\put(0,-4){$\le$}}}

\put(17,-37){(iii)}

\put(65,37){$\overset{S}{\longmapsto}$}
}

\put(300,0){
\put(0,105){$\;\;y^b_i \;\;\; y^b_{i+1}$}

\multiput(-3,38)(0,18){3}{
\put(0,0){\line(1,0){52}}}
\put(24,33){\line(0,1){58}}

\put(0,80){$\alpha_{a+1}\;\;\beta_{a+1}$}
\put(-1,62){$\alpha_{a}\!\!-\!k\;\;\;\;\ast$}
\put(0,44){$\;\;\beta_a\;\;\;\;\;\;\,\ast$}

\multiput(-3,-5)(0,18){2}{
\put(0,0){\line(1,0){52}}}
\put(24,-23){\line(0,1){41}}\put(22.5,22){$\vdots$}

\put(0,1){$\,\beta_{d+1}\;\;\;\;\,\ast$}
\put(0,-17){$\alpha_{d-1}\;\;\beta_{d-1}$}

\put(-3,-23){\line(0,1){115}}
\put(49,-23){\line(0,1){115}}
\put(17,-37){(iv)}
}

\end{picture}
\end{equation}
Here $\ast$ signifies the integers so chosen that the 
sum is conserved in each row.
It implies that the dots in the diagram like (\ref{hrk}) 
always move horizontally to the left (resp. right) neighbor slot 
by $T$ (resp. $S$).

\begin{proposition}\label{pr:tgm}
$T$ and $S$ are bijections satisfying 
$TS = \mathrm{id}_{\mathcal B}$ and $ST = \mathrm{id}_{\mathcal A}$.
\end{proposition}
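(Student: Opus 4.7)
The proof proceeds by direct verification, relying on Lemma \ref{le:air} which gives $\#{\mathcal A} = \#{\mathcal B}$. Since both sets are finite of equal cardinality, it suffices to establish a single identity, say $ST = \mathrm{id}_{\mathcal A}$: then $T$ is injective, hence bijective between equinumerous sets, whence $S = T^{-1}$ and $TS = \mathrm{id}_{\mathcal B}$ follows automatically.

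My plan is to take an arbitrary $({\bf x}, (i,a,k)) \in {\mathcal A}$, apply $T$ to obtain $((i,c,l), {\bf y}) \in {\mathcal B}$ via (\ref{nzm}), then apply $S$ and show that the stopping index $d$ it produces equals $a$, that the multiplicity $m$ it produces equals $k$, and that the resulting tableau coincides with ${\bf x}$. Since all changes from both $T$ and $S$ take place in a single two-column strip $\{i, i+1\}$, I would abbreviate $\alpha_b = x^b_i$, $\beta_b = x^b_{i+1}$ and work entirely within that strip.

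The heart of the argument is the first step: verifying $d = a$. This requires checking both the strict chain $y^b_i > y^{b+1}_{i+1}$ for $b \in [a,c]$ and the failure $y^{a-1}_i \le y^a_{i+1}$ at the lower boundary. Using the explicit row changes prescribed by (\ref{nzm}), the differences $y^b_i - y^{b+1}_{i+1}$ simplify to $\beta_b - \alpha_{b-1}$ for the interior rows, to $k$ (or $\alpha_a + k - \beta_{a+1}$ in the degenerate case $c=a$) at the bottom row, and to a nonnegative combination of $\alpha_c - \beta_{c+1}$ with $\beta_c - \alpha_{c-1}$ at the top row. These reduce directly to the defining inequalities of $c$, while the boundary failure $\alpha_{a-1} \le \beta_a - k$ is exactly the range condition $k \le \beta_a - \alpha_{a-1}$ built into the definition of ${\mathcal A}_{\bf x}$ via (\ref{ykn1}).

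Once $d = a$ is proved, the identification $m = k$ falls out of (\ref{rna}) immediately in both regimes ($c = a$, where $l = k$ directly; and $c > a$, where $y^a_i - y^{a+1}_{i+1} = k$). The final check that the tableau modifications of (\ref{nzm}) are undone row by row by (\ref{rna}) is then routine, since each row in the strip $[a, c]$ is restored by a single algebraic identity. The main obstacle is purely notational: the two boundary regimes ($c = a$ versus $c > a$) produce slightly different row-by-row comparisons and must be treated uniformly so the argument does not fragment into many sub-cases. The picture (\ref{syr}) is a helpful guide: it shows that $T$ transports a connected block of dots one slot to the left while $S$ transports the same block one slot back to the right, so establishing the bijection amounts to showing that the two stopping criteria are genuinely complementary.
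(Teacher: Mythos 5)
Your proposal is correct and follows essentially the same route as the paper: both reduce to proving $ST=\mathrm{id}_{\mathcal A}$ via Lemma \ref{le:air} and the finiteness of the sets, then verify on the two-column strip that the stopping index of $S$ applied to $T\bigl({\bf x},(i,a,k)\bigr)$ is $d=a$ (using exactly the inequalities $\beta_b-\alpha_{b-1}>0$, $\alpha_c\ge\beta_{c+1}$ and the range condition $k\le\beta_a-\alpha_{a-1}$ from (\ref{ykn1})), that $m=k$, and that the row changes of (\ref{rna}) undo those of (\ref{nzm}). Your explicit difference computations are just an algebraic unpacking of the paper's diagrammatic argument via (\ref{syr}).
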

\begin{proof}
Thanks to Lemma \ref{le:air} it suffices to prove 
$ST = \mathrm{id}_{\mathcal A}$.
Let $T: \bigl({\bf x}, (i,a,k) \bigr) \mapsto \bigl((i,c,l), {\bf y}\bigr)$.
The right hand side is depicted in (ii) of (\ref{syr}) and 
we have already seen that it belongs to ${\mathcal B}$ after (\ref{nzm}).
First we show that (ii) satisfies the inequalities in 
(iii)$_{(a,d)\rightarrow (c,a)}$.
In fact the bottom one $\beta_a-k \ge \alpha_{a-1}$,
for example, follows from the condition 
$(i,a,k) \in {\mathcal A}_{\bf x}$ (\ref{ykn1}).
The rest are similar.
Now we can inscribe the same numbers as 
(ii) into (iii)$_{(a,d)\rightarrow (c,a)}$.
Then the rule (iii) $\mapsto$ (iv) reproduces ${\bf x}$.
Second we check that $m$ given in (\ref{rna}) returns to  
the original value $k$ when applied to (ii).  For $d=a$ this is obvious.
For $d<a$ we have  
$m= (\mathrm{bottom}\, \ast \,\mathrm{ in \,(ii)}) - \alpha_a
= (\alpha_a+k)-\alpha_a=k$ as desired.
Thus $ST$ reproduces $(i,a,k)$ as well as ${\bf x}$. 
\end{proof}

\begin{example}\label{ex:mho}
For ${\bf x}$ and 
$(i,a,k) \in {\mathcal A}_{\bf x}$ in Example \ref{ex:tsk},
the image of $\bigl({\bf x}, (i,a,k) \bigr)$ under $T$ reads
{\small
\begin{align*}
&
\bigl({\bf x}, (1,1,1)\bigr) \mapsto 
\Bigl((1,1,1),
\begin{pmatrix}
 0 & 0 & 1 \\
 2 & 1 & 0 \\
 2 & 0 & 2 \\
 2 & 0 & 4
\end{pmatrix}\Bigr),\quad
\bigl({\bf x}, (2,1,1)\bigr) \mapsto 
\Bigl((2,2,1),
\begin{pmatrix}
 0 & 0 & 1 \\
 2 & 1 & 0 \\
 2 & 1 & 1 \\
 1 & 2 & 3
\end{pmatrix}\Bigr),\\
&
\bigl({\bf x}, (2,1,2)\bigr) \mapsto 
\Bigl((2,2,1),
\begin{pmatrix}
 0 & 0 & 1 \\
 2 & 1 & 0 \\
 2 & 1 & 1 \\
 1 & 3 & 2
\end{pmatrix}\Bigr),\quad
\bigl({\bf x}, (2,1,3)\bigr) \mapsto 
\Bigl((2,2,1),
\begin{pmatrix}
 0 & 0 & 1 \\
 2 & 1 & 0 \\
 2 & 1 & 1 \\
 1 & 4 & 1
\end{pmatrix}\Bigr),\\
&
\bigl({\bf x}, (2,1,4)\bigr) \mapsto 
\Bigl((2,2,1),
\begin{pmatrix}
0 & 0 & 1 \\
 2 & 1 & 0 \\
 2 & 1 & 1 \\
 1 & 5 & 0
\end{pmatrix}\Bigr),\quad
\bigl({\bf x}, (2,2,1)\bigr) \mapsto 
\Bigl((2,2,1),
\begin{pmatrix}
0 & 0 & 1 \\
 2 & 1 & 0 \\
 2 & 1 & 1 \\
 1 & 1 & 4
\end{pmatrix}\Bigr),\\
&
\bigl({\bf x}, (3,1,1)\bigr) \mapsto 
\Bigl((3,1,1),
\begin{pmatrix}
 0 & 0 & 1 \\
 2 & 1 & 0 \\
 2 & 0 & 2 \\
 0 & 1 & 5
\end{pmatrix}\Bigr).
\end{align*}
}
Reversing these arrows provides examples of the image under $S$.
\end{example}

\subsection{Stochastic dynamics}
In the transformation
$T: \bigl({\bf x},(i,a,k)\bigr) \mapsto \bigl((i,c,l),{\bf y}\bigr)$
in (\ref{nzm}), we write the 
uniquely determined element ${\bf y} \in B({\bf m})$
as ${\bf y} = T^k_{i,a}({\bf x})$.
In this way we have the deterministic evolution  
$T^k_{i,a}$ on $B({\bf m})$ such that 
\begin{align}\label{skb2}
T: \bigl({\bf x},(i,a,k)\bigr) \mapsto \bigl((i,c,l), T^k_{i,a}({\bf x})\bigr).
\end{align}
It is defined for each $(i,a,k) \in {\mathcal A}_{\bf x}$.
For a later convenience we also set 
\begin{equation}\label{skb:f}
T^k_{i,a}({\bf x}) = {\bf x} \quad \text{if}\; \;(i,a,k) \not\in 
{\mathcal A}_{\bf x}.
\end{equation}

Given ${\bf x} = (x^a_i) \in B({\bf m})$,
suppose that $t := x^a_{i+1}-x^{a-1}_i \ge 1$ 
so that
${\mathcal A}_{\bf x}$ (\ref{ykn1}) contains 
$(i,a,1), (i,a,2), \ldots, (i,a, t)$.
Then it is easy to see
$T^k_{i,a}({\bf x}) = (T^1_{i,a})^k({\bf x})$
for $k \in [1,t]$.

By the $n$-line process with prescribed multiplicity ${\bf m}$,
we mean the stochastic process on $B({\bf m})$ in which each state
${\bf x} \in B({\bf m})$ undergoes the time evolution 
$T^k_{i,a}$ with an equal transition rate for all 
$(i,a,k) \in {\mathcal A}_{\bf x}$.  
Let 
$\bigoplus_{{\bf x} \in B({\bf m})} \C|{\bf x}\rangle$ be its space of states 
having the basis $\{|{\bf x}\rangle\}$ labeled with 
${\bf x}=(x^a_i) \in B({\bf m})$ (\ref{mkr:gk}).
By the definition, the Markov matrix of the $n$-line process (LP) reads as
\begin{align}\label{skb:t}
H_{\mathrm{LP}}= \sum_{i\in \Z_L}\sum_{a=1}^n\sum_{k \ge 1}
(T^k_{i,a}-1),
\end{align}
which is convergent owing to (\ref{skb:f}).

\subsection{Steady state}

The $n$-line process on $B({\bf m})$ possesses a unique steady state.
Let $\mu({\bf x})$ denote its probability distribution. 

\begin{theorem}\label{th:akn}
The stationary probability distribution of the $n$-line process is uniform.
Namely $\mu({\bf x})$ is independent of ${\bf x} \in B({\bf m})$.
\end{theorem}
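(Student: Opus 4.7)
The plan is to verify directly that the uniform vector $|u\rangle := \sum_{{\bf x} \in B({\bf m})} |{\bf x}\rangle$ lies in the kernel of $H_{\mathrm{LP}}$. Since the $n$-line process is stated to possess a unique steady state on the finite set $B({\bf m})$, this identifies $\mu({\bf x})$ as constant on $B({\bf m})$.

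First, by the extension (\ref{skb:f}), the operator $T^k_{i,a}-1$ annihilates $|{\bf x}\rangle$ whenever $(i,a,k)\notin\mathcal{A}_{\bf x}$. Together with (\ref{skb2}) and (\ref{skb:t}), this truncates the a priori infinite sum in $H_{\mathrm{LP}}$ on a basis vector to
\begin{align*}
H_{\mathrm{LP}}|{\bf x}\rangle \;=\; \sum_{(i,a,k)\in\mathcal{A}_{\bf x}}|T^k_{i,a}({\bf x})\rangle \;-\; \#\mathcal{A}_{\bf x}\,|{\bf x}\rangle.
\end{align*}
Summing over ${\bf x}\in B({\bf m})$ and extracting the coefficient of a fixed basis vector $|{\bf y}\rangle$ in $H_{\mathrm{LP}}|u\rangle$ yields
\begin{align*}
\bigl\langle {\bf y}\bigm|H_{\mathrm{LP}}\bigm|u\bigr\rangle \;=\; \#\bigl\{({\bf x},(i,a,k))\in\mathcal{A} \,\bigm|\, T^k_{i,a}({\bf x})={\bf y}\bigr\} \;-\; \#\mathcal{A}_{\bf y}.
\end{align*}

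Second, I would evaluate the first count using the bijection $T:\mathcal{A}\to\mathcal{B}$ from Proposition \ref{pr:tgm}. By the definition (\ref{skb2}), an element $({\bf x},(i,a,k))\in\mathcal{A}$ satisfies $T^k_{i,a}({\bf x})={\bf y}$ precisely when its image $T({\bf x},(i,a,k))$ lies in the slice $\mathcal{B}_{\bf y}\times\{{\bf y}\}$. Bijectivity of $T$ then identifies the count with $\#\mathcal{B}_{\bf y}$. Applying Lemma \ref{le:air} gives $\#\mathcal{A}_{\bf y}=\#\mathcal{B}_{\bf y}$, so the coefficient of every $|{\bf y}\rangle$ in $H_{\mathrm{LP}}|u\rangle$ vanishes, establishing $H_{\mathrm{LP}}|u\rangle=0$.

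There is no real obstacle once the structural results of the preceding subsections are in hand: the entire content of the argument is that Proposition \ref{pr:tgm} repackages the in/out transition data bijectively, and Lemma \ref{le:air} equates the two flows. The substantive work has already been done in constructing the maps $T$ and $S$ of (\ref{nzm})--(\ref{rna}) and verifying they are mutually inverse; stationarity of the uniform distribution is then an immediate bookkeeping consequence.
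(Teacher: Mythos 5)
Your proposal is correct and follows essentially the same route as the paper: both arguments balance the in-flow against the out-flow at each state ${\bf y}$ by using the bijection $T$ of Proposition \ref{pr:tgm} to identify the number of incoming transitions with $\#{\mathcal B}_{\bf y}$, and then invoke Lemma \ref{le:air} together with uniqueness of the steady state. The only difference is cosmetic: you phrase the computation as extracting the coefficient of $|{\bf y}\rangle$ in $H_{\mathrm{LP}}|u\rangle$, whereas the paper phrases it as equating the rates $P_{\text{in}}$ and $P_{\text{out}}$.
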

\begin{proof}
Since the steady state is unique, it suffices to show that 
the total rate $P_\text{in}$ 
jumping into a given state ${\bf x}$ is equal to 
the total rate $P_\text{out}$ jumping out of it under the uniform choice
$\mu({\bf x}) = \mu$.
One has $P_\text{out} = (\#{\mathcal A}_{\bf x})\mu({\bf x})
= (\#{\mathcal A}_{\bf x})\mu$.
One the other hand, $P_\text{in}$ is calculated as
\begin{align*}
P_\text{in} &= \sum_{{\bf y} \in B({\bf m})}
\sum_{(i,a,k) \in {\mathcal A}_{\bf y}}
\theta(T^k_{i,a}({\bf y}) = {\bf x})\mu({\bf y})\\
&= \mu\sum_{({\bf y}, (i,a,k)) \in {\mathcal A}}
\sum_{(j,b,l) \in {\mathcal B}_{\bf x}}
\theta\left(T: \bigl({\bf y},(i,a,k) \bigr) \mapsto \bigl((j,b,l),{\bf x}\bigr)\right)\\
&= \mu
\sum_{(j,b,l) \in {\mathcal B}_{\bf x}}\sum_{({\bf y}, (i,a,k)) \in {\mathcal A}}
\theta\left(S: \bigl((j,b,l),{\bf x}\bigr) \mapsto \bigl({\bf y},(i,a,k) \bigr)\right)\\
&= \mu \sum_{(j,b,l) \in {\mathcal B}_{\bf x}}1 = (\#{\mathcal B}_{\bf x})\mu
=(\#{\mathcal A}_{\bf x})\mu,
\end{align*}
where the last equality is due to Lemma \ref{le:air}. 
\end{proof}

We summarize the result as
\begin{corollary}[Steady state of $n$-line process]\label{akn:u}
\begin{align*}
H_{\mathrm{LP}}|\Omega({\bf m})\rangle = 0,\qquad
|\Omega({\bf m})\rangle = \sum_{{\bf x} \in B({\bf m})}|{\bf x}\rangle.
\end{align*}
\end{corollary}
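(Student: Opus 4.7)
The plan is to read the corollary as a direct repackaging of Theorem \ref{th:akn} in the language of the state space $\bigoplus_{{\bf x} \in B({\bf m})} \C|{\bf x}\rangle$. A unique steady state of a finite-state Markov matrix is characterized, up to normalization, by its probability distribution; Theorem \ref{th:akn} asserts that distribution is the constant function $\mu({\bf x}) = \mu$, so the associated vector is proportional to $\sum_{\bf x} |{\bf x}\rangle = |\Omega({\bf m})\rangle$, which $H_{\mathrm{LP}}$ annihilates by definition.

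If a self-contained verification is preferred, I would compute $H_{\mathrm{LP}}|\Omega({\bf m})\rangle$ directly from (\ref{skb:t}), using (\ref{skb:f}) to discard the pairs $({\bf x},(i,a,k)) \notin {\mathcal A}$. The negative contribution immediately collects to $\sum_{\bf x}\#{\mathcal A}_{\bf x}\,|{\bf x}\rangle$. For the positive contribution $\sum_{({\bf x},(i,a,k))\in{\mathcal A}} |T^k_{i,a}({\bf x})\rangle$, I would invoke Proposition \ref{pr:tgm}: the map $T:({\bf x},(i,a,k)) \mapsto ((j,c,l),\,T^k_{i,a}({\bf x}))$ is a bijection ${\mathcal A} \to {\mathcal B}$, so relabeling the summation by its $T$-image recasts this term as $\sum_{\bf y}\#{\mathcal B}_{\bf y}\,|{\bf y}\rangle$. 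Lemma \ref{le:air} then equates $\#{\mathcal A}_{\bf y}$ with $\#{\mathcal B}_{\bf y}$ for every ${\bf y}$, so the two sums cancel.

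I expect no real obstacle: all the genuine content — constructing the bijection $T$ and proving $\#{\mathcal A}_{\bf x} = \#{\mathcal B}_{\bf x}$ — has already been dispatched in Lemma \ref{le:air} and Proposition \ref{pr:tgm} and was the substance of the preceding Theorem \ref{th:akn}. The corollary is essentially a notational reformulation, with the only subtlety being the bookkeeping that ensures $|T^k_{i,a}({\bf x})\rangle - |{\bf x}\rangle$ contributes zero when $(i,a,k)\notin{\mathcal A}_{\bf x}$, which is guaranteed by the convention (\ref{skb:f}).
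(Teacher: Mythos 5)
Your proposal is correct and matches the paper: the corollary is stated there as a direct summary of Theorem \ref{th:akn}, and your optional self-contained computation (cancelling $\sum_{\bf x}\#{\mathcal A}_{\bf x}|{\bf x}\rangle$ against $\sum_{\bf y}\#{\mathcal B}_{\bf y}|{\bf y}\rangle$ via the bijection $T$ of Proposition \ref{pr:tgm} and Lemma \ref{le:air}) is just the vector-space rephrasing of the paper's own in-rate/out-rate balance argument proving that theorem. No gaps.
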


\section{Projection from $n$-line process to $n$-TAZRP}\label{sec:pr}

\subsection{Combinatorial $R$}\label{ss:cr}

The $B_\ell$ defined in (\ref{mkr}) is a labeling set of a basis of  
the $\ell$-fold symmetric tensor representation of the 
quantum affine algebra $U_q(\widehat{sl}_L)$ \cite{D86,J}.
The {\em combinatorial $R$} is the quantum $R$ matrix at $q=0$.
It is a bijection 
\begin{align}\label{obt}
R=R_{\ell,m}:\;B_\ell \otimes B_m \rightarrow B_m \otimes B_\ell; \quad
{\bf x} \otimes {\bf y} \mapsto {\bf y}' \otimes {\bf x}'.
\end{align}
To describe it explicitly we set ${\bf x}=(x_1,\ldots, x_L), {\bf y}=(y_1,\ldots, y_L), 
{\bf x'}=(x'_1,\ldots, x'_L)$ and 
${\bf y'}=(y'_1,\ldots, y'_L)$.
Note that $\sum_i x_i = \sum_i x'_i = \ell$ and 
$\sum_i y_i = \sum_i y'_i = m$.
Given ${\bf x}=(0,3,2,2,1)$ and ${\bf y}=(2,0,2,1,0)$ for instance,
we depict them as the diagram (i) given below.
In this paper we will only encounter the case $\ell > m$.
Then the algorithm \cite[Rule 3.11]{NY} known as the {\em NY-rule} for finding 
${\bf x}'$ and ${\bf y}'$ goes as follows: 

\begin{picture}(380,85)(-65,-22)

\put(-40,35){
\put(-25,-27){${\bf x}=$}\put(-25,5){${\bf y}=$}
\multiput(0,0)(18,0){6}{\put(0,0){\line(0,1){18}}}
\put(0,0){\line(1,0){90}}\put(0,18){\line(1,0){90}}
\put(7,9){$\bullet$}\put(7,3){$\bullet$}
\put(43,9){$\bullet$}\put(43,3){$\bullet$}
\put(61,6){$\bullet$}
\put(45,11.5){\line(-1,0){6.5}}\put(38.5,11.5){\line(0,-1){19}}
\put(38.5,-7.5){\line(-1,0){11}}\put(27.5,-7.5){\line(0,-1){14}}
}

\put(-40,0){
\multiput(0,0)(18,0){6}{\put(0,0){\line(0,1){18}}}
\put(0,0){\line(1,0){90}}\put(0,18){\line(1,0){90}}
\put(25,11.5){$\bullet$}\put(25,6.2){$\bullet$}\put(25,1){$\bullet$}
\put(43,9){$\bullet$}\put(43,3){$\bullet$}
\put(61,9){$\bullet$}\put(61,3){$\bullet$}
\put(79,6){$\bullet$}}

\put(130,0){
\put(-40,35){
\multiput(0,0)(18,0){6}{\put(0,0){\line(0,1){18}}}
\put(0,0){\line(1,0){90}}\put(0,18){\line(1,0){90}}
\put(7,9){$\bullet$}\put(7,3){$\bullet$}
\put(43,9){$\bullet$}\put(43,3){$\bullet$}
\put(61,6){$\bullet$}
\put(9,11.5){\line(-1,0){6.5}}\put(2.5,11.5){\line(0,-1){19}}\put(2.5,-7.5){\line(-1,0){5}}
\put(-12,-7.9){...}
\put(9,5.5){\line(-1,0){4}}\put(5,5.5){\line(0,-1){16}}\put(5,-10.5){\line(-1,0){7.5}}
\put(-12,-10.9){...}
\put(45,11.5){\line(-1,0){6.5}}\put(38.5,11.5){\line(0,-1){19}}
\put(38.5,-7.5){\line(-1,0){11}}\put(27.5,-7.5){\line(0,-1){14}}
\put(45,5.5){\line(-1,0){4}}\put(41,5.5){\line(0,-1){16}}
\put(41,-10.5){\line(-1,0){9}}\put(32,-10.5){\line(0,-1){16}}
\put(32,-26.5){\line(-1,0){4}}
\put(63,8.5){\line(-1,0){5}}\put(58,8.5){\line(0,-1){16.5}}\put(58,-8){\line(-1,0){12.5}}
\put(45.5,-8){\line(0,-1){15}}
\put(92.5,-7.9){\line(-1,0){29}}\put(63.5,-7.9){\line(0,-1){15}}\put(93,-8.5){...}
\put(92.5,-10.9){\line(-1,0){11}}\put(81.5,-10.9){\line(0,-1){16}}\put(93,-11.5){...}
}

\put(-40,0){
\multiput(0,0)(18,0){6}{\put(0,0){\line(0,1){18}}}
\put(0,0){\line(1,0){90}}\put(0,18){\line(1,0){90}}
\put(25,11.5){$\bullet$}\put(25,6.2){$\bullet$}\put(25,1){$\bullet$}
\put(43,9){$\bullet$}\put(43,3){$\bullet$}
\put(61,9){$\bullet$}\put(61,3){$\bullet$}
\put(79,6){$\bullet$}}}


\put(260,0){
\put(-40,35){
\put(100,5){$={\bf x}'$}\put(100,-27){$={\bf y}'$}
\multiput(0,0)(18,0){6}{\put(0,0){\line(0,1){18}}}
\put(0,0){\line(1,0){90}}\put(0,18){\line(1,0){90}}
\put(7,9){$\bullet$}\put(7,3){$\bullet$}
\put(25,6){$\bullet$}
\put(43,11.5){$\bullet$}\put(43,6.5){$\bullet$}\put(43,1){$\bullet$}
\put(61,9){$\bullet$}\put(61,3){$\bullet$}}

\put(-40,0){
\multiput(0,0)(18,0){6}{\put(0,0){\line(0,1){18}}}
\put(0,0){\line(1,0){90}}\put(0,18){\line(1,0){90}}
\put(25,9){$\bullet$}\put(25,3){$\bullet$}
\put(43,6){$\bullet$}
\put(61,6){$\bullet$}
\put(79,6){$\bullet$}}}

\put(-23,-6){\put(23,-14){(i)}\put(152,-14){(ii)}\put(281,-14){(iii)}}
\end{picture}

\renewcommand{\labelenumi}{(\roman{enumi})}

\begin{enumerate}
\item Choose a dot, say $d$, in ${\bf y}$ and connect it to a 
dot $d'$ in ${\bf x}$ to form a pair.
$d'$ should be the rightmost one 
among those located strictly left of $d$. 
If there is no such dot, take $d'$ to be the rightmost one in ${\bf x}$,
i.e., seek such $d'$ under the {\em periodic boundary condition}.
(Dots in the same box are regarded to be in the same position.
If $d'$ is to be chosen from more than one dots in a box, 
any of them can be taken.)

\item Repeat (i) for yet unpaired dots until all dots in ${\bf y}$ 
are paired to some dots in ${\bf x}$.

\item Move the $l-m$ unpaired dots in ${\bf x}$ vertically up to ${\bf y}$.
The resulting diagrams give ${\bf x}' $ and ${\bf y}'$.
\end{enumerate}
\renewcommand{\labelenumi}{(\arabic{enumi})}

In the above example we have ${\bf x}' = (2,1,3,2,0)$ and
${\bf y}' =(0,2,1,1,1)$.
The lines pairing the dots are called {\em $H$-lines}\footnote{
The nomenclature $H$ originates in \cite[(2.5)]{DJKMO} and \cite[(1.1.3)]{KMN} etc., 
which has the meaning of {\em local Hamiltonian} 
of corner transfer matrices \cite{Bax}.}.
We note that the NY-rule implies 
\begin{align}\label{mri}
R({\bf x}\otimes {\bf y}) = {\bf y}' \otimes {\bf x}' 
\in B_m \otimes B_\ell\;\;\Longrightarrow 
\;{\bf x} \ge {\bf y}', \;{\bf x}' \ge {\bf y}
\end{align}
for $\ell > m$, 
where in general 
for ${\bf u}=(u_1,\ldots, u_L), {\bf v}=(v_1, \ldots, v_L) \in \Z^L$,
${\bf u} \ge {\bf v}$ is defined by 
${\bf u} - {\bf v} \in (\Z_{\ge 0})^L$.

\begin{remark}\label{re:mri}\par \noindent
\begin{enumerate}
\item
In (i) and (ii), the $H$-lines depend on the order of choosing dots from 
${\bf y}$. However the final result ${\bf x}'$ and ${\bf y}'$ are independent of it 
due to \cite[Prop.3.20]{NY}. 

\item
A similar algorithm is known for the case $\ell<m$.
$R_{\ell,m}$ is identity for $\ell=m$.
In any case $R_{\ell,m}R_{m,\ell} = \mathrm{id}_{B_m \otimes B_\ell}$ holds.

\item
The above rule is close to but slightly different from the combinatorial $R$ of the 
{\em antisymmetric} tensor representation 
which was identified \cite{KMO} 
with the arrival/service/departure process relevant to TASEP \cite{FM}.
In this interpretation one regards that time increases horizontally to the left
(apart from the periodic boundary condition).
Then the customers (dots in ${\bf y}$) in the present case 
have to avoid the service (dots in ${\bf x}$) 
that becomes available simultaneously with their arrival. 
Another significant difference is, there can be {\em multiple} arrival and service at a time reflecting the symmetric tensor representation.

\item The following explicit formula having background in 
geometric crystals and soliton cellular automata \cite{IKT}  is known to hold 
either for $\ell \ge m$ or $\ell \le m$ \cite{Y}:
\begin{align*}
&x'_i = x_i+Q_i(x,y)-Q_{i-1}(x,y),\quad 
y'_i = y_i+Q_{i-1}(x,y)-Q_i(x,y)\quad (i \in \Z_L),\\
&Q_i(x,y) = \min \Bigl\{ 
\sum_{j=1}^{k-1}x_{i+j} + \sum_{j=k+1}^L y_{i+j}
\mid 1 \le k \le L \Bigr\}.
\end{align*}
\end{enumerate}
\end{remark}

It is customary to depict the relation (\ref{obt}) as a vertex:
\begin{equation}\label{ask3}
\thicklines
\begin{picture}(50,50)(-20,-20)

\put(-20,-2){${\bf x}$}
\put(15,-2){${\bf x}'$}
\put(-3,-18){${\bf y}$}
\put(-3,15.5){${\bf y}'$}
\put(-10,0){\vector(1,0){20}}
\put(0,-10){\vector(0,1){20}}

\end{picture}
\end{equation}
One may rotate it arbitrarily.
The thick arrows here carry crystals $B_\ell$ or $B_m$.
They are to be distinguished from the thin arrows 
carrying a Fock space $F$ which will be used in 
Section \ref{ss:fcr} and \ref{ss:mp}.

The most significant property of the combinatorial $R$ is 
the Yang-Baxter equation \cite{Bax}:
\begin{equation*}
(R\otimes 1)(1\otimes R)(R\otimes 1) = (1\otimes R)(R\otimes 1) (1\otimes R)
\end{equation*}
as maps $B_\ell \otimes B_m \otimes B_k \rightarrow
B_k \otimes B_m \otimes B_\ell$ for any $\ell, m,k$.
For example the two sides acting on the element 
$0121\otimes 1101 \otimes 2000 \in B_4 \otimes B_3 \otimes B_2$ 
lead to\footnote{
$(0,1,2,1)$ for example is denoted by 0121 for simplicity. 
A similar convention will also be used in the rest of the paper.} 
\begin{equation}\label{ask:dk}
\thicklines
\begin{picture}(200,90)(0,0)

\put(0,75){\put(0,0){0011} \put(30,0){0111} \put(60,0){3100}}

\put(0,50){
\put(13.5,9){\vector(3,2){20}}\put(33.5,9){\vector(-3,2){20}}
\put(70,9){\vector(0,1){12}}
}

\put(0,50){\put(0,0){0021} \put(30,0){0101} \put(60,0){3100}}

\put(0,25){
\put(10,9){\vector(0,1){12}}\put(43.5,9){\vector(3,2){20}}\put(63.5,9){\vector(-3,2){20}}
}

\put(0,25){\put(0,0){0021} \put(30,0){1201} \put(60,0){2000}}

\put(13.5,9){\vector(3,2){20}}\put(33.5,9){\vector(-3,2){20}}\put(70,9){\vector(0,1){12}}

\put(0,0){0121} \put(30,0){1101}\put(60,0){2000}

\put(95,40){$=$}

\put(120,0){
\put(0,75){\put(0,0){0011} \put(30,0){0111} \put(60,0){3100}}

\put(0,50){
\put(10,9){\vector(0,1){12}}\put(43.5,9){\vector(3,2){20}}\put(63.5,9){\vector(-3,2){20}}
}

\put(0,50){\put(0,0){0011} \put(30,0){0211} \put(60,0){3000}}

\put(0,25){
\put(13.5,9){\vector(3,2){20}}\put(33.5,9){\vector(-3,2){20}}\put(70,9){\vector(0,1){12}}
}

\put(0,25){\put(0,0){0121} \put(30,0){0101} \put(60,0){3000}}

\put(10,9){\vector(0,1){12}}\put(43.5,9){\vector(3,2){20}}\put(63.5,9){\vector(-3,2){20}}

\put(0,0){0121} \put(30,0){1101}\put(60,0){2000}
}

\end{picture}
\end{equation}
Here $=$ means that starting from the same bottom line 
one ends up with the same top line.
In this way one can move the arrows across other vertices without 
changing outer states. 
This property will be utilized efficiently in the sequel. 
 
Combinatorial $R$'s form the most systematic examples of 
the set theoretical solutions to the Yang-Baxter equation \cite{D92,V} 
connected to the representation theory of quantum groups, which have 
numerous applications \cite{HKOTT,IKT,KMN, Ku,NY,Y}.

\subsection{$B_+({\bf m})$ and elementary bijection $\varphi$}
Recall that the sets of configurations are given by 
$B({\bf m})$ (\ref{mkr}) for $n$-line process  
and by $S({\bf m})$ (\ref{kgc}) for $n$-TAZRP.
We introduce a subsidiary set 
\begin{align*}
B_+({\bf m}) = \{{\bf x}^1\otimes \cdots \otimes {\bf x}^n 
\in B({\bf m})\mid
{\bf x}^1 \ge \cdots \ge {\bf x}^n \} \subseteq B({\bf m}),
\end{align*}
where $\ge$ is defined after (\ref{mri}).
There is an elementary bijection between 
$B_+({\bf m})$ and $S({\bf m})$ as 
\begin{equation}\label{sra}
\begin{split}
\varphi: \; S({\bf m})&\overset{\sim}{\longrightarrow} B_+({\bf m});
\quad
{\boldsymbol \sigma} = (\sigma^{a}_i) \longmapsto
\varphi^1({\boldsymbol \sigma})\otimes \cdots \otimes 
\varphi^n({\boldsymbol \sigma}),\\
\varphi^a({\boldsymbol \sigma}) &
=(\sigma^{a}_1+\sigma^{a+1}_1+ \cdots + \sigma^{n}_1, \ldots,
\sigma^{a}_L+\sigma^{a+1}_L+ \cdots + \sigma^{n}_L),
\end{split}
\end{equation}
where 
$\varphi^a({\boldsymbol \sigma})\in B_{\ell_a}$ and 
$\varphi^1({\boldsymbol \sigma}) \ge \cdots 
\ge \varphi^n({\boldsymbol \sigma})$ are obvious.

\begin{example}\label{ex:ask:g}
The $\varphi^1({\boldsymbol \sigma}),\ldots, \varphi^n({\boldsymbol \sigma})$ 
are easily read off from
the dot diagram (see (\ref{hrk})) by regarding 
a particle of species $a$ as a column of $a$ dots filled from the bottom.
The following is an example for $(n,L)=(3,4)$, giving 
$\varphi^1({\boldsymbol \sigma}) = 1213\in B_7$,
$\varphi^2({\boldsymbol \sigma}) = 1201\in B_4$ and 
$\varphi^3({\boldsymbol \sigma}) = 0101\in B_2$.
\begin{equation*}
\begin{picture}(190,70)(-175,-10)

\put(-260,0){
\put(25,45){mutiplicity rep.} \put(117,45){multiset rep.}
\put(0,30){${\boldsymbol \sigma}=(010,011,100,201) = 
(2,23,1,113)$}
\put(0,15){$\varphi({\boldsymbol \sigma}) = 1213 \otimes 1201 \otimes 0101$}
}

\multiput(0.5,0)(0,20){4}{\put(0,0){\line(1,0){80}}}
\multiput(0.5,0)(20,0){5}{\put(0,0){\line(0,1){60}}}

\put(-45,47){$\varphi^3({\boldsymbol \sigma})=$}
\put(-45,27){$\varphi^2({\boldsymbol \sigma})=$}
\put(-45,7){$\varphi^1({\boldsymbol \sigma})=$}

\put(28,47){$\bullet$}\put(68,47){$\bullet$}

\put(8,27){$\bullet$}
\put(25,27){$\bullet$}\put(32,27){$\bullet$}
\put(68,27){$\bullet$}

\put(8,7){$\bullet$}
\put(25,7){$\bullet$}\put(32,7){$\bullet$}
\put(48,7){$\bullet$}
\put(62.5,7){$\bullet$}\put(68,7){$\bullet$}\put(73.5,7){$\bullet$}

\put(8,-11){2} \put(26,-11){23} \put(48,-11){1} \put(63,-11){113}
\end{picture}
\end{equation*}
\end{example}

The inverse of $\varphi$ is given by
\begin{equation}\label{ask2}
\begin{split}
\varphi^{-1}: B_+({\bf m})
& \;\overset{\sim}{\longrightarrow}  S({\bf m}); \quad
{\bf x}=(x^a_i) \longmapsto (\sigma^{a}_i),\\
\sigma^{a}_i & = x^a_i-x^{a+1}_i\;\;(x^{n+1}_i=0).
\end{split}
\end{equation}
See Section \ref{ss:nms}.
The maps $\varphi^{\pm 1}$ are also simply seen 
by the following diagram for each $i \in \Z_L$:
\begin{equation*}
\begin{picture}(190,85)(-15,-5)

\put(13, 61){\vector(-1,0){12}}\put(33, 61){\vector(1,0){130}}
\put(18, 58){{\small $x^{1}_i$}}

\put(13, 46){\vector(-1,0){12}}\put(33, 46){\vector(1,0){88}}
\put(18, 43){{\small $x^{2}_i$}}

\put(12, 17){\vector(-1,0){11}}\put(33, 17){\vector(1,0){8}}
\put(18, 14){{\small $x^{n}_i$}}

\put(0,70){\line(1,0){164}}
\put(164,70){\line(0,-1){15}}
\put(122,55){\line(0,-1){15}}
\put(80,40){\line(0,-1){15}}
\put(80,25){\line(-1,0){10}}

\put(122,45){
\put(0,10){\line(1,0){42}}\put(1.5,-1){$\leftarrow \;\sigma^{1}_i \;\rightarrow$} 
}

\put(80,30){
\put(0,10){\line(1,0){42}}\put(1.5,-1){$\leftarrow \;\sigma^{2}_i \;\rightarrow$} 
}

\multiput(46,21)(4,1){5}{.}

\put(0,10){\line(0,1){60}}\put(42,10){\line(0,1){10}}
\put(0,10){\line(1,0){42}}\put(0,-1){$\leftarrow \; \sigma^{n}_i \;\rightarrow$} 

\end{picture}
\end{equation*}
In this Young diagram for site $i$, a particle with species $a$
corresponds to a column of depth $a$. 

\subsection{Projection $\pi$}
We are going to construct a map 
$\pi: B({\bf m}) \rightarrow S({\bf m})$ as the composition of maps
\begin{alignat}{3}
&B({\bf m}) & \quad\overset{\pi^1\otimes \cdots \otimes \pi^n}{\longrightarrow} 
&\qquad\quad \;B_+({\bf m}) & \overset{\varphi^{-1}}{\longrightarrow} 
&\qquad \quad \;S({\bf m})\nonumber\\
&\;\;\;\;{\bf x}&\;\, \longmapsto \quad&
\;\pi^1({\bf x}) \otimes \cdots \otimes \pi^n({\bf x})
\;\;\;&\longmapsto & \;\;{\boldsymbol \sigma} = (\sigma_1, \ldots, \sigma_L),
\label{sae}
\end{alignat}
where $\varphi^{-1}$ is given by (\ref{ask2}).
Thus the remaining task is to construct 
\begin{align*}
\pi^a: B({\bf m})=B_{\ell_1} \otimes \cdots \otimes B_{\ell_n} 
\rightarrow B_{\ell_a};\quad
{\bf x}={\bf x}^1\otimes \cdots \otimes {\bf x}^n \mapsto 
\pi^a({\bf x})
\end{align*}
for each $a \in [1,n]$.
For $a=1$ we set $\pi^1({\bf x})={\bf x}^1$.
For $a \in [2,n]$ we set
\begin{equation}\label{srn}
R^1\cdots R^{a-2}R^{a-1}(
{\bf x}^1\otimes \cdots \otimes {\bf x}^a)
=\pi^a({\bf x}) \otimes {\bf u}^1 \otimes \cdots \otimes {\bf u}^{a-1},
\end{equation}
where $R^b$ is the combinatorial $R$ acting on the $(b,b+1)$-th components from the left.
The product $R^1\cdots R^{a-1}$ lets $B_{\ell_a}$ penetrate through
$B_{\ell_1} \otimes \cdots \otimes B_{\ell_{a-1}}$
bringing it to the left end of the tensor product.
The ${\bf u}^1 \otimes \cdots \otimes {\bf u}^{a-1}
\in B_{\ell_1} \otimes \cdots B_{\ell_{a-1}}$ 
denotes the element thereby produced, which will not be used in the sequel.
The $\pi^a({\bf x})$ is depicted for $a=1,2,3$ as
\begin{equation}\label{kan}
\begin{picture}(200,50)(-80,-20)
\thicklines

\put(-118,-2){$\pi^1({\bf x})$}
\put(-80,0){\vector(-1,0){10}}
\put(-75,-2){${\bf x}^1$}

\put(-40,-2){$\pi^2({\bf x})$}
\put(13,-2){${\bf x}^2$}
\put(-3,-20){${\bf x}^1$}
\put(-4,15.5){${\bf u}^1$}
\put(10,0){\vector(-1,0){23}}
\put(0,-10){\vector(0,1){21}}

\put(90,0){

\put(-40,-2){$\pi^3({\bf x})$}
\put(25,0){\vector(-1,0){38}}
\put(-3,-20){${\bf x}^1$}\put(-4,15.5){${\bf u}^1$}
\put(0,-10){\vector(0,1){21}}
\put(12,-20){${\bf x}^2$}\put(11,15.5){${\bf u}^2$}
\put(15,-10){\vector(0,1){21}}\put(28,-2){${\bf x}^3$}

}

\end{picture}
\end{equation}
The diagram for $\pi^3({\bf x})$ is a 
concatenation of (\ref{ask3}) rotated by 90 degrees.
Note that $\pi^a({\bf x})$ actually depends only on
the left $a$ components of 
${\bf x}={\bf x}^1 \otimes \cdots \otimes {\bf x}^n$.
We postpone the proof of the fact 
$\pi^1({\bf x}) \otimes \cdots \otimes \pi^n({\bf x}) 
\in B_+({\bf m})$ to Lemma \ref{le:sae}.

\begin{example}\label{ex:srn}
Consider 
${\bf x}=114\otimes 202 \otimes 210 \otimes 001$ in Example \ref{ex:tsk}.
Then $\pi^1({\bf x})=114$ and the other $\pi^a({\bf x})$'s are determined 
from the diagrams 
\begin{equation*}
\begin{picture}(380,50)(-30,-25)
\thicklines

\put(-32,-3){$112$}
\put(13,-3){$202$}
\put(-7,-20){$114$}
\put(-7,15){$204$}
\put(10,0){\vector(-1,0){23}}
\put(0,-10){\vector(0,1){21}}

\put(100,0){
\put(-32,-3){$111$}
\put(13,-3){$102$}
\put(-7,-20){$114$}
\put(-7,15){$105$}
\put(10,0){\vector(-1,0){23}}
\put(0,-10){\vector(0,1){21}}
\put(43,0){
\put(13,-3){$210$}
\put(-7,-20){$202$}
\put(-7,15){$310$}
\put(10,0){\vector(-1,0){23}}
\put(0,-10){\vector(0,1){21}}}}

\put(240,0){
\put(-32,-3){$001$}
\put(13,-3){$100$}
\put(-7,-20){$114$}
\put(-7,15){$213$}
\put(10,0){\vector(-1,0){23}}
\put(0,-10){\vector(0,1){21}}
\put(43,0){
\put(13,-3){$010$}
\put(-7,-20){$202$}
\put(-7,15){$112$}
\put(10,0){\vector(-1,0){23}}
\put(0,-10){\vector(0,1){21}}
\put(43,0){
\put(13,-3){$001$}
\put(-7,-20){$210$}
\put(-7,15){$201$}
\put(10,0){\vector(-1,0){23}}
\put(0,-10){\vector(0,1){21}}}}}

\end{picture}
\end{equation*}
as $\pi^2({\bf x})=112$, $\pi^3({\bf x})=111$
and $\pi^4({\bf x})=001$. 
They indeed satisfy 
$\pi^1({\bf x}) \ge\pi^2({\bf x}) \ge\pi^3({\bf x}) \ge\pi^4({\bf x})$,
assuring $\pi^1({\bf x}) \otimes\pi^2({\bf x}) \otimes\pi^3({\bf x}) \otimes\pi^4({\bf x}) 
\in B_+({\bf m})$. 
Applying the bijection $\varphi^{-1}$ (\ref{ask2}) further we obtain 
the $4$-TAZRP state
$\pi({\bf x}) = (3,3,1124)$ in multiset representation and 
$(0010,0010,2101)$ in multiplicity representation.
\end{example}

One can construct a {\em single} diagram 
that depicts $\pi^1({\bf x}), \ldots, \pi^n({\bf x})$ simultaneously.
We illustrate the procedure for $n=3$.
\begin{equation*}
\begin{picture}(380,65)(-30,-25)
\thicklines

\put(105,0){
\put(0,-15){\vector(0,1){18}}\put(15,-15){\vector(0,1){18}}

\put(0,5){\line(0,1){8}}\put(0,13){\vector(1,0){40}}
\put(15,5){\line(0,1){23}}\put(15,28){\vector(1,0){25}}

\put(30,-15){\line(0,1){7}}\put(30,-8){\vector(-1,0){50}}
\put(-4,-25){${\bf x}^1$}\put(11,-25){${\bf x}^2$}\put(26,-25){${\bf x}^3$}
\put(-48,-10){$\pi^3({\bf x})$}
}

\put(171,0){$=$}
\put(224,0){
\put(0,-15){\line(0,1){12}}\put(0,-3){\vector(1,0){45}}
\put(15,-15){\line(0,1){27}}\put(15,12){\vector(1,0){30}}
\put(30,-15){\line(0,1){42}}\put(30,27){\vector(1,0){15}}
\put(-4,-25){${\bf x}^1$}\put(11,-25){${\bf x}^2$}\put(26,-25){${\bf x}^3$}
\put(4,29){$\pi^3({\bf x})$}
\put(-11,14){$\pi^2({\bf x})$}\put(-26,-1){$\pi^1({\bf x})$}
}

\end{picture}
\end{equation*}
First we attach an extra vertex on top of the defining diagram of $\pi^3({\bf x})$.
This enables us to apply the Yang-Baxter equation to move the arrow going 
from ${\bf x}^3$ to $\pi^3({\bf x})$ 
around to get the right diagram, where the newly generated states on the diagonal
boundary edges are identified with 
$\pi^1({\bf x})$ and $\pi^2({\bf x})$ by (\ref{kan}).
A similar procedure for $n=4$ goes as
\begin{equation*}
\begin{picture}(380,85)(-80,-30)
\thicklines

\put(30,5){\line(0,1){37}}\put(30,42){\vector(1,0){30}}
\put(15,5){\line(0,1){22}}\put(15,27){\vector(1,0){45}}
\put(0,5){\line(0,1){7}}\put(0,12){\vector(1,0){60}}
\put(0,-15){\vector(0,1){18}}\put(15,-15){\vector(0,1){18}}\put(30,-15){\vector(0,1){18}}
\put(45,-15){\line(0,1){7}}\put(45,-8){\vector(-1,0){65}}

\put(-4,-25){${\bf x}^1$}\put(11,-25){${\bf x}^2$}\put(26,-25){${\bf x}^3$}\put(41,-25){${\bf x}^4$}
\put(-48,-10){$\pi^4({\bf x})$}

\put(100,10){$=$}
\put(160,0){
\put(0,-15){\line(0,1){12}}\put(0,-3){\vector(1,0){60}}
\put(15,-15){\line(0,1){27}}\put(15,12){\vector(1,0){45}}
\put(30,-15){\line(0,1){42}}\put(30,27){\vector(1,0){30}}
\put(45,-15){\line(0,1){57}}\put(45,42){\vector(1,0){15}}
\put(-4,-25){${\bf x}^1$}\put(11,-25){${\bf x}^2$}\put(26,-25){${\bf x}^3$}\put(41,-25){${\bf x}^4$}
\put(0,-2){\put(19,44){$\pi^4({\bf x})$}\put(4,29){$\pi^3({\bf x})$}
\put(-11,14){$\pi^2({\bf x})$}\put(-26,-1){$\pi^1({\bf x})$}}
}

\end{picture}
\end{equation*}
Here we have used the diagram representation of $\pi^3({\bf x})$ derived for $n=3$.
For $n$ general it look as
\begin{equation}\label{kan:t}
\begin{picture}(150,75)(-40,-23)
\thicklines

\put(0,-15){\line(0,1){12}}\put(0,-3){\vector(1,0){60}}
\put(15,-15){\line(0,1){27}}\put(15,12){\vector(1,0){45}}
\put(45,-15){\line(0,1){57}}\put(45,42){\vector(1,0){15}}
\put(-4,-25){${\bf x}^1$}\put(11,-25){${\bf x}^2$}
\put(25,-25){$\cdots$}\put(41,-25){${\bf x}^n$}
\put(0,-2){\put(18.4,44){$\pi^n({\bf x})$}
\multiput(12,25)(3,3){5}{\put(0,0){.}}
\put(-11,14){$\pi^2({\bf x})$}\put(-26,-1){$\pi^1({\bf x})$}}
\multiput(53,20)(0,3.3){5}{\put(0,0){.}}
\end{picture}
\end{equation}
We have bent the arrows so that there are $n$ incoming ones 
from the bottom and $n$ outgoing ones to the right.
This turns out to be a natural shape when the diagram is embedded into a 
layer-to-layer transfer matrix of 3D lattice model 
associated with the tetrahedron equation \cite{KMO2,KMO4}.
Now we are ready to show 
that $\pi^1({\bf x}) \otimes \cdots \otimes \pi^n({\bf x}) 
\in B_+({\bf m})$ indeed holds in (\ref{sae}).
\begin{lemma}\label{le:sae}
$\pi^a({\bf x})$ defined by (\ref{srn}) satisfies
$\pi^1({\bf x}) \ge \cdots \ge \pi^n({\bf x})$.
\end{lemma}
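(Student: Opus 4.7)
The plan is to prove $\pi^a({\bf x}) \ge \pi^{a+1}({\bf x})$ by exhibiting a single combinatorial $R$ vertex at which $\pi^a({\bf x})$ appears as the left input and $\pi^{a+1}({\bf x})$ as the top output, so that the dominance property (\ref{mri}) applies directly. Since $\pi^a({\bf x})$ and $\pi^{a+1}({\bf x})$ depend only on ${\bf x}^1, \ldots, {\bf x}^{a+1}$, I first restrict attention to $n = a+1$.

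The first step is to transform the defining diagram for $\pi^{a+1}({\bf x})$ (in which the ${\bf x}^{a+1}$ arrow migrates leftward through ${\bf x}^a, {\bf x}^{a-1}, \ldots, {\bf x}^1$) into the bent diagram (\ref{kan:t}) via Yang-Baxter moves, mimicking the procedure illustrated explicitly in the paper for $n=3$ and $n=4$. The rearrangement can be organized as an induction on the tier index $b$ from $1$ up to $a+1$: attach a dummy $R$-vertex on top of the current diagram, push it through using the Yang-Baxter equation, and identify the newly-exposed boundary state as $\pi^b({\bf x})$ via the definition (\ref{srn}); the base case $\pi^1({\bf x}) = {\bf x}^1$ is immediate. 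In the resulting bent diagram, the state at the turn of each ${\bf x}^b$'s path equals $\pi^b({\bf x})$.

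Now focus on the vertex $V$ where ${\bf x}^{a+1}$'s vertical segment crosses ${\bf x}^a$'s horizontal segment. Because ${\bf x}^{a+1}$ is the lowest-indexed line whose vertical reaches ${\bf x}^a$'s horizontal, $V$ is the leftmost horizontal vertex on ${\bf x}^a$'s line, so its horizontal input equals the turn-state $\pi^a({\bf x})$; symmetrically, $V$ is the topmost vertical vertex on ${\bf x}^{a+1}$'s line, with no vertices between $V$ and the turn, so its vertical output equals $\pi^{a+1}({\bf x})$. Hence the $R$-relation at $V$ reads $R_{\ell_a, \ell_{a+1}}(\pi^a({\bf x}) \otimes {\bf w}) = \pi^{a+1}({\bf x}) \otimes {\bf h}'$ for some auxiliary ${\bf w}$ and ${\bf h}'$, and since $\ell_a > \ell_{a+1}$ by (\ref{mkr:akci}), property (\ref{mri}) immediately yields $\pi^a({\bf x}) \ge \pi^{a+1}({\bf x})$. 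The hard part is the Yang-Baxter rearrangement itself, specifically the inductive verification that each turn-state in the bent diagram coincides with the algebraic $\pi^b({\bf x})$ from (\ref{srn}); the geometric picture is transparent but the bookkeeping of successive Yang-Baxter applications and identifications of intermediate states is the combinatorial heart of the argument.
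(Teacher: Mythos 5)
Your proof is correct and follows essentially the same route as the paper: the paper's own proof simply reads off the relation $R_{\ell_a,\ell_{a+1}}(\pi^a({\bf x})\otimes{\bf u})=\pi^{a+1}({\bf x})\otimes{\bf v}$ from the crossing of lines $a$ and $a+1$ in the bent diagram (\ref{kan:t}) and applies (\ref{mri}), exactly as you do at your vertex $V$. The Yang--Baxter rearrangement you flag as the "combinatorial heart" is indeed the substantive step, but the paper carries it out (for $n=3,4$ and then in general) in the text preceding the lemma rather than inside the proof, so your inductive reconstruction of (\ref{kan:t}) is just an explicit version of what the paper takes as already established.
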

\begin{proof}
The diagram (\ref{kan:t}) tells that 
$R_{\ell_a, \ell_{a+1}}(\pi^a({\bf x}) \otimes {\bf u} ) = \pi^{a+1}({\bf x}) \otimes {\bf v}$
holds for some ${\bf u} \in B_{\ell_{a+1}}$ and 
${\bf v} \in B_{\ell_a}$.
Then the assertion follows from (\ref{mri}).
\end{proof}

The diagram (\ref{kan:t}) reminds us of a corner transfer matrix \cite[Chap.13]{Bax}.
In fact,  in the forthcoming Theorem \ref{th:mnm} 
it will be used exactly as the matrix element of it 
for the $U_q(\widehat{sl}_L)$-vertex model associated with the 
symmetric tensor representations of degrees $\ell_1, \ell_2, \ldots, \ell_n$
at $q=0$, where every vertex is frozen to the combinatorial $R$ and 
all the edge variables are uniquely determined from the input
${\bf x}^1, {\bf x}^2, \ldots, {\bf x}^n$.

\begin{example}\label{ex:mst}
Take again
${\bf x}=114\otimes 202 \otimes 210 \otimes 001$ in Example \ref{ex:tsk}.
Then $\pi^1({\bf x}), \ldots, \pi^4({\bf x})$ are calculated as
\begin{equation*}
\begin{picture}(140,160)(0,-20)
\thicklines

\put(70,130){$\pi^4({\bf x})=$}
\put(110,130){001}
\put(127,133){\vector(1,0){15}}\put(145,130){001}

\put(145,90){210}

\put(30,90){$\pi^3({\bf x})=$}
\put(70,90){111}
\put(87,93){\vector(1,0){55}}
\put(110,70){100}\put(118,80){\vector(0,1){46}}

\put(90,50){103}\put(107,53){\vector(1,0){35}}\put(145,50){013}

\put(-10,50){$\pi^2({\bf x})=$}
\put(30,50){112}
\put(47,53){\vector(1,0){41}}
\put(70,30){102}\put(78,40){\vector(0,1){46}}
\put(110,30){010}\put(118,40){\vector(0,1){26}}

\put(50,10){204} \put(67,13){\vector(1,0){20}}
\put(90,10){312}\put(145,10){303}\put(107,13){\vector(1,0){35}}

\put(-50,10){$\pi^1({\bf x})=$}
\put(-10,10){114}
\put(-10,-20){114}\put(-3,-10){\vector(0,1){15}}\put(7,13){\vector(1,0){41}}
\put(30,-20){202}\put(38,-10){\vector(0,1){56}}
\put(70,-20){210}\put(78,-10){\vector(0,1){36}}
\put(110,-20){001}\put(118,-10){\vector(0,1){36}}

\end{picture}
\end{equation*}
They agree with those obtained in Example \ref{ex:srn}. 
\end{example}

\subsection{Queueing type algorithm for $\pi$ and TAZRP embedding}\label{ss:mst}
Here we explain more human-friendly perceivable algorithm
to derive the $n$-TAZRP state $\pi({\bf x})\in S({\bf m})$ in multiset representation
from an $n$-line state 
${\bf x} = {\bf x}^1\otimes \cdots \otimes {\bf x}^n \in B({\bf m})$.
We illustrate it along the same example 
${\bf x}=114\otimes 202 \otimes 210 \otimes 001$ and 
$\pi({\bf x})= (3,3,1124)$ with $(n,L)=(4,3)$ as 
Example \ref{ex:tsk} and \ref{ex:srn}.

First draw the dot diagram (I) of ${\bf x} = {\bf x}^1\otimes \dots \otimes {\bf x}^n$
as in (\ref{hrk}).
\begin{equation*}
\begin{picture}(400,110)(-35,-10)

\put(24,90){(I)}
\multiput(0,0)(20,0){4}{\put(0,0){\line(0,1){80}}}
\multiput(0,0)(0,20){5}{\put(0,0){\line(1,0){60}}}

\put(-30,66){${\bf x}^4$}
\put(-30,46){${\bf x}^3$}
\put(-30,26){${\bf x}^2$}
\put(-30,6){${\bf x}^1$}

\put(48,66){$\bullet$}
\put(4,46){$\bullet$}\put(12,46){$\bullet$}
\put(28,46){$\bullet$}
\put(4,26){$\bullet$}\put(12,26){$\bullet$}
\put(44,26){$\bullet$}\put(52,26){$\bullet$}
\put(8,6){$\bullet$}\put(28,6){$\bullet$}
\put(44,11){$\bullet$}\put(52,11){$\bullet$}
\put(44,3){$\bullet$}\put(52,3){$\bullet$}
\put(50,68){\line(0,-1){20}}\put(50,48){\line(-1,0){20}}
\put(30,48){\line(0,-1){20}}\put(30,28){\line(-1,0){14}}
\put(14,28){\line(0,-1){14}}\put(14,14){\line(-1,0){18}}
\put(-13.5,13.5){...}
\put(53,13.5){\line(1,0){11}}\put(65,13){...}
\put(41,-10){$\phantom{112}4$}

\put(100,0){
\put(23,90){(II)}
\multiput(0,0)(20,0){4}{\put(0,0){\line(0,1){80}}}
\multiput(0,0)(0,20){5}{\put(0,0){\line(1,0){60}}}
\put(4,46){$\bullet$}\put(12,46){$\bullet$}
\put(4,26){$\bullet$}
\put(44,26){$\bullet$}\put(52,26){$\bullet$}
\put(8,6){$\bullet$}\put(28,6){$\bullet$}
\put(44,10.5){$\bullet$}
\put(44,3){$\bullet$}\put(52,3){$\bullet$}

\put(6.5,47){\line(0,-1){11}}\put(6.5,36){\line(-1,0){10}}\put(-13,35.5){...}
\put(14.5,47){\line(0,-1){14}}\put(14.5,33){\line(-1,0){18}}\put(-13,32.5){...}
\put(46.5,27){\line(0,1){9}}\put(46.5,36){\line(1,0){18}}\put(65,35.5){...}
\put(54.5,27){\line(0,1){6}}\put(54.5,33){\line(1,0){10}}\put(65,32.5){...}
\put(10.5,7){\line(0,1){16}}\put(10.5,23){\line(1,0){36}}
\put(46.5,23){\line(0,1){4}}
\put(54.5,27){\line(0,-1){10}}\put(54.5,17){\line(-1,0){24}}
\put(30.5,17){\line(0,-1){8}}

\put(8,-10){3}\put(28,-10){3}
\put(41,-10){$\phantom{112}4$}
}

\put(200,0){
\put(21,90){(III)}
\multiput(0,0)(20,0){4}{\put(0,0){\line(0,1){80}}}
\multiput(0,0)(0,20){5}{\put(0,0){\line(1,0){60}}}
\put(4,26){$\bullet$}
\put(44,11){$\bullet$}
\put(44,3){$\bullet$}\put(52,3){$\bullet$}
\put(6,28){\line(0,-1){14}}\put(6,14){\line(-1,0){10}}\put(-13,13.6){...}
\put(45,13.6){\line(1,0){19}}\put(64.3,13.2){...}

\put(8,-10){3}\put(28,-10){3}
\put(41,-10){$\phantom{11}24$}
}

\put(300,0){
\put(21,90){(IV)}
\multiput(0,0)(20,0){4}{\put(0,0){\line(0,1){80}}}
\multiput(0,0)(0,20){5}{\put(0,0){\line(1,0){60}}}
\put(44,3){$\bullet$}\put(52,3){$\bullet$}
\put(8,-10){3}\put(28,-10){3}
\put(41,-10){$1124$}
}

\end{picture}
\end{equation*}
Do the following procedure for $a=n, n-1, \ldots, 1$ in this order.

\begin{quote}
Draw an $H$-line from each dot in ${\bf x}^a$ 
by applying the NY-rule repeatedly until it reaches 
some dot in the bottom row which belongs to ${\bf x}^1$.
Record the captured dots in  ${\bf x}^1$ as particle of species $a$ and 
erase all the dots connected by the $H$-lines.
\end{quote}
The array of multiset of particles gives the $n$-TAZRP configuration,
i.e., the image of $\pi$. 

For the procedure with $a=1$, no $H$-line needs to be drawn and one just assigns $1$ 
to the existing dots.
For each $a$,
the $H$-lines depend on the order of picking the dots in ${\bf x}^a$
but the final output of the procedure does not depend on it 
thanks to Remark \ref{re:mri} (1).
In the present example, 
one gets (I)$\rightarrow$ (II)$\rightarrow$ (III)$\rightarrow$ (IV)
as the procedure is executed for $a=4,3,2,1$.
(We omitted the empty diagram produced by the last $a=1$ case.)

The equivalence of the above algorithm for $\pi$ and the definition (\ref{sae}) 
is shown easily if one notices that the composition of 
the combinatorial $R$ to get $\pi^a({\bf x})$ as in Example \ref{ex:srn}
is described as a repeated application of the NY-rule in a dot diagram.

One can further reformulate the algorithm inductively with respect to $n$ so as to produce 
$n$-TAZRP states from $(n-1)$-TAZRP states and $B_{\ell_1}$.
Consider the above example for $n=4$.
The ${\bf x}^2 \otimes{\bf x}^3 \otimes{\bf x}^4$ 
without the bottom row ${\bf x}^1$ is an $3$-line state 
whose projection is the $3$-TAZRP state $(13,\emptyset,22)$.
Increase the species uniformly by 1 to get $(24,\emptyset,33)$\footnote{
This minor extra process can be avoided 
if the species $a$ is replaced by $n+1-a$ everywhere.} and (V) below.
\begin{equation*}
\begin{picture}(300,70)(-60,-10)

\put(22,50){(V)}
\multiput(0,0)(20,0){4}{\put(0,0){\line(0,1){40}}}
\multiput(0,0)(0,20){3}{\put(0,0){\line(1,0){60}}}

\put(-52,26){3-TAZRP}
\put(-30,6){${\bf x}^1$}

\put(4,26){$2$}\put(12,26){$4$}
\put(44,26){$3$}\put(52,26){$3$}
\put(8,6){$\bullet$}\put(28,6){$\bullet$}
\put(44,11){$\bullet$}\put(52,11){$\bullet$}
\put(44,3){$\bullet$}\put(52,3){$\bullet$}

\put(160,0){
\put(21,50){(VI)}
\multiput(0,0)(20,0){4}{\put(0,0){\line(0,1){40}}}
\multiput(0,0)(0,20){3}{\put(0,0){\line(1,0){60}}}

\put(-52,26){3-TAZRP}

\put(-52,-9){4-TAZRP}

\put(4,26){$2$}\put(12,26){$4$}
\put(44,26){$3$}\put(52,26){$3$}
\put(8,6){$\bullet$}\put(28,6){$\bullet$}
\put(44,11){$\bullet$}\put(52,11){$\bullet$}
\put(44,3){$\bullet$}\put(52,3){$\bullet$}

\put(-7,-3){\put(14,28){\line(0,-1){9}}\put(14,19){\line(-1,0){14}}
\put(-9.5,18.5){...}}

\put(1,-3){\put(14,28){\line(0,-1){12}}\put(14,16){\line(-1,0){22}}
\put(-17.5,16.5){...}}

\put(10,8.5){\line(1,0){14}}\put(24,8.5){\line(0,1){20}}\put(24,28.5){\line(1,0){19}}

\put(54,25){\line(0,-1){8}}\put(54,17){\line(-1,0){23.5}}\put(30.5,17){\line(0,-1){10}}

\put(53,13.5){\line(1,0){18}}\put(72,13){...}
\put(53,5.5){\line(1,0){13}}\put(66,5.5){\line(0,1){4.5}}
\put(66,10){\line(1,0){5}}
\put(72,10){...}

\put(8,-10){3}\put(28,-10){3}
\put(41,-10){$1124$}
}

\end{picture}
\end{equation*}
Regard the 3-TAZRP state as a collection of particles with species 
$a=2,3, \ldots, n$ ($n=4$ in our ongoing illustration).
Draw $H$-lines from them to the dots in ${\bf x}^1$ 
{\em in the order $a=n,n-1,\ldots, 2$}  according to the NY-rule.
For each $a$, the set of dots linked with the particles of species $a$ 
is independent of the order of picking the particles due to Remark \ref{re:mri} (1).
Regard the dots in ${\bf x}^1\in B_{\ell_1}$ connected to the particles $a$ 
also as particles $a$. 
Dots in ${\bf x}^1\in B_{\ell_1}$ not captured by any $H$-line is regard as particle $1$. 
Then the bottom row gives the $n$-TAZRP state.
See (VI).
In general the procedure (V) $\rightarrow$ (VI) to get $n$-TAZRP states from 
$B_{\ell_1}$ and $(n-1)$-TAZRP states (with species from $[2,n]$)
is a modification of the NY-rule in that the one set of the dots is 
assigned with species and larger species ones have the 
priority to emanate an $H$-line.
We call it the {\em TAZRP embedding rule}.
For $n$-line states ${\bf x}={\bf x}^1\otimes \cdots \otimes {\bf x}^n
\in B({\bf m})$, set
\begin{align*}
\{(k-1)-\text{TAZRP states}\}\times B_{\ell_{n-k+1}}\;
&\longrightarrow \{k-\text{TAZRP states}\}\\
({\boldsymbol \sigma}, {\bf x}^{n-k+1}) \quad
&\longmapsto \;\;
\Phi_{k,{\bf x}^{n-k+1}}({\boldsymbol \sigma})\qquad
(k \in [2,n]).
\end{align*}
Here $\Phi_{k,{\bf x}^{n-k+1}}$ signifies the TAZRP embedding rule;
one increases the species of particles in ${\boldsymbol \sigma}$ by 1 and 
uses the resulting state as the top row and ${\bf x}^{n-k+1}$ as the bottom row
in the diagram like (VI) to produce a $k$-TAZRP state.
Regarding ${\bf x}^n$ as a $1$-TAZRP state naturally, we have
\begin{align*}
\pi({\bf x}^1\otimes \cdots \otimes {\bf x}^n)
= \Phi_{n,{\bf x}^1}\circ  \Phi_{n-1,{\bf x}^2}\circ
\cdots \circ  \Phi_{2,{\bf x}^{n-1}}({\bf x}^n).
\end{align*}
This construction is a TAZRP analogue of the nested Bethe ansatz
which diagonalizes the Hamiltonian of integrable $sl(n)$ spin chains 
inductively on $n$. 

The algorithms explained in this subsection are the TAZRP counterpart of the multiple 
queueing process introduced for the TASEP \cite{FM}.
Its reformulation by crystals and combinatorial $R$ in \cite{KMO} 
is quite parallel with the content here.
They offer a unified perspective into the 
multispecies TAZRP and TASEP on the periodic chain $\Z_L$ 
as the sister models corresponding to the  
symmetric and the antisymmetric tensor representations of 
the quantum affine algebra $U_q(\widehat{sl}_L)$ at $q=0$.

\subsection{Induced dynamics}

We extend the map $\pi$ (\ref{sae}) naturally 
to that on the space of states for $n$-line process and $n$-TAZRP
$\pi: \bigoplus_{{\bf x} \in B({\bf m})} \C|{\bf x}\rangle
\rightarrow \bigoplus_{{\boldsymbol \sigma} \in S({\bf m})}
\C|{\boldsymbol \sigma}\rangle$ by linearity and 
$\pi(|{\bf x}\rangle) = |\pi({\bf x})\rangle$.
By the construction $\pi$ is surjective.
Recall that $\tau^k_i$ is defined around (\ref{kgc2}) 
and $T^k_{a,i}$ by (\ref{skb2}) and (\ref{skb:f}).

\begin{proposition}\label{pr:sir}
For $(i,a,k) \in \Z_L \times [1,n] \times \Z_{\ge 1}$, 
set $\tilde{\tau}^k_{i,a} = \tau^k_i$ for $a=1$ and 
$\tilde{\tau}^k_{i,a} = 1$ for $a \in [2,n]$.
Then the following diagram is commutative:
\begin{equation*}
\begin{CD}
B({\bf m}) @>{T^k_{i,a}}>> B({\bf m})\\
@V{\pi}VV@VV{\pi}V\\
S({\bf m}) @>>{\tilde{\tau}^k_{i,a}}> S({\bf m})
\end{CD}
\end{equation*}
\end{proposition}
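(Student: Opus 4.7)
The proof naturally splits into the two cases $a\ge 2$ (where $\tilde{\tau}^k_{i,a}=1$, so the claim is $\pi\circ T^k_{i,a}=\pi$) and $a=1$ (where the claim is $\pi\circ T^k_{i,1}=\tau^k_i\circ\pi$). Setting ${\bf y}=T^k_{i,a}({\bf x})$, a first observation from (\ref{nzm}) is that ${\bf y}^b={\bf x}^b$ for $b<a$, so the definition (\ref{srn}) immediately yields $\pi^b({\bf y})=\pi^b({\bf x})$ for all $b<a$. This handles the ``lower'' portion of both cases.

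For Case $a\ge 2$, the task is to upgrade this to $\pi^b({\bf y})=\pi^b({\bf x})$ for all $b\ge a$ as well. I plan to argue diagrammatically using the corner-transfer-matrix picture (\ref{kan:t}). The explicit formula (\ref{syr}) shows that the column-$(i,i+1)$ slice of $T^k_{i,a}$ is exactly the redistribution dictated by iterated NY-rules: the inequalities $x^{b-1}_i<x^b_{i+1}$ for $b\in[a,c]$ together with $x^c_i\ge x^{c+1}_{i+1}$ are precisely the H-line pairing conditions of a combinatorial $R$ applied to rows $a,\ldots,c$ restricted to columns $i,i+1$, and the ${\bf y}$ produced by (\ref{nzm}) corresponds to sliding the $k$-th such H-line ``one notch.'' Pushing this local modification through the $R$-vertex network of (\ref{kan:t}) via repeated use of the Yang-Baxter equation will show that the outputs $\pi^b({\bf x})$ on the upper-left diagonal are unaffected.

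Case $a=1$ I would handle through the queueing/TAZRP-embedding description of Section \ref{ss:mst}, in which $\pi({\bf x})$ is obtained by assigning species labels to the dots of ${\bf x}^1$ according to the H-lines emanating from rows $2,\ldots,n$. The operation $T^k_{i,1}$ moves $k$ dots of row $1$ from column $i+1$ to column $i$, with the cascading changes of (\ref{syr}) above. The same YBE/NY-rule bookkeeping as in Case $a\ge 2$ shows that these $k$ relocated row-$1$ dots are paired, under the NY-rule, to precisely the dots carrying species labels $\beta_1,\ldots,\beta_k$---the $k$ smallest species at site $i+1$ in $\pi({\bf x})$. This matches exactly the action of $\tau^k_i$ on $\pi({\bf x})$, completing the claim.

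The principal obstacle is the combinatorial verification in Case $a\ge 2$: one must show that the specific dot redistribution (\ref{syr}), parametrized by the ``cascade length'' $c-a$, is exactly absorbed by the network of combinatorial $R$s via Yang-Baxter so that no net change appears at the output edges of (\ref{kan:t}). Once that is in hand, Case $a=1$ follows by tracing the same manipulation down to the species labels on the bottom row. An alternative I would keep in reserve is induction on $n$ via the iterative formula $\pi=\Phi_{n,{\bf x}^1}\circ\cdots\circ\Phi_{2,{\bf x}^{n-1}}$; the subtlety there is that $T^k_{i,1}$ does not restrict to an $(n-1)$-line operation purely (the formula for $y^2_i$ involves $x^1_i$), so the YBE route is likely cleaner.
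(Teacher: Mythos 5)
Your decomposition into the cases $a\ge 2$ (show $\pi\circ T^k_{i,a}=\pi$) and $a=1$ (show $\pi\circ T^k_{i,1}=\tau^k_i\circ\pi$), and the observation that $\pi^b({\bf y})=\pi^b({\bf x})$ for $b<a$, are both correct and match the shape of what must be proved. But the central step of your plan is not an argument: the Yang--Baxter equation compares two factorizations of the \emph{same} map applied to the \emph{same} input, so ``pushing the local modification through the $R$-vertex network via repeated use of the Yang--Baxter equation'' does not by itself relate $\pi^b(T^k_{i,a}({\bf x}))$ to $\pi^b({\bf x})$ --- these are images of two \emph{different} elements of $B({\bf m})$. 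You would need an additional commutation statement between the vertical two-column move (\ref{nzm}) and the horizontal combinatorial $R$'s of (\ref{kan:t}), and that statement is exactly the content of the proposition; you have flagged it as ``the principal obstacle'' but not supplied it. The same deferral occurs in your Case $a=1$, where the claim that the $k$ relocated dots of ${\bf x}^1$ inherit precisely the species $\beta_1,\ldots,\beta_k$ rests on ``the same YBE/NY-rule bookkeeping.''

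The route you kept in reserve is in fact the paper's proof, and your reason for discarding it is mistaken: $T^k_{i,a}$ \emph{does} restrict to a well-defined move $T^{k'}_{i,a'}$ on ${\bf x}^2\otimes\cdots\otimes{\bf x}^n$, only with shifted parameters (for $a\ge2$ one has $a'=a-1$, $k'=k$; for $a=1$ with cascade reaching row $2$ one has $a'=1$, $k'=x^2_{i+1}-x^1_i$, and the identity otherwise). The paper inducts on $n$, applies the induction hypothesis to ${\bf x}^{\ge2}$, and then reduces everything to tracking the $H$-lines between the $(n-1)$-TAZRP state and the bottom row ${\bf x}^1$ in the TAZRP embedding diagram. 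The decisive combinatorial fact, checked there in three cases according to whether the top-row particles and the bottom-row dots actually move ($t\ge1$ vs.\ $t=0$, $w\ge1$ vs.\ $w=0$), is that the $H$-lines leaving box $i$ to the left and entering box $i+1$ from the right are unchanged, so only the local reassignment of species inside the two-column window differs, and it differs exactly as $\tilde{\tau}^k_{i,a}$ prescribes. This is the piece of content your sketch is missing; without it (or a worked-out substitute for your YBE absorption claim) the proof is incomplete.
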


\begin{proof}
We invoke the induction on $n$.
For $n=1$ the claim is obvious.
Assume the claim for $(n-1)$-TAZRP.
We utilize the description of $\pi$ by 
the TAZRP embedding rule explained in the end of 
Section \ref{ss:mst}. 
Suppose $T^k_{i,a}: {\bf x}^1\otimes {\bf x}^{\ge 2} 
\mapsto {\bf y}^1\otimes {\bf y}^{\ge 2} $ with
${\bf x}^{\ge 2}, {\bf y}^{\ge 2} \in 
B_{\ell_2}\otimes \cdots \otimes B_{\ell_n}$.
Then $T^{k'}_{i,a'}({\bf x}^{\ge 2}) = {\bf y}^{\ge 2}$ holds
for $(n-1)$-TAZRP states for some $a'$ and $k'$.
Set $\pi({\bf x}^{\ge 2}) +1 = 
(\ldots, \alpha_1\ldots \alpha_r,\beta_1\ldots \beta_s, \ldots)$,
where ``$+1$" stands for the uniform increment of the species by 1, and 
$\alpha_1\ldots \alpha_r$ and $\beta_1\ldots \beta_s$
are the resulting $(n-1)$-TAZRP local states (with species from $[2,n]$) 
at the $i$-th and the $(i+1)$-th site, respectively.
In particular $2 \le \beta_1 \le \cdots \le \beta_s \le n$.
Set ${\bf x}^1=(\ldots, u,v,\ldots)$ similarly exhibiting 
the $i$-th and the $(i+1)$-th components. 
By the induction assumption we know 
$\pi({\bf y}^{\ge 2}) +1 = 
(\ldots, \alpha_1\ldots \alpha_r\beta_1\ldots \beta_t, 
\beta_{t+1}\ldots \beta_s,\ldots)$ for some $t \in [0,s]$.
From the definition of $T^k_{i,a}$ it follows that 
${\bf y}^1=(\ldots, u+w,v-w,\ldots)$ for some $w \in [0,v]$.
Now the part of the diagram (V) in Section \ref{ss:mst}
corresponding to the $i$-th and the $(i+1)$-th columns looks as follows.
\begin{equation*}
\begin{picture}(300,140)(-40,-20)

\put(-80,72){\small $(n-1)$-TAZRP}
\put(-30,20){${\bf x}^1$}\put(278,20){${\bf y}^1$}
\put(-58,-14){\small $n$-TAZRP}

\put(23,105){$i$}\put(65,105){$i+1$}
\multiput(0,0)(50,0){3}{\put(0,0){\line(0,1){100}}}
\multiput(0,0)(0,50){3}{\put(0,0){\line(1,0){100}}}

\put(9,71){\small $\alpha_1\ldots\alpha_r$}

\put(58,71){\small $\beta_{1} \ldots \beta_s$}

\put(12,17){$\overbrace{\bullet \cdots \bullet}^u$}

\put(57,17){$\overbrace{\bullet \cdots \cdot \cdot \,\bullet}^v$}

\put(8,-13){$\gamma_1\ldots \gamma_u$}
\put(58,-14){$\delta_1\ldots \delta_v$}

\put(122,48){$\overset{T^k_{i,a}}{\longmapsto}$}

\put(160,0){
\put(23,105){$i$}\put(65,105){$i+1$}
\multiput(0,0)(50,0){3}{\put(0,0){\line(0,1){100}}}
\multiput(0,0)(0,50){3}{\put(0,0){\line(1,0){100}}}

\put(9,83){\small $\alpha_1\ldots\alpha_r$}
\put(10,63){\small $\beta_1 \ldots \beta_t$}

\put(54,71){\small $\beta_{t+1} \ldots \beta_s$}

\put(12,28){$\overbrace{\bullet \cdots \bullet}^u$}
\put(14,6){$\overbrace{\bullet \cdot \cdot \,\bullet}^w$}

\put(64,17){$\overbrace{\bullet \cdot \cdot \,\bullet}^{v-w}$}

\put(1,-13){$\gamma'_1\ldots \gamma'_{u+w}$}
\put(56,-13){$\delta'_{w+1}\ldots \delta'_{v}$}

}
\end{picture}
\end{equation*}
We assume 
$2 \le \delta_1 \le \cdots \le \delta_v \le n$.
We are to draw $H$-lines from the particles in the top row
to some dot in the bottom 
according to the TAZRP embedding rule in Section \ref{ss:mst}.
To clarify the argument below, 
we assume that the $H$-line from a particle in the top row first goes down vertically,
make $90^\circ$ right turn in the box underneath and proceeds horizontally
to the left periodically until it captures a yet unconnected dot in the bottom row.
Thus in the bottom row of the above diagram,
there are $H$-lines coming from the right of the $(i+1)$-th box 
seeking the target dots and also the outgoing ones to the left of the $i$-th box. 
The new $n$-TAZRP particles $\gamma'_j$ and $\delta'_j$ can be 
related to the previous ones $\gamma_j$ and $\delta_j$  
by inspecting the influence of the changes of the diagram on the $H$-lines.
There are three cases to consider.

Case 1. $t\ge 1$ and $w=0$.
From (\ref{syr}) this happens only if $a=2$ and $u\le s-t$ with $k=t$.
In the left diagram,  the $H$-lines from $\beta_{s-u+1},\ldots, \beta_s$
are captured by the $u$ dots in the bottom left box.
The other $H$-lines from $\beta_1,\ldots, \beta_{s-u}$ are outgoing to the left of it. 
This situation is the same as the right diagram, showing that the relocation of 
the particles $\beta_1, \ldots ,\beta_t$ causes no change in the 
result of the TAZRP embedding rule. 
Thus we find $\gamma'_j = \gamma_j$ and $\delta'_j = \delta_j$
in agreement with $\tilde{\tau}^k_{i,2}=1$.

Case 2. $t=0$ and $w\ge 1$.
This happens only if $a=1$ and $u\ge s$ with $k=w$.
In the both diagram, the $s$ $H$-lines from $\beta_1,\ldots, \beta_s$ are 
all captured by the dots in the bottom left box.
Thus the $H$-lines outgoing to the left of the $i$-th box is the same.
It implies that the $H$-lines coming from the right of 
the $(i+1)$-th box are also unchanged. 
Since the particles with larger species have priority to find the partner dots
in the TAZRP embedding rule, 
we have $\delta'_j = \delta_j$ for $j \in [w+1,v]$ and 
$\{\gamma'_j\} = \{\gamma_j\}\cup \{\delta_1,\ldots, \delta_w\}$.
This agrees with (\ref{kgc2}) for $\tau^k_i = \tilde{\tau}^k_{i,a=1}$.

Case 3. $t\ge 1$ and $w \ge 1$.
This happens only if $a=1$ and $s-t=u$ with $k=w$. 
In the both diagram, the $u$ $H$-lines from $\beta_{t+1},\ldots, \beta_s$
are absorbed into the dots in the bottom left box, and the other ones 
from $\beta_1,\ldots, \beta_t$ are outgoing to the left of the $i$-th box.
Then the rest of the argument is the same as Case 2.
\end{proof}

\begin{example}\label{ex:mst2}
Consider the maps in Example \ref{ex:mho}.
According to the definition (\ref{skb2}),
we have $T^1_{1,1}({\bf x}), \ldots, T^1_{3,1}({\bf x}) \in B(2,1,2,1)$ for 
${\bf x}$ given in Example \ref{ex:tsk}.
Their image by $\pi$ are given by
\begin{align*}
\pi(T^1_{1,1}({\bf x})) &= |33,\emptyset,1124\rangle,\quad
\pi(T^1_{2,1}({\bf x}))= |3,13,124\rangle,\quad \;\;
\pi(T^2_{2,1}({\bf x}))= |3,113,24\rangle,\\
\pi(T^3_{2,1}({\bf x}))&= |3,1123,4\rangle,\quad\;\;
\pi(T^4_{2,1}({\bf x}))= |3,11234,\emptyset\rangle,\quad
\pi(T^1_{2,2}({\bf x}))= |3,3,1124\rangle,\\
\pi(T^1_{3,1}({\bf x}))&= |\emptyset,3,11234\rangle.
\end{align*}
By the result of Example \ref{ex:srn} on the same ${\bf x}$,
the claim of Proposition \ref{pr:sir} is rephrased as
\begin{align*}
\tau^1_1|3,3,1124\rangle &=  |33,\emptyset,1124\rangle,\quad
\tau^1_2|3,3,1124\rangle = |3,13,124\rangle,\quad
\tau^2_2|3,3,1124\rangle = |3,113,24\rangle,\\
\tau^3_2|3,3,1124\rangle &=  |3,1123,4\rangle,\quad\;\;
\tau^4_2|3,3,1124\rangle= |3,11234,\emptyset\rangle,\quad
1 |3,3,1124\rangle = |3,3,1124\rangle,\\
\tau^1_3|3,3,1124\rangle &= |\emptyset,3,11234\rangle.
\end{align*}
These relations agree with the definition (\ref{kgc2}).
\end{example}

Proposition \ref{pr:sir} tells that the dynamics of $n$-TAZRP is 
exactly the one that is induced from $n$-line process via the map $\pi$.
Now we state the first main result of the article.

\begin{theorem}[Steady state of $n$-TAZRP]\label{th:szk}
The steady state $|\bar{P}_L({\bf m})\rangle$ 
of $n$-$\mathrm{TAZRP}$ in the sector $S({\bf m})$ is the 
image of the $n$-line process steady state $|\Omega({\bf m})\rangle$ 
in Corollary \ref{akn:u} by $\pi$. 
Namely,
\begin{align*}
|\bar{P}_L({\bf m})\rangle = 
\pi |\Omega({\bf m})\rangle = \sum_{{\bf x} \in B({\bf m})}
|\pi({\bf x})\rangle.
\end{align*}
\end{theorem}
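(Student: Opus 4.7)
The plan is to combine the intertwining property of $\pi$ from Proposition \ref{pr:sir} with the uniformity of the $n$-line process steady state from Corollary \ref{akn:u}. Everything reduces to a one-line computation with the Markov matrices once the bookkeeping is set up.

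First I would establish the operator identity
\begin{align*}
\pi \, H_{\mathrm{LP}} = H_{\mathrm{TAZRP}}\, \pi
\end{align*}
as maps on $\bigoplus_{{\bf x} \in B({\bf m})} \C|{\bf x}\rangle$. Using (\ref{skb:t}) and Proposition \ref{pr:sir},
\begin{align*}
\pi\, H_{\mathrm{LP}} = \sum_{i,a,k}\pi(T^k_{i,a}-1) = \sum_{i,a,k}(\tilde{\tau}^k_{i,a}-1)\pi = \sum_{i,k}(\tau^k_i-1)\pi + \sum_{i,k}\sum_{a=2}^n 0 = H_{\mathrm{TAZRP}}\,\pi,
\end{align*}
where the contributions from $a\ge 2$ drop because $\tilde{\tau}^k_{i,a}=1$ there, and the $a=1$ contribution reproduces the expression (\ref{ymi:s}) for $H_{\mathrm{TAZRP}}$. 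The sums are finite on every basis vector thanks to (\ref{skb:f}) and the finiteness statement after (\ref{ymi:s}).

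Next, applying this identity to $|\Omega({\bf m})\rangle$ and invoking Corollary \ref{akn:u} gives
\begin{align*}
H_{\mathrm{TAZRP}}\,\pi|\Omega({\bf m})\rangle = \pi\, H_{\mathrm{LP}}|\Omega({\bf m})\rangle = 0,
\end{align*}
so $\pi|\Omega({\bf m})\rangle$ is a stationary vector. Since $\pi$ is surjective onto $S({\bf m})$ (as noted after (\ref{sae}) and verified by Lemma \ref{le:sae} together with (\ref{ask2})), the vector $\pi|\Omega({\bf m})\rangle = \sum_{{\bf x} \in B({\bf m})}|\pi({\bf x})\rangle$ has strictly positive integer coefficients on every basis element of $S({\bf m})$, hence is nonzero. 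By the uniqueness (up to scalar) of the steady state in the sector $S({\bf m})$ stated before Example \ref{ex:LL}, it must be proportional to $|\bar{P}_L({\bf m})\rangle$.

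Finally, I would check that the normalization is precisely the one fixed by the convention $\sum_{{\boldsymbol \sigma}\in S({\bf m})}\mathbb{P}({\boldsymbol \sigma})=\#B({\bf m})$. Summing coefficients on both sides,
\begin{align*}
\sum_{{\boldsymbol \sigma}\in S({\bf m})}\#\pi^{-1}({\boldsymbol \sigma}) = \#B({\bf m}),
\end{align*}
so the proposed $\pi|\Omega({\bf m})\rangle$ satisfies the convention on the nose and equals $|\bar{P}_L({\bf m})\rangle$. The only nontrivial ingredient is the intertwining in step one, and its content is essentially already packaged into Proposition \ref{pr:sir}; the mild subtlety is just to track that the auxiliary terms $a\ge 2$ in $H_{\mathrm{LP}}$ contribute zero under $\pi$, which is what makes the multi-index sum in $H_{\mathrm{LP}}$ collapse to the single species-1 sum in $H_{\mathrm{TAZRP}}$.
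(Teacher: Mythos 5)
Your proposal is correct and follows essentially the same route as the paper: derive the intertwining relation $\pi H_{\mathrm{LP}} = H_{\mathrm{TAZRP}}\,\pi$ from Proposition \ref{pr:sir}, apply it to $|\Omega({\bf m})\rangle$ using Corollary \ref{akn:u}, and conclude by uniqueness of the steady state. Your extra checks (nonvanishing via surjectivity of $\pi$ and the normalization convention) are worthwhile details the paper leaves implicit.
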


\begin{proof}
By Proposition \ref{pr:sir}, the Markov matrix of $n$-TAZRP (\ref{ymi:s})
is expressed as\\
$H_{\mathrm{TAZRP}} =\sum_{i\in \Z_L} \sum_{a=1}^n\sum_{k \ge 1}
(\tilde{\tau}^k_{i,a}-1)$.
Moreover the intertwining relation 
\begin{align*}
\pi H_{\mathrm{LP}} = H_{\mathrm{TAZRP}}\,  \pi
\end{align*}
holds,
where $H_{\mathrm{LP}}$ is the Markov matrix of $n$-line process 
(\ref{skb:t}).
Thus we have $H_{\mathrm{TAZRP}} |\bar{P}_L({\bf m})\rangle = 0$ from Corollary \ref{akn:u}.
This proves the claim thanks to the uniqueness of the steady state.
\end{proof}
Theorem \ref{th:szk}  is an analogue of the combinatorial construction of the 
$n$-TASEP steady state \cite{FM}
whose quantum group theoretical origin was uncovered in \cite{KMO}.

Before closing the section 
we note a simple consequence of Theorem \ref{th:szk}.
Consider the most localized configuration of the $n$-TAZRP 
$(\emptyset, \ldots, \emptyset, all)$ and its cyclic permutations
which have the same probability.
Here $all$ means the assembly of all the $\ell_1$ particles in 
the sector $S({\bf m})$. 
See (\ref{mkr:akci}) for $\ell_a$.
It is easy to see 
\begin{align*}
\pi^{-1}(\emptyset, \ldots, \emptyset, all)= (0,\ldots,0,\ell_1)
\otimes B_{\ell_2} \otimes \cdots \otimes B_{\ell_n}.
\end{align*}
Therefore we get 
\begin{align}\label{kyk:sn}
{\mathbb P}(\emptyset, \ldots, \emptyset, all) 
 = \prod_{a=2}^n \binom{L-1+\ell_a}{\ell_a}
 \end{align}
in agreement with Example \ref{ex:LL}.
This is the {\em largest} probability in the sector.
Similarly we find
\begin{align*}
{\mathbb P}(\ast,\overbrace{\emptyset,\ldots, \emptyset}^d,
\sigma_{d+1},\ldots, \sigma_L)=
\binom{d+\ell_2}{\ell_2}\prod_{a=3}^n \binom{L-1+\ell_a}{\ell_a}
\end{align*}
if $\sigma_{d+1} \neq \emptyset$ and 
$\sigma_{d+1},\ldots, \sigma_L$ contain particle of species $1$ only with 
$\ast$ being the rest.

\section{Formulae for steady state probability}\label{sec:f}
\subsection{Crystalline corner transfer matrix} 
Recall that the steady state of $n$-TAZRP on $\Z_L$ in the sector $S({\bf m})$ 
has the form (\ref{ymi:cu}).
Theorem \ref{th:szk} tells that the steady state probability therein is expressed as
\begin{align}\label{szk:i}
\mathbb{P}({\boldsymbol \sigma} )
= \#\{{\bf x} \in B({\bf m})\mid \pi({\bf x}) = {\boldsymbol \sigma} \}.
\end{align}

\begin{example}\label{ex:szk}
Consider $3$-TAZRP on $\Z_3$ in the sector $S(1,2,1)$.
Example \ref{ex:LL} tells ${\mathbb P}(1,2,23) = 5$,
where $(1,2,23)$ is multiset representation.
In fact there are exactly 5 elements in $B(1,2,1)$ which are mapped 
by $\pi$ to $(1,2,23)$.
In the notation (\ref{hrk}) they are given by
\begin{align*}
\begin{pmatrix}
 1 & 0 & 0 \\
 3 & 0 & 0 \\
 1 & 1 & 2
\end{pmatrix},
\quad
\begin{pmatrix}
 0 & 1 & 0 \\
 3 & 0 & 0 \\
 1 & 1 & 2
\end{pmatrix},
\quad
\begin{pmatrix}
 0 & 0 & 1 \\
 3 & 0 & 0 \\
 1 & 1 & 2
\end{pmatrix},
\quad
\begin{pmatrix}
 0 & 1 & 0 \\
 2 & 0 & 1 \\
 1 & 1 & 2
\end{pmatrix},
\quad
\begin{pmatrix}
 0 & 0 & 1 \\
 2 & 0 & 1 \\
 1 & 1 & 2
\end{pmatrix}.
\end{align*}
\end{example}

From (\ref{sra}) we see 
$\varphi \circ \pi=\pi^1\otimes \cdots \otimes \pi^n$.
Thus the condition  
$\pi({\bf x}) = {\boldsymbol \sigma}$ in (\ref{szk:i}) 
is equivalent to 
$\pi^1({\bf x}) \otimes \cdots \otimes \pi^n({\bf x})  = 
\varphi({\boldsymbol \sigma}) = 
\varphi^1({\boldsymbol \sigma}) \otimes \cdots \otimes
\varphi^n({\boldsymbol \sigma})$.
Therefore (\ref{szk:i}) is rewritten as
\begin{align}\label{szk:t}
\mathbb{P}({\boldsymbol \sigma} )
= \#\{{\bf x} \in B({\bf m})\mid \pi^a({\bf x}) = \varphi^a({\boldsymbol \sigma}), 
\forall a \in [1,n] \}.
\end{align}

To proceed we find it convenient to 
generalize the vertex diagram (\ref{ask3}) for 
$R=R_{\ell,m}$ naturally to arbitrary edge states 
${\bf a} = (a_1,\ldots, a_L),
{\bf i} = (i_1,\ldots, i_L) \in B_\ell$ and 
${\bf b} = (b_1,\ldots, b_L),
{\bf j} = (j_1,\ldots, j_L) \in B_m$ as
\begin{equation}\label{lin:ok}
\begin{picture}(50,40)(50,-18)
\thicklines

\put(-18,-3){${\bf i}$}
\put(14,-2){${\bf a}$}
\put(-2,-19){${\bf j}$}
\put(-3.5,14){${\bf b}$}
\put(-10,0){\vector(1,0){20}}
\put(0,-10){\vector(0,1){20}}

\put(24,-2){$\displaystyle{= R^{{\bf a}, {\bf b}}_{{\bf i}, {\bf j}} = 
\begin{cases}
1 & \text{if}\; R({\bf i} \otimes {\bf j}) = {\bf b} \otimes {\bf a},\\
0 & \text{otherwise}.
\end{cases}}$}
\end{picture}
\end{equation}
By the definition $ R^{{\bf a}, {\bf b}}_{{\bf i}, {\bf j}}=0$ unless
${\bf a}+{\bf b}={\bf i}+{\bf j}$. 
This property generalizing the ice rule in the six-vertex model \cite{Bax} 
will be referred to as the {\em weight conservation}.
The matrix element $R^{{\bf a}, {\bf b}}_{{\bf i}, {\bf j}}$ is 
nothing but the Boltzmann weight of the local vertex configuration at $q=0$. 
Concatenations of the vertices in diagrams 
are naturally interpreted as configuration sums.
With this convention we have
\begin{theorem}[Corner transfer matrix formula]\label{th:mnm}
The steady state probability ${\mathbb P}({\boldsymbol \sigma})$ 
of $n$-$\mathrm{TAZRP}$ in the sector $S({\bf m})$ is expressed as 
\begin{equation*}
\begin{picture}(150,75)(-100,-23)
\thicklines

\put(-143,15){\large ${\mathbb P}({\boldsymbol \sigma})
=\displaystyle{\sum_{{\bf x}^1 \otimes \cdots \otimes {\bf x}^n \in B({\bf m})}}$}
\put(0,-15){\line(0,1){12}}\put(0,-3){\vector(1,0){60}}
\put(15,-15){\line(0,1){27}}\put(15,12){\vector(1,0){45}}
\put(45,-15){\line(0,1){57}}\put(45,42){\vector(1,0){15}}
\put(-4,-25){${\bf x}^1$}\put(11,-25){${\bf x}^2$}
\put(25,-25){$\cdots$}\put(41,-25){${\bf x}^n$}
\put(0,-2){\put(17,44){$\varphi^n({\boldsymbol \sigma})$}
\multiput(12,25)(3,3){5}{\put(0,0){.}}
\put(-12.3,14){$\varphi^2({\boldsymbol \sigma})$}\put(-27.3,-1){$\varphi^1({\boldsymbol \sigma})$}}
\multiput(53,20)(0,3.3){5}{\put(0,0){.}}
\end{picture}
\end{equation*}
The right hand side stands for the configuration sum 
under the boundary condition specified by 
$\varphi^1({\boldsymbol \sigma}), \ldots, \varphi^n({\boldsymbol \sigma})$.
\end{theorem}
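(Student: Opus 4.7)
The plan is to trace through the identities that have already been established and observe that the diagram in the theorem is merely the pictorial form of the counting formula (\ref{szk:t}).

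First I would invoke (\ref{szk:t}), which gives
\begin{align*}
\mathbb{P}({\boldsymbol \sigma}) = \#\{{\bf x}\in B({\bf m})\mid \pi^a({\bf x})=\varphi^a({\boldsymbol \sigma}),\ \forall a\in[1,n]\}.
\end{align*}
Thus it suffices to reinterpret the indicator of the condition $\pi^a({\bf x})=\varphi^a({\boldsymbol \sigma})$ for all $a$ as the configuration sum depicted in the theorem for fixed ${\bf x}^1\otimes\cdots\otimes{\bf x}^n$.

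Next I would recall the derivation leading to (\ref{kan:t}): starting from the defining compositions (\ref{srn}) of $\pi^a({\bf x})$ and applying the Yang--Baxter equation as in the illustrations preceding (\ref{kan:t}), the maps $\pi^1({\bf x}),\ldots,\pi^n({\bf x})$ are simultaneously read off the diagonal boundary of the diagram whose bottom inputs are ${\bf x}^1,\ldots,{\bf x}^n$. Now interpret that diagram through the local weights (\ref{lin:ok}): each internal crossing carries the Boltzmann weight $R^{{\bf a},{\bf b}}_{{\bf i},{\bf j}}$, the products of which are summed over all assignments of internal edges. Because the combinatorial $R$ is a bijection, for any fixed pair of incoming edges at a vertex the two outgoing edges are uniquely determined (and $R^{{\bf a},{\bf b}}_{{\bf i},{\bf j}}\in\{0,1\}$). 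Consequently, once ${\bf x}^1,\ldots,{\bf x}^n$ are fixed at the bottom, all internal edge states and all diagonal boundary states are uniquely determined; the sum over internal configurations has exactly one nonzero term, whose value is $1$, and the resulting diagonal values are precisely $\pi^1({\bf x}),\ldots,\pi^n({\bf x})$.

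Consequently, the configuration sum on the right-hand side of the theorem equals
\begin{align*}
\sum_{{\bf x}\in B({\bf m})} \prod_{a=1}^n \theta\bigl(\pi^a({\bf x})=\varphi^a({\boldsymbol \sigma})\bigr),
\end{align*}
which by (\ref{szk:t}) equals $\mathbb{P}({\boldsymbol \sigma})$. The only step that requires a modicum of care, and that I would treat as the main obstacle, is the justification that the Yang--Baxter moves used to pass from the concatenated definition of the $\pi^a$ in (\ref{kan}) to the single diagram (\ref{kan:t}) do not alter the boundary data on the diagonal; this was argued diagrammatically for $n=3,4$ in the excerpt, and the general case is handled by induction on $n$, inserting one extra vertex at each step and moving the arrow emerging from ${\bf x}^a$ around by repeated use of the Yang--Baxter equation (\ref{ask:dk}). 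Once that reduction is recorded, the theorem follows immediately.
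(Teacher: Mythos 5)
Your proposal is correct and follows essentially the same route as the paper, whose proof is literally ``Follows from (\ref{szk:t}) and (\ref{kan:t})''; you have simply made explicit the two ingredients — that the frozen (deterministic) nature of the combinatorial $R$ makes the configuration sum for fixed bottom inputs ${\bf x}^1,\ldots,{\bf x}^n$ collapse to the single indicator $\prod_a\theta(\pi^a({\bf x})=\varphi^a({\boldsymbol\sigma}))$, and that the diagonal boundary of the diagram (\ref{kan:t}) carries exactly the $\pi^a({\bf x})$. Nothing further is needed.
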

\begin{proof}
Follows from (\ref{szk:t})  and (\ref{kan:t}).
\end{proof}

\begin{example}\label{ex:lin}
Consider the $3$-TAZRP on $\Z_3$ in the sector $S(1,2,1)$,
which is the same as Example \ref{ex:szk}.
We have $\varphi(1,2,23)= 112\otimes 012 \otimes 001$
according to the rule illustrated in Example \ref{ex:ask:g}.
By using them as the boundary condition,
${\mathbb P}(1,2,23) = 5$ is derived as the following sum 
corresponding to the 5 elements 
${\bf x}^1\otimes{\bf x}^2\otimes{\bf x}^3\in B(1,2,1)$ 
in Example \ref{ex:szk}.
\begin{equation*}
\begin{picture}(150,70)(120,-30)
\thicklines

\put(0,-15){\line(0,1){12}}
\put(0,-3){\line(1,0){18.7}}\put(27,-3){\vector(1,0){18}}
\put(15,-15){\line(0,1){27}}\put(15,12){\vector(1,0){30}}
\put(30,-15){\line(0,1){15.5}} \put(30,27){\line(0,-1){18.5}}
\put(30,27){\vector(1,0){15}}
\put(19.5,-5){${\bf u}$}\put(26,2.2){${\bf w}$}
\put(7,15){012}
\put(48,24){$001$}\put(48,9){$111$}\put(48,-6){$400$}
\put(-11,-25){112}\put(8.5,-25){300}\put(27,-25){100}
\put(71,6){$+$}

\put(85,0){
\put(0,-15){\line(0,1){12}}
\put(0,-3){\line(1,0){18.7}}\put(27,-3){\vector(1,0){18}}
\put(15,-15){\line(0,1){27}}\put(15,12){\vector(1,0){30}}
\put(30,-15){\line(0,1){15.5}} \put(30,27){\line(0,-1){18.5}}
\put(30,27){\vector(1,0){15}}
\put(19.5,-5){${\bf u}$}\put(26,2.2){${\bf w}$}
\put(7,15){012}
\put(48,24){$001$}\put(48,9){$111$}\put(48,-6){$310$}
\put(-11,-25){112}\put(8.5,-25){300}\put(27,-25){010}
\put(71,6){$+$}
}

\put(170,0){
\put(0,-15){\line(0,1){12}}
\put(0,-3){\line(1,0){18.7}}\put(27,-3){\vector(1,0){18}}
\put(15,-15){\line(0,1){27}}\put(15,12){\vector(1,0){30}}
\put(30,-15){\line(0,1){15.5}} \put(30,27){\line(0,-1){18.5}}
\put(30,27){\vector(1,0){15}}
\put(19.5,-5){${\bf u}$}\put(26,2.2){${\bf w}$}
\put(7,15){012}
\put(48,24){$001$}\put(48,9){$111$}\put(48,-6){$301$}
\put(-11,-25){112}\put(8.5,-25){300}\put(27,-25){001}
\put(71,6){$+$}
}

\put(255,0){
\put(0,-15){\line(0,1){12}}
\put(0,-3){\line(1,0){18.7}}\put(27,-3){\vector(1,0){18}}
\put(15,-15){\line(0,1){27}}\put(15,12){\vector(1,0){30}}
\put(30,-15){\line(0,1){15.5}} \put(30,27){\line(0,-1){18.5}}
\put(30,27){\vector(1,0){15}}
\put(19.5,-5){${\bf v}$}\put(26,2.2){${\bf w}$}
\put(7,15){012}
\put(48,24){$001$}\put(48,9){$111$}\put(48,-6){$211$}
\put(-11,-25){112}\put(8.5,-25){201}\put(27,-25){010}
\put(71,6){$+$}
}

\put(340,0){
\put(0,-15){\line(0,1){12}}
\put(0,-3){\line(1,0){18.7}}\put(27,-3){\vector(1,0){18}}
\put(15,-15){\line(0,1){27}}\put(15,12){\vector(1,0){30}}
\put(30,-15){\line(0,1){15.5}} \put(30,27){\line(0,-1){18.5}}
\put(30,27){\vector(1,0){15}}
\put(19.5,-5){${\bf v}$}\put(26,2.2){${\bf w}$}
\put(7,15){012}
\put(48,24){$001$}\put(48,9){$111$}\put(48,-6){$202$}
\put(-11,-25){112}\put(8.5,-25){201}\put(27,-25){001}
}

\end{picture}
\end{equation*}
Here we have set ${\bf u} = 400\in B_4, 
{\bf v}=301 \in B_4$ and ${\bf w} = 001 \in B_1$.
\end{example}

\subsection{Factorization of combinatorial $R$}\label{ss:fcr}

As a preparation for the next subsection, we present a
matrix product formula for the combinatorial $R$.
Let 
$F = \bigoplus_{m \ge 0}\C |m\rangle$
be a Fock space and $F^\ast = \bigoplus_{m \ge 0}\C \langle m |$ be 
its dual with the bilinear pairing such that
$\langle m | m'\rangle = \delta_{m,m'}$\footnote{Ket vectors 
here containing a single integer should not be confused with 
$n$-line states $|{\bf x}\rangle$ nor 
$n$-TAZRP states $|{\boldsymbol \sigma}\rangle$.}.
Let further ${\bf a}^+, {\bf a}^-, {\bf k}$ 
be the linear operators acting on them as
($\langle-1|=|-1\rangle=0$)
\begin{align*}
&{\bf a}^+|m\rangle = |m+1\rangle,\quad
{\bf a}^-|m\rangle = |m-1\rangle,\quad
{\bf k}|m\rangle = \delta_{m,0}|m\rangle,\\
&\langle m| {\bf a}^+=\langle m-1|,\quad 
\langle m|{\bf a}^-=\langle m+1|,\quad 
\langle m|{\bf k}=\delta_{m,0}\langle m|.
\end{align*} 
They satisfy the relations 
\begin{align}\label{lin:uz}
{\bf k}^2 = {\bf k},
\quad{\bf k}\,{\bf a}^+ = 0,
\quad {\bf a}^-{\bf k}=0,
\quad{\bf a}^-{\bf a}^+ = 1,
\quad {\bf a}^+{\bf a}^- = 1-{\bf k},
\end{align}
which coincide with the $q$-oscillator algebra ${\mathcal A}_q$  
at $q=0$ \cite[(2.16)]{KMO}. 
In this sense we refer to ${\bf a}^+, {\bf a}^-, {\bf k}$ as
$q=0$-oscillators and (\ref{lin:uz}) as  
$q=0$-oscillator algebra ${\mathcal A}_0$.
The equality 
$(\left\langle m\right| X){\left| m^{\prime} \right\rangle}
=\left\langle m\right| (X{\left| m^{\prime} \right\rangle})$ 
holds for any $X\in{\mathscr A}_0$.
The ${\mathscr A}_0$ has a basis
\begin{align}\label{lin:kj}
1,\quad ({\bf a}^+)^r,\quad 
({\bf a}^-)^r,\quad
({\bf a}^+)^s{\bf k}\,({\bf a}^-)^t\qquad
(r \in \Z_{\ge 1}, s,t \in \Z_{\ge 0}).
\end{align}
Let ${\mathscr A}^{\mathrm{fin}}_0 \subset {\mathscr A}_0$ 
be the vector subspace spanned by (\ref{lin:kj}) except 1.
Then $\mathrm{Tr}(X) := \sum_{m\ge 0}\langle m|X|m\rangle$
is finite for any $X \in {\mathscr A}^{\mathrm{fin}}_0$.

Introduce the operator $\hat{R}^{a,b}_{i,j} \in \mathrm{End}(F)$ 
together with its diagram representation as
\begin{equation}\label{lin:utkr}
\begin{picture}(150,53)(15,-28)
\thinlines

\put(-77,-7){$\hat{R}^{a,b}_{i,j}\;=$}

\rotatebox{20}{
{\linethickness{0.2mm}
\put(3,1){\color{blue}\vector(-1,0){40}}}}

\put(-0.5,0.2){
\put(-15,-18){\vector(0,1){32}}
\put(-30,0){\vector(3,-1){33}}
\put(-36,-1){$i$}
\put(-16.5,16.5){$b$}
\put(-17,-28){$j$}
\put(6,-15){$a$}}

\put(22,-7){$\displaystyle{=\delta^{a+b}_{i+j}\theta(a \ge j)
({\bf a}^+)^j {\bf k}^{\theta(a>j)}({\bf a}^-)^b}
\qquad (a,b,i,j \in \Z_{\ge 0})$,}
\end{picture}
\end{equation}
where $\delta^a_i=\delta_{a,i}$.
The blue arrow carries the Fock space on which the 
$q=0$-oscillators act. 
The other thin arrows carrying $\Z_{\ge 0}$ should not 
be confused with the thick arrows in (\ref{lin:ok}) 
carrying the elements of crystals.

\begin{lemma}\label{le:linbk}
The matrix element 
$R^{abc}_{ijk} := \langle c|\hat{R}^{a,b}_{i,j}|k\rangle$ is expressed as
\begin{equation*}
\begin{picture}(150,53)(-30,-28)
\thinlines

\put(-82,-7){$R^{abc}_{ijk} \;=$}

\rotatebox{20}{
{\linethickness{0.2mm}
\put(3,1){\color{blue}\vector(-1,0){40}}}}

\put(-0.5,0.2){
\put(-15,-18){\vector(0,1){32}}
\put(-30,0){\vector(3,-1){33}}
\put(-36,-1){$i$}
\put(-16.5,16.5){$b$}
\put(-17,-28){$j$}
\put(6,-15){$a$}
\put(5.5,0){$k$}
\put(-43,-15){$c$}}

\put(22,-7){$\displaystyle{= \delta^a_{j+(i-k)_+}\delta^b_{\mathrm{min}(i,k)}
\delta^c_{j+(k-i)_+}}$,}
\end{picture}
\end{equation*}
where the symbol $(x)_+$ was defined in the beginning of Section \ref{ss:ab}.
\end{lemma}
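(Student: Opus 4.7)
The plan is to compute $\langle c|\hat{R}^{a,b}_{i,j}|k\rangle$ directly from the definition in (\ref{lin:utkr}) by letting the $q=0$-oscillators act step by step on the right ket $|k\rangle$, and then checking that the result reproduces the stated product of Kronecker deltas by a short case analysis on whether $i\le k$ or $i>k$.

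First I would process the three operator factors in the order they act on $|k\rangle$. By the rule ${\bf a}^-|m\rangle=|m-1\rangle$ with $|{-1}\rangle=0$, one has $({\bf a}^-)^b|k\rangle=|k-b\rangle$ when $b\le k$ and $0$ otherwise, so the surviving weight conservation requires $b\le k$. The central factor ${\bf k}^{\theta(a>j)}$ is the identity if $a=j$, and is ${\bf k}$ if $a>j$; in the latter case ${\bf k}|k-b\rangle=\delta_{k-b,0}|k-b\rangle$, i.e.\ the vector survives only when $k=b$. Finally $({\bf a}^+)^j$ sends the surviving vector to $|k-b+j\rangle$, and pairing with $\langle c|$ gives $\delta_{c,k-b+j}$. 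The explicit prefactors in (\ref{lin:utkr}) still impose $a+b=i+j$ and $a\ge j$.

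Next I would split into two cases according to the value of $\theta(a>j)$. If $a=j$, the constraint $a+b=i+j$ gives $b=i$, which together with $b\le k$ means $i\le k$; then $c=k-i+j$, which matches $a=j+(i-k)_+$, $b=\min(i,k)$, $c=j+(k-i)_+$ in the subcase $i\le k$. If $a>j$, the ${\bf k}$ forces $k=b$ and then $a+b=i+j$ gives $a=i+j-k$, so $a>j$ is equivalent to $i>k$; in this subcase $c=k-b+j=j$, which again matches the claimed formula with $(i-k)_+=i-k$ and $(k-i)_+=0$. The two subcases exhaust all $(i,k)$, and whenever the formula forces $a<j$ the right-hand side vanishes and so does the left-hand side through the factor $\theta(a\ge j)$.

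The only slightly delicate bookkeeping is the role of the exponent $\theta(a>j)$: one must not double-count the boundary value $a=j$, where the ${\bf k}$ should \emph{not} be present. I do not anticipate a substantive obstacle beyond recording this convention carefully and noticing that the two subcases $i\le k$ and $i>k$ glue into the single uniform formula via the $(\,\cdot\,)_+$ notation.
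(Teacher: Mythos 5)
Your proposal is correct and follows essentially the same route as the paper: the paper substitutes $\theta(a\ge j)=\delta^a_j+\theta(a>j)$ into (\ref{lin:utkr}) and evaluates the two resulting terms, which is exactly your case split on $a=j$ versus $a>j$, with the same bookkeeping ($b=i$, $k\ge i$ in the first case; $b=k$, $a=i+j-k$, $k<i$ in the second) gluing into the single formula via $(\,\cdot\,)_+$ and $\min$.
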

\begin{proof}
Substituting $\theta(a\ge j) = \delta^a_j + \theta(a>j)$ into (\ref{lin:utkr}) we find
\begin{align}
R^{abc}_{ijk} &= \delta^{a+b}_{i+j}\delta^a_j 
\langle c|({\bf a}^+)^j ({\bf a}^-)^b|k\rangle
+\delta^{a+b}_{i+j}\theta(a>j)
\langle c|({\bf a}^+)^j {\bf k}({\bf a}^-)^b|k\rangle \nonumber\\
&=\delta^{a+b}_{i+j}\delta^a_j \delta^c_{k-b+j}\theta(k \ge b)
+\delta^{a+b}_{i+j}\theta(a>j)\delta^b_k\delta^c_j \nonumber\\
&=\delta^a_j \delta^b_i\delta^c_{k-i+j}\theta(k \ge i)
+\delta^a_{i+j-k}\delta^b_k\delta^c_j\theta(k<i).\label{lin:kmm}
\end{align}
\end{proof}
Note that
\begin{equation}\label{lin:sj}
R^{abc}_{ijk}=0\quad \text{unless}\;\; (a+b, b+c)=(i+j,j+k).
\end{equation}

\begin{proposition}[Matrix product form of combinatorial $R$]\label{pr:lin}
Let ${\bf a} = (a_1,\ldots, a_L),
{\bf i} = (i_1,\ldots, i_L) \in B_\ell$ and 
${\bf b} = (b_1,\ldots, b_L),
{\bf j} = (j_1,\ldots, j_L) \in B_m$.
The matrix element (\ref{lin:ok}) 
of the combinatorial $R$ $R_{\ell,m}$ with $\ell > m$ is expressed as
\begin{equation*}
\begin{picture}(220,82)(-132,-40)


\put(-10,4){
\put(3,1){\color{blue}\vector(-3,-1){73}}
\put(-210,-8){$R^{{\bf a},{\bf b}}_{{\bf i},{\bf j}} \;= $}

\thicklines
\put(-145,-6){
\put(-18,-2.5){${\bf i}$}
\put(15,-2){${\bf a}$}
\put(-2,-20){${\bf j}$}
\put(-3,14){${\bf b}$}
\put(-10,0){\vector(1,0){20}}
\put(0,-10){\vector(0,1){20}}
}

\thinlines
\put(-110,-8){$=$}
\put(-96,-26){$\mathrm{Tr}\Bigl($}
\put(68,18){$\Bigr)$}

\put(-48,-32){\vector(0,1){34}}
\put(-65,-11){\vector(3,-1){35}}
\put(-51,6){$\scriptstyle{b_1}$}
\put(-74,-10){$\scriptstyle{i_1}$}
\put(-30,-29){$\scriptstyle{a_1}$}
\put(-50,-39){$\scriptstyle{j_1}$}

\put(33,11){
\put(-48,-30){\vector(0,1){32}}
\put(-62,-12){\vector(3,-1){31}}}

\put(-38,0){$\scriptstyle{i_2}$}
\put(-18,17){$\scriptstyle{b_2}$}
\put(-17,-25){$\scriptstyle{j_2}$}
\put(4,-16){$\scriptstyle{a_2}$}

\multiput(5.1,1.7)(3,1){7}{\color{blue}.} 
\put(6,2){
\put(21,7){\color{blue}\line(3,1){30}
}

\put(83,27){
\put(-48,-27){\vector(0,1){26}}
\put(-59,-12){\vector(3,-1){25}}}
\put(15,16){$\scriptstyle{i_L}$}
\put(51,3){$\scriptstyle{a_L}$}
\put(33,29){$\scriptstyle{b_L}$}
\put(34,-7){$\scriptstyle{j_L}$}}

\put(85,-8){$= \mathrm{Tr}
\bigl(\hat{R}^{a_1,b_1}_{i_1, j_1}\cdots \hat{R}^{a_L,b_L}_{i_L, j_L}\bigr)$.}

}
\end{picture}
\end{equation*}
\end{proposition}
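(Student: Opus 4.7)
The plan is to expand the trace by inserting complete sets of Fock states between consecutive factors,
\[
\mathrm{Tr}\bigl(\hat{R}^{a_1,b_1}_{i_1,j_1}\cdots\hat{R}^{a_L,b_L}_{i_L,j_L}\bigr)
=\sum_{k_1,\ldots,k_L\ge 0}\prod_{s=1}^{L}R^{a_s,b_s,k_s}_{i_s,j_s,k_{s+1}}
\]
with indices taken cyclically mod $L$, and then to show that this cyclic sum collapses to $1$ exactly when ${\bf b}\otimes{\bf a}=R_{\ell,m}({\bf i}\otimes{\bf j})$ and vanishes otherwise. This reduces the proposition to a purely combinatorial statement about the local weights in Lemma~\ref{le:linbk}.

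By Lemma~\ref{le:linbk}, each local factor is $0$ or $1$ and equals $1$ in exactly one of two mutually exclusive scenarios: \emph{Case A}, with $b_s=i_s$, $a_s=j_s$, $k_{s+1}\ge i_s$, and $k_s=k_{s+1}+j_s-i_s$; or \emph{Case B}, with $b_s<i_s$, $a_s=i_s+j_s-b_s$, $k_{s+1}=b_s$, and $k_s=j_s$. Both scenarios impose the common recurrence $k_{s+1}=k_s+b_s-j_s$, whose cyclic consistency is automatic from $|{\bf b}|=|{\bf j}|=m$. The hypothesis $\ell>m$ forces $\sum_s(a_s-j_s)=\ell-m>0$, hence at least one vertex must be of Case B, and its equation $k_{s+1}=b_s$ pins the otherwise-free parameter. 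Thus the cyclic sum has at most one contributing term, equal to $1$ precisely when all Case B equations are mutually consistent and all Case A inequalities hold for that single candidate sequence.

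To match this surviving term with the NY-rule output, I would use the piecewise-linear formula of Remark~\ref{re:mri}(4): writing $Q_s=Q_s({\bf i},{\bf j})$, the combinatorial $R$ output is characterized by $b_s=j_s+Q_{s-1}-Q_s$ and $a_s=i_s+Q_s-Q_{s-1}$. The natural candidate admissible sequence is then $k_s=C-Q_{s-1}$ with $C=\max_r Q_r$, a choice that automatically makes every $k_s$ nonnegative. I would verify that each Case B vertex pins the constant $C$ to exactly this maximum, while each Case A inequality $k_{s+1}\ge i_s$ reduces to an instance of the minimum characterization defining $Q_s$. The converse direction — nonvanishing of the sum forces ${\bf b}\otimes{\bf a}$ to coincide with the NY-output — then follows from the bijectivity of $R_{\ell,m}$ together with the fact that the local Case A/Case B relations uniquely determine the pair $({\bf a},{\bf b})$ from $({\bf i},{\bf j})$ and the admissible $(k_s)$.

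The main obstacle is this global compatibility check around the cyclic boundary: showing that every Case B vertex agrees on the same value of $C$, and that every Case A inequality holds at that value. Locally, each constraint is transparent, but weaving them consistently across the periodic lattice requires careful use of the min-formula for $Q_s$ and of the nonnegativity relations $a_s\ge i_s$, $j_s\ge b_s$ satisfied by the NY-output. An alternative route, which I would also consider, is to interpret the auxiliary variable $k_s$ directly as the number of $H$-lines crossing the seam between sites $s-1$ and $s$ in the NY diagram; the local rules of Lemma~\ref{le:linbk} then describe precisely the absorption (Case B) or transmission (Case A) of these $H$-lines at each column, and cyclic consistency is automatic from the periodic boundary condition in the NY-rule. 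Either way, once this compatibility is established the proposition follows immediately.
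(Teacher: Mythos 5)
Your setup --- expanding the trace over intermediate Fock states and reading off the local Case A/Case B dichotomy from Lemma \ref{le:linbk} --- is correct and is essentially the paper's starting point, and your observation that for a \emph{fixed} pair $({\bf a},{\bf b})$ at most one sequence $(k_s)$ can contribute is sound. But there is a genuine gap in the converse direction. What must be excluded is that the trace equals $1$ for some pair $({\bf a}',{\bf b}')$ \emph{different} from the NY output; since the $a$- and $b$-deltas merely read off $a_s=j_s+(i_s-k_{s+1})_+$ and $b_s=\min(i_s,k_{s+1})$ from an admissible $(k_s)$, this amounts to showing that the cyclic system $k_s=j_s+(k_{s+1}-i_s)_+$, which depends on $({\bf i},{\bf j})$ \emph{only}, has a unique solution. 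Your ``pinning'' argument does not give this: it fixes the free parameter using the value of $b_s$ at a Case B vertex, i.e.\ it presupposes $({\bf a},{\bf b})$; two distinct admissible sequences $(k_s)$ and $(k'_s)$ could have their Case B vertices at different positions and would simply produce two distinct pairs, each contributing $1$ to its own trace. Appealing to the bijectivity of $R_{\ell,m}$ does not help, because bijectivity constrains the left-hand side $R^{{\bf a},{\bf b}}_{{\bf i},{\bf j}}$, not the trace. The paper closes exactly this hole with a dedicated fixed-point lemma: writing $c_0=f(c_L)$ for the piecewise-linear composition of the maps $c\mapsto j_r+(c-i_r)_+$, it shows $f(x+1)-f(x)\in\{0,1\}$ and that the smallest $x$ at which $f$ has increased through every vertex satisfies $f(x)=x+m-\ell<x$, whence $x=f(x)$ has a unique solution; this is where the hypothesis $\ell>m$ enters essentially, and the statement would in fact fail without some such argument.

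A smaller issue is that your ``if'' direction is only a plan: the candidate $k_s=\max_rQ_r-Q_{s-1}$ is plausible, but you still have to verify that every Case B vertex selects the same constant and every Case A inequality holds, and you yourself flag this global compatibility check as the main obstacle. The paper sidesteps this computation via your ``alternative route'': the two pictures accompanying the proof show that the local weights of Lemma \ref{le:linbk} reproduce the NY pairing rule verbatim when $c_r$ is interpreted as the number of $H$-lines crossing the seam, so the NY configuration is an admissible $(c_r)$ and existence comes for free. Note, however, that even on that route the uniqueness lemma above remains indispensable, since the trace sums over all flux configurations, not just the one produced by the NY-rule.
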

\begin{proof}
We are to show
\begin{align*}
R^{{\bf a},{\bf b}}_{{\bf i},{\bf j}}= \sum_{c_1,\ldots, c_L}
R^{a_1b_1c_L}_{i_1j_1c_1}R^{a_2b_2c_1}_{i_2j_2c_2}\cdots
R^{a_{L-1}b_{L-1}c_{L-1}}_{i_{L}j_{L}c_L}.
\end{align*}
The left hand side is $1$ or $0$ depending on whether 
$R({\bf i}\otimes {\bf j}) = {\bf b}\otimes {\bf a}$ or not
according to the NY-rule in Section \ref{ss:cr}.
The right hand side is the sum over $c_1,\ldots, c_L \in \Z_{\ge 0}$
which effectively reduces to a {\em single} sum due to (\ref{lin:sj})
and $\sum_r(a_r, b_r) = \sum_r(i_r, j_r)=(l,m)$.
Consider the dot diagrams in the NY-rule for ${\bf i}, {\bf j}, {\bf a}, {\bf b}$
and their $r$-th boxes from the left which contain
$i_r, j_r, a_r, b_r$ dots, respectively.
We are going to identify $c_{r}$ (resp. $c_{r-1}$) with the numbers of $H$-lines 
coming from the right (resp. outgoing to the left) of these boxes.
The identification is certainly consistent locally since (\ref{lin:kmm}) agrees with  
the NY-rule depicted below under 
the abbreviation $(a,b,c,i,j,k)=(a_r,b_r,c_{r-1},i_r,j_r,c_r)$.
\begin{equation*}
\begin{picture}(400,160)(-13,-20)

\put(0,82){

\put(65,50){$k\ge i$ case}
\put(0,0){\line(1,0){60}}\put(0,40){\line(1,0){60}}
\put(10,0){\line(0,1){40}}\put(50,0){\line(0,1){40}}
\put(28,8){$\bullet$}
\put(29.3,14.1){.}\put(29.3,17.1){.}\put(29.3,20.1){.}\put(28,22){$\bullet$}
\put(30,24){\line(-1,0){12}}\put(18,24){\line(0,-1){34}}
\put(18,-10){\vector(-1,0){20}}
\put(30,10){\line(-1,0){7}}\put(23,10){\line(0,-1){25}}
\put(23,-15){\vector(-1,0){25}}
\put(13,15){$\left.\phantom{A^{A^A}}\right\} j$}
}

\put(-10,100){${\bf j}$}
\put(-10,15){${\bf i}$}
\put(-16,62){$c\left\{\phantom{A^{A^A}}\right.$}
\put(45,52){$\left.\phantom{A^{A^A}}\right\} k$}
\put(62,62){\vector(-1,0){64}}
\put(62,57){\vector(-1,0){64}}
\put(30,24){\line(1,0){7}}\put(37,24){\line(0,1){28}}\put(37,52){\line(1,0){25}}
\put(30,10){\line(1,0){12}}\put(42,10){\line(0,1){37}}\put(42,47){\line(1,0){20}}
\put(0,0){\line(1,0){60}}\put(0,40){\line(1,0){60}}
\put(10,0){\line(0,1){40}}\put(50,0){\line(0,1){40}}
\put(28,8){$\bullet$}
\put(29.3,14.1){.}\put(29.3,17.1){.}\put(29.3,20.1){.}\put(28,22){$\bullet$}
\put(15,15){$i\left\{\phantom{A^{A^A}}\right.$}

\put(35,-20){$a=j, b=i, c=k-i+j$}
\put(115,0){
\put(0,82){
\put(0,0){\line(1,0){60}}\put(0,40){\line(1,0){60}}
\put(10,0){\line(0,1){40}}\put(50,0){\line(0,1){40}}
\put(28,8){$\bullet$}
\put(-0.5,0){\put(29.3,14.1){.}\put(29.3,17.1){.}\put(29.3,20.1){.}}
\put(28,22){$\bullet$}
\put(13,15){$\left.\phantom{A^{A^A}}\right\} a$}
}
\put(-25,60){$\Longrightarrow$}
\put(-8,100){${\bf a}$}
\put(-8,15){${\bf b}$}
\put(0,0){\line(1,0){60}}\put(0,40){\line(1,0){60}}
\put(10,0){\line(0,1){40}}\put(50,0){\line(0,1){40}}
\put(28,8){$\bullet$}
\put(-0.5,0){\put(29.3,14.1){.}\put(29.3,17.1){.}\put(29.3,20.1){.}}
\put(28,22){$\bullet$}
\put(15,15){$b\left\{\phantom{A^{A^A}}\right.$}
}

\put(215,0){
\put(0,82){
\put(65,50){$k<i$ case}
\put(0,0){\line(1,0){60}}\put(0,40){\line(1,0){60}}
\put(10,0){\line(0,1){40}}\put(50,0){\line(0,1){40}}
\put(28,8){$\bullet$}
\put(29.3,14.1){.}\put(29.3,17.1){.}\put(29.3,20.1){.}\put(28,22){$\bullet$}
\put(30,24){\line(-1,0){12}}\put(18,24){\line(0,-1){34}}
\put(18,-10){\vector(-1,0){20}}
\put(30,10){\line(-1,0){7}}\put(23,10){\line(0,-1){28}}
\put(23,-18){\vector(-1,0){25}}
\put(13,15){$\left.\phantom{A^{A^A}}\right\} j$}
}

\put(-10,100){${\bf j}$}
\put(-10,15){${\bf i}$}
\put(-15,65){$c\left\{\phantom{A}\right.$}
\put(57,48){$\left.\phantom{a}\right\} k$}

\put(30,16.5){\line(1,0){7}}\put(37,16.5){\line(0,1){38.5}}
\put(37,55){\line(1,0){25}}
\put(30,6.5){\line(1,0){12}}\put(42,6.5){\line(0,1){40.5}}
\put(42,47){\line(1,0){20}}
\put(0,0){\line(1,0){60}}\put(0,40){\line(1,0){60}}
\put(10,0){\line(0,1){40}}\put(50,0){\line(0,1){40}}
\put(28,20){$\bullet$}\put(29,25.8){.}\put(29,28.2){.}\put(28,30){$\bullet$}
\put(28,4){$\bullet$}\put(29,9.5){.}\put(29,12){.}\put(28,14){$\bullet$}
\put(15,16.5){$i\left\{\phantom{A^{A^{A^{A^{A^A}}}}}\right.$}

\put(35,-20){$a=i+j-k, b=k, c=j$}
\put(115,0){
\put(0,82){
\put(0,0){\line(1,0){60}}\put(0,40){\line(1,0){60}}
\put(10,0){\line(0,1){40}}\put(50,0){\line(0,1){40}}
\put(0,-1){
\put(28,5){$\bullet$}
\put(29,11.2){.}\put(29,14.2){.}\put(29,17.2){.}\put(29,20.2){.}
\put(29,23.2){.}\put(29,26.2){.}\put(29,29.2){.}\put(28,31.3){$\bullet$}}
\put(9,17){$\left.\phantom{i^{o^{o^{o^{o^2}}}}}\right\} \!a$}
}
\put(-25,60){$\Longrightarrow$}
\put(-8,100){${\bf a}$}
\put(-8,15){${\bf b}$}
\put(0,0){\line(1,0){60}}\put(0,40){\line(1,0){60}}
\put(10,0){\line(0,1){40}}\put(50,0){\line(0,1){40}}
\put(28,4){$\bullet$}\put(29,9.5){.}\put(29,12){.}\put(28,14){$\bullet$}
\put(16,8.5){$b\left\{\phantom{A^{A}}\right.$}
}}
\end{picture}
\end{equation*}

It remains to show that for any given 
${\bf i}\otimes {\bf j} \in B_{\ell}\otimes B_m$ with $\ell>m$, 
there is a unique solution
$(c_1,\ldots, c_L)$ to the simultaneous equations
$c_{r-1}= j_r + (c_r-i_r)_+$ with $r \in [1,L]$ and $c_0 = c_L$.
They are postulated from the rightmost factor $\delta^c_{j+(k-i)_+}$
in Lemma \ref{le:linbk}
and the cyclicity of the trace.
From the relations $c_{r-1}= j_r + (c_r-i_r)_+$ with $r \in [1,L]$, 
we have a piecewise linear expression $c_0=f(c_L)$ in terms of 
$c_L$ including ${\bf i}$ and ${\bf j}$ as parameters.
We are to verify that $x=f(x)$ has a unique solution.
In fact it is given by $x=w:=f(0)$.
To see this note that $f(x+1)=f(x)$ or $f(x)+1$ because 
of $(x+1)_+ = (x)_+$ or $(x)_++1$.
Let $s$ be the smallest nonnegative integer such that
$f(s)= w$ and $f(s+1)=w+1$.
This can happen only if 
$c_{r-1}=c_r+j_r-i_r$ holds for all $r \in [1,L]$ upon the choice $c_L=s$.
Then $f(s)=s+\sum_{r=1}^L(j_r-i_r) = s+m-\ell$.
Thus $f(s)=w$ forces $w<s$ by the assumption $\ell>m$.
Now the unique existence of the solution to $x=f(x)$ is obvious 
from the following graph.

\begin{picture}(70,55)(-140,-11)
\put(0,0){\vector(0,1){30}}
\put(-4,34){$y$}
\put(50,28){$y=f(x)$}
\put(-11,14){$w$}
\put(0,16){\line(1,0){40}}
\multiput(38,1.5)(0,3){5}{.}
\drawline(40,16)(48,24)

\multiput(16,1.5)(0,3){5}{.}
\put(14.5,-9){$w$}

\put(15,28){$y=x$}
\put(38,-9){$s$}
\put(0,0){\line(1,1){25}}
\put(0,0){\vector(1,0){55}}\put(60,-3){$x$}
\put(-6,-9){$0$}
\end{picture}
\end{proof}

From the proof it also follows that 
$R^{a_1,b_1}_{i_1, j_1}\cdots R^{a_L,b_L}_{i_L, j_L} \in {\mathcal A}^\text{fin}_0$
and its trace is convergent.
In fact this fact can directly be derived from (\ref{lin:utkr}) since   
there is at least one $r$ such that $\theta(a_r>j_r)=1$
due to ${\bf a}\in B_\ell, {\bf j}\in B_m$ and  $\ell-m>0$.

Proposition \ref{pr:lin} is a special case $\forall \epsilon_r=0$ of
\cite[Th.6]{Ku} and is also a corollary of \cite[Th.4.1]{KOS} at $q=0$.
The operator $R \in \mathrm{End}(F^{\otimes 3})$
defined by $R(|i\rangle \otimes |j\rangle \otimes |k\rangle)
=\sum_{a,b,c} R^{abc}_{ijk}|a\rangle \otimes |b\rangle \otimes |c\rangle$
using $R^{abc}_{ijk}$ in Lemma \ref{le:linbk} is known to 
satisfy the tetrahedron equation
$R_{1,2,4}R_{1,3,5}R_{2,3,6}R_{4,5,6}=
R_{4,5,6}R_{2,3,6}R_{1,3,5}R_{1,2,4}$ \cite{Zam80}.
Furthermore this $R$ is the $q=0$ limit of 
the 3D $R$ operator including generic $q$,
which has a long history going back to \cite{KV}.
See for example \cite{BS, KOS, Ku} and references therein.
We will present a new application of 
the 3D $R$ operator and the tetrahedron equation to 
$n$-TAZRP in \cite{KMO4}.

\begin{example}
We calculate two elements of the combinatorial $R$: 
$B_4\otimes B_3 \rightarrow B_3 \otimes B_4$ according to 
Proposition \ref{pr:lin}.
\begin{align*}
&R^{1201,0021}_{0121,1101}=\mathrm{Tr}\bigl(
R^{1,0}_{0,1}R^{2,0}_{1,1}R^{0,2}_{2,0}R^{1,1}_{1,1}\bigr) = 
\mathrm{Tr}\bigl(
{\bf a}^+ {\bf a}^+{\bf k}\, ({\bf a}^-)^2 {\bf a}^+ {\bf a}^- \bigr)=
\langle 0|({\bf a}^-)^2 {\bf a}^+ {\bf a}^-{\bf a}^+ {\bf a}^+|0\rangle=1,\\
&R^{1111,0111}_{0121,1101}=\mathrm{Tr}\bigl(
R^{1,0}_{0,1}R^{1,1}_{1,1}R^{1,1}_{2,0}R^{1,1}_{1,1}\bigr) = 
\mathrm{Tr}\bigl(
{\bf a}^+ {\bf a}^+{\bf a}^- {\bf k}\, {\bf a}^- {\bf a}^+{\bf a}^- \bigr)
=\langle 0|{\bf a}^- {\bf a}^+{\bf a}^-{\bf a}^+ {\bf a}^+{\bf a}^-
|0\rangle=0.
\end{align*}
The both elements satisfy the weight conservation, but 
the one matching the combinatorial $R$ is the former.
It coincides with the bottom left vertex in the left hand side of (\ref{ask:dk}). 
\end{example}

\subsection{Matrix product formula for steady state probability}\label{ss:mp}
One may regard $\hat{R}^{a,b}_{i,j}$ (\ref{lin:utkr}) 
as the ${\mathcal A}_0$-valued Boltzmann weight
of a 2D vertex model.
Depict it omitting the blue arrow as 
\begin{equation} \label{mrn:kj}
\begin{picture}(200,45)(-40,-22)
\thinlines
\put(-56,-4){$\hat{R}^{a,b}_{i,j}=$}
\put(-10,0){\vector(1,0){20}}
\put(0,-10){\vector(0,1){20}}
\put(-17,-3.5){$i$}\put(12.5,-3.5){$a$}\put(-2.4,13){$b$}\put(-2.3,-19){$j$}
\put(25,-4){$=\delta^{a+b}_{i+j}
\theta(a \ge j)({\bf a}^+)^j{\bf k}^{\theta(a>j)}({\bf a}^-)^b$.}
\thinlines
\end{picture}
\end{equation}
This vertex $\hat{R}^{a,b}_{i,j}$ made of 
thin arrows carrying $a,b,i,j \in \Z_{\ge 0}$ should be distinguished from 
$R^{{\bf a},  {\bf b}}_{{\bf i}, {\bf j}}$ in (\ref{lin:ok}) 
which consists of thick arrows carrying 
${\bf a}, {\bf i} \in B_\ell$ and ${\bf b}, {\bf j} \in B_m$. 
The factor $\delta^{a+b}_{i+j}$ in (\ref{mrn:kj}) 
represents an ice type conservation rule.
Although it is a vertex model 
whose local states range over the infinite set $\Z_{\ge 0}$,
the quantities relevant to $n$-TAZRP become finite as we will see below.

Recall that a local state $\sigma_i$ of $n$-TAZRP at site 
$i \in \Z_L$ has the form 
$\sigma_i=(\sigma^1_i,\ldots, \sigma^n_i)$ 
in multiplicity representation as in (\ref{cie}).
With such an array $\sigma=(\sigma^1,\ldots, \sigma^n) \in (\Z_{\ge 0})^n$ 
we associate the operator 
$X_\sigma \in \mathrm{End}(F^{\otimes n(n-1)/2})$ defined by
\begin{equation}\label{mrn:ssi}
\begin{picture}(250,90)(-130,-20)
\thinlines

\put(-160,22){$\displaystyle{X_\sigma = X_{\sigma^1,\ldots, \sigma^n} = \sum}$}

\put(51,58){\small $\sigma^n$}
\put(4,43){\small$\sigma^{n-1}+ \sigma^n$}
\put(-54,-3){\small $\sigma^1+\cdots+ \sigma^n$}

\put(0,-15){\line(0,1){12}}\put(0,-3){\vector(1,0){75}}
\put(15,-15){\line(0,1){27}}\put(15,12){\vector(1,0){60}}
\put(45,-15){\line(0,1){57}}\put(45,42){\vector(1,0){30}}
\put(60,-15){\line(0,1){72}}\put(60,57){\vector(1,0){15}}

\put(68,22){$\vdots$}
\multiput(11,20)(2.8,2.8){5}{.}
\put(27,-13){$\cdots$}
\end{picture}
\end{equation}
This is a configuration sum of the ${\mathcal A}_0$-valued vertex model
defined by (\ref{mrn:kj}).
Each edge ranges over $\Z_{\ge 0}$ with the fixed  
boundary condition on the diagonal and the free boundary condition 
on the bottom row and the rightmost column.
The summand represents a {\em tensor product} of the $q=0$-oscillator 
operators (\ref{mrn:kj})  attached to the vertices.
The diagram has the same structure as that in Proposition \ref{th:mnm}.
Note however that the thick arrows there carry elements of crystals whereas
the thin arrows here do just nonnegative integers.
In short the $X_\sigma$ is a corner transfer matrix
of the  ${\mathcal A}_0$-valued vertex model\footnote{
Actually the sum of elements of the corner transfer matrix 
in the original sense \cite{Bax}
since we employ the free boundary condition
on the bottom row and the rightmost column.}.

\begin{example}\label{ex:mrntkk}
For $n=2$ the operator $X_{\sigma}$ with $\sigma=(\sigma^1, \sigma^2)$ 
is given by
\begin{equation*}
\begin{picture}(600,45)(-25,12)
\thinlines
\setlength\unitlength{0.26mm}
\put(50,0){
\put(-7,37){$X_{\sigma^1,\sigma^2}= \ {\displaystyle \sum_j}$}
\put(110,0){
\put(30,25){\line(0,1){35}}
\put(30,60){\vector(1,0){20}}
\put(10,40){\vector(1,0){40}}
\put(10,25){\line(0,1){15}}
\put(18,60){\small{$\sigma^2$}}
\put(-29,40){\small {$\sigma^1+\sigma^2$}}
\put(27,14){\small $j$}
\put(55,37){\small{$j+\sigma^1$}}

\put(100,37){${\displaystyle
= \ \sum_{j\ge0} \,({\bf a}^+)^j{\bf k}^{\sigma^1}({\bf a}^-)^{\sigma^2}}$.}
}
}
\thinlines
\end{picture}
\end{equation*}
For $n=3$ the operator $X_{\sigma}$ with 
$\sigma=(\sigma^1, \sigma^2, \sigma^3)$ is given by
\begin{equation*}
\begin{picture}(600,80)(-59,-35)
\thinlines
\setlength\unitlength{0.26mm}

\put(-3,22){$X_{\sigma^1,\sigma^2,\sigma^3}= \ {\displaystyle \sum_{i,j,k}}$}
\put(170,-30){
\reflectbox{
\rotatebox[origin=c]{90}{
\put(0,40){\line(1,0){50}} \put(50,40){\vector(0,1){23}}
\put(0,20){\line(1,0){30}} \put(30,20){\vector(0,1){43}}
\put(0,0){\line(1,0){10}}
\put(10,0){\vector(0,1){63}}
}}
\put(-41,80){\small{$\sigma^3$}}
\put(-88,60){\small{$\sigma^2+\sigma^3$}}
\put(-136,40){\small{$\sigma^1+\sigma^2+\sigma^3$}}
\put(-30,20){\small$j$}
\put(-50,20){\small$i$}
\put(-25,48){\small$k$}
}

\put(54,-35){$= \ {\displaystyle \sum_{i,j,k}}
({\bf a}^+)^j{\bf k}^{\sigma^1+i-k}({\bf a}^-)^k{\otimes}
({\bf a}^+)^k{\bf k}^{\sigma^2}({\bf a}^-)^{\sigma^3}{\otimes}
({\bf a}^+)^i{\bf k}^{\sigma^1}({\bf a}^-)^{\sigma^2+\sigma^3}$,}
\end{picture}
\end{equation*}
where the sum extends over $i,j\in \Z_{\ge 0}$ and $k \in [0,\sigma^1+i]$. 
Components of the tensor product 
corresponding to the three vertices have been ordered 
as $(\text{bottom right})\otimes (\text{top right}) \otimes 
(\text{bottom left})$.
\end{example}

As seen in these examples, $X_\sigma$ is an infinite sum in general.
However the following formula, which is our second main result in this article, 
is divergence-free.

\begin{theorem}[Matrix product formula 
for steady state probability of $n$-TAZRP]\label{th:aoy}
The steady state probability of the configuration $(\sigma_1,\ldots, \sigma_L)$
of $n$-$\mathrm{TAZRP}$ on the periodic chain $\Z_L$ is expressed as 
\begin{align*}
{\mathbb P}(\sigma_1,\ldots, \sigma_L) = 
\mathrm{Tr}\bigl(X_{\sigma_1}\cdots X_{\sigma_L}\bigr),
\end{align*}
where the trace is taken over $F^{\otimes n(n-1)/2}$.
\end{theorem}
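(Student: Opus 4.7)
The strategy is to combine the corner transfer matrix expression of Theorem \ref{th:mnm} with the matrix product formula for the combinatorial $R$ given in Proposition \ref{pr:lin}. Concretely, starting from
\[
\mathbb{P}({\boldsymbol\sigma}) = \sum_{{\bf x}\in B({\bf m})} \prod_v R^{{\bf a}_v,{\bf b}_v}_{{\bf i}_v,{\bf j}_v},
\]
where $v$ runs over the $n(n-1)/2$ combinatorial $R$-vertices of the triangular 2D diagram (\ref{kan:t}) with diagonal boundary $\varphi^1({\boldsymbol\sigma}),\ldots,\varphi^n({\boldsymbol\sigma})$ and bottom inputs ${\bf x}^1,\ldots,{\bf x}^n$, I apply Proposition \ref{pr:lin} at each vertex to rewrite each thick-arrow factor as $\mathrm{Tr}_{F_v}(\hat{R}_{v,1}\cdots\hat{R}_{v,L})$ on its own Fock space $F_v$, where $\hat{R}_{v,r}=\hat{R}^{a_{v,r},b_{v,r}}_{i_{v,r},j_{v,r}}$ depends on the $r$-th components of the $L$-tuples labeling the edges at vertex $v$. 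This produces a 3D lattice picture whose transverse cross-section is the $n(n-1)/2$-vertex triangular diagram and whose $L$ layers are cyclically joined by the traces $\mathrm{Tr}_{F_v}$.

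The key observation is that the $r$-th slice of this 3D diagram is precisely the $n(n-1)/2$-vertex diagram (\ref{mrn:ssi}) that defines $X_{\sigma_r}$: the $r$-th component of $\varphi^a({\boldsymbol\sigma})$ equals $\sigma^a_r+\sigma^{a+1}_r+\cdots+\sigma^n_r$, which matches the prescribed diagonal boundary of $X_{\sigma_r}$, while the bottom values $x^a_r$ and all internal integer edges of the slice remain free. Since the $n(n-1)/2$ Fock spaces commute, I can interchange the per-slice integer-edge summations with the Fock space traces; multiplying the slice contributions in the natural chain order and using the cyclicity of each $\mathrm{Tr}_{F_v}$ then assembles them into $\mathrm{Tr}_{F^{\otimes n(n-1)/2}}(X_{\sigma_1}\cdots X_{\sigma_L})$. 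The external sum $\sum_{{\bf x}\in B({\bf m})}$ translates into the free sums over $x^a_r\in\Z_{\ge 0}$ already built into (\ref{mrn:ssi}); the constraints $\sum_r x^a_r=\ell_a$ defining $B_{\ell_a}$ are automatically enforced by the weight conservation $\delta^{a+b}_{i+j}$ of each $\hat{R}$ together with the fixed diagonal boundaries at the $L$ slices.

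The main technical burden is bookkeeping: carefully matching edge labels between the 3D sliced diagram and the tensor-product structure of $X_\sigma$, and verifying that the rearrangement of summation orders is legitimate. One must also address convergence, since every $X_\sigma$ is generically an infinite sum (cf.\ Example \ref{ex:mrntkk}). This should follow from the fact that on each of the $n(n-1)/2$ tensor factors of $F^{\otimes n(n-1)/2}$ the product $X_{\sigma_1}\cdots X_{\sigma_L}$ acquires at least one projector ${\bf k}$: each transverse vertex $v$ carries a factor ${\bf k}^{\sigma^{a(v)}_r}$ on slice $r$ for some species $a(v)\in[1,n]$ read off from the triangular geometry, and the basic sector assumption $m_{a(v)}\ge 1$ forces $\sigma^{a(v)}_r\ge 1$ for at least one $r$, so the operator lies in ${\mathscr A}_0^{\mathrm{fin}}$ on every tensor factor and the trace is finite. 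This parallels the convergence argument used in the proof of Proposition \ref{pr:lin}.
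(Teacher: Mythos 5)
Your proposal is correct and follows essentially the same route as the paper, whose entire proof consists of substituting Proposition \ref{pr:lin} into Theorem \ref{th:mnm} and invoking the definitions (\ref{sra}) and (\ref{mrn:ssi}); your slice-by-slice bookkeeping is precisely the content of that substitution. Two small caveats on the extra justifications you supply. First, the localization of the free edge sums onto the crystals $B_{\ell_a}$ does not follow from the in-slice conservation $\delta^{a+b}_{i+j}$ alone together with the fixed diagonal: already for $n=2$ that law only yields $\ell_1+\sum_r j_r=\sum_r a_r+\ell_2$, which constrains nothing. What is needed in addition is the second conservation law $b+c=j+k$ of (\ref{lin:sj}), which combined with the cyclicity of each $\mathrm{Tr}_{F_v}$ gives $\sum_r b_{v,r}=\sum_r j_{v,r}$ at every vertex and so propagates the fixed diagonal norms $\ell_a$ down the vertical lines and along the horizontal ones, forcing $\sum_r x^a_r=\ell_a$. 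Second, it is not literally true that every vertex carries a factor ${\bf k}^{\sigma^{a(v)}_r}$ (cf.\ the factor ${\bf k}^{\sigma^1+i-k}$ at the $(1,3)$ vertex in the $n=3$ case of Example \ref{ex:mrntkk}, whose exponent can vanish even when $\sigma^1>0$); the cleanest convergence argument is the one the paper gives, namely that all summands are nonnegative and the localized sum is manifestly the finite $0/1$ configuration sum of Theorem \ref{th:mnm}.
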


\begin{proof}
Substitute Proposition \ref{pr:lin} into Theorem \ref{th:mnm}
and use the definitions (\ref{sra}) and (\ref{mrn:ssi}).
\end{proof}
Convergence of the trace is guaranteed by the equivalence  
to Theorem \ref{th:mnm} which is manifestly finite.

\begin{example}
Consider $2$-TAZRP on $\Z_4$ in the sector $S(1,1)$.
Translating the configurations e.g. $(\emptyset,\emptyset,1,2)$ 
in multiset representation 
into multiplicity representation $(00,00,10,01)$,  we have
\begin{align*}
&{\mathbb P}(\emptyset,\emptyset,\emptyset,12)=
{\mathbb P}(00,00,00,11)
= \mathrm{Tr}(X_{00}X_{00}X_{00}X_{11})=
\sum\mathrm{Tr}\bigl(
({\bf a}^+)^{j_1+j_2+j_3+j_4}{\bf k}\,{\bf a}^-\bigr)=4,\\
&{\mathbb P}(\emptyset,\emptyset,1,2)=
{\mathbb P}(00,00,10,01)
= \mathrm{Tr}(X_{00}X_{00}X_{10}X_{01})=
\sum\mathrm{Tr}\bigl(
({\bf a}^+)^{j_1+j_2+j_3}{\bf k}({\bf a}^+)^{j_4}{\bf a}^-\bigr)=3,\\
&{\mathbb P}(\emptyset,1,\emptyset,2)=
{\mathbb P}(00,10,00,01)
= \mathrm{Tr}(X_{00}X_{10}X_{00}X_{01})=
\sum\mathrm{Tr}\bigl(
({\bf a}^+)^{j_1+j_2}{\bf k}({\bf a}^+)^{j_3+j_4}{\bf a}^-\bigr)=2,\\
&{\mathbb P}(\emptyset,\emptyset,2,1)=
{\mathbb P}(00,00,01,10)
= \mathrm{Tr}(X_{00}X_{00}X_{01}X_{10})=
\sum\mathrm{Tr}\bigl(
({\bf a}^+)^{j_1+j_2+j_3}{\bf a}^-({\bf a}^+)^{j_4}{\bf k}\bigr)=1,
\end{align*}
where the operators $X_{\sigma^1, \sigma^2}$ are 
taken from Example \ref{ex:mrntkk}. 
The sum is over $j_1,\ldots, j_4 \in \Z_{\ge 0}$.
They agree with $|\xi_4(1,1)\rangle$ in Example \ref{ex:LL}.
\end{example}

\begin{example}
Similarly for $3$-TAZRP on $\Z_3$ in the sector $S(1,2,1)$ we have
\begin{align*}
&{\mathbb P}(1,2,23)=
{\mathbb P}(100,010,011)= \mathrm{Tr}(X_{100}X_{010}X_{011})=
\sum \mathrm{Tr}(Y_1) \mathrm{Tr}(Y_2) \mathrm{Tr}(Y_3), \\
&Y_1=({\bf a}^+)^{j_1}{\bf k}^{1+i_1-k_1}({\bf a}^-)^{k_1}
({\bf a}^+)^{j_2}{\bf k}^{i_2-k_2}({\bf a}^-)^{k_2}
({\bf a}^+)^{j_3}{\bf k}^{i_3-k_3}({\bf a}^-)^{k_3},\\
&Y_2=({\bf a}^+)^{k_1}({\bf a}^+)^{k_2}{\bf k}\,
({\bf a}^+)^{k_3}{\bf k}\,{\bf a}^-,\\
&Y_3=({\bf a}^+)^{i_1}{\bf k}\,({\bf a}^+)^{i_2}{\bf a}^-
({\bf a}^+)^{i_3}({\bf a}^-)^2,
\end{align*}
where the operators 
$X_{\sigma^1, \sigma^2, \sigma^3}$ are again taken from Example \ref{ex:mrntkk}. 
The sum is over $i_r, j_r, k_r \,(r=1,2,3)\in \Z_{\ge 0}$ under the 
condition that all the powers of ${\bf k}$ in $Y_1$ are nonnegative.
There are five such choices yielding the nonvanishing summands as
\begin{align*}
\begin{pmatrix}
i_1 \; j_1\;  k_1\\
i_2 \; j_2 \; k_2\\
i_3 \; j_3 \; k_3
\end{pmatrix}=
\begin{pmatrix}
3 \;\, 1 \;\, 1\\
0 \;\, 0 \;\, 0\\
0 \;\, 0 \;\, 0
\end{pmatrix},
\begin{pmatrix}
3 \;\, 0 \;\, 1\\
0 \;\, 1 \;\, 0\\
0 \;\, 0 \;\, 0
\end{pmatrix},
\begin{pmatrix}
3 \;\, 0 \;\, 1\\
0 \;\, 0 \;\, 0\\
0 \;\, 1 \;\, 0
\end{pmatrix},
\begin{pmatrix}
2 \;\, 0 \;\, 1\\
0 \;\, 1 \;\, 0\\
1 \;\, 0 \;\, 0
\end{pmatrix},
\begin{pmatrix}
2 \;\, 0 \;\, 1\\
0 \;\, 0 \;\, 0\\
1 \;\, 1 \;\, 0
\end{pmatrix}.
\end{align*}
Each of them contributes by 1, reproducing the result 
${\mathbb P}(1,2,23)=5$ in agreement with 
$|\xi_3(1,2,1)\rangle$ in Example \ref{ex:LL} and Example \ref{ex:lin}.
\end{example}

\section*{Acknowledgments}
This work is supported by 
Grants-in-Aid for Scientific Research No.~15K04892,
No.~15K13429 and No.~23340007 from JSPS.


\begin{thebibliography}{99}

\bibitem{Bax}
R.~J.~Baxter,
\textit{Exactly solved models in statistical mechanics},
Dover (2007).

\bibitem{BS}
V.~V.~Bazhanov and S.~M.~Sergeev, 
Zamolodchikov's tetrahedron equation 
and hidden structure of quantum groups, 
J. Phys. A:  Math. Gen. {\bf 39}  (2006) 3295--3310.

\bibitem{BCS}
A.~Borodin, I.~Corwin and T.~Sasamoto,
From duality to determinants for $q$-TASEP and ASEP.
Ann. Probab. {\bf 42} (2014)  2314--2382.

\bibitem{DJKMO}
E.~Date, M.~Jimbo, A.~Kuniba, T.~Miwa and M.~Okado,
Exactly solvable SOS models, Local height proabilities and 
theta function identities,
Nucl. Phys. B{\bf 290} [FS20] (1987) 231--273.

\bibitem{D86} V.~G.~Drinfeld,
Hopf algebras and the quantum Yang-Baxter equation,
Soviet Math. Doklady {\bf 32} (1985) 254--258.

\bibitem{D92} V.~G.~Drinfeld,
On some unsolved problems in quantum group theory. In 
``Quantum groups",  Lect. Notes in Math. {\bf 1510}
(1992)  p1--8.

\bibitem{EH}
M.~R.~Evans and T.~Hanney,
Nonequilibrium statistical mechanics of the zero-range process and related models,
J. Phys. A: Math. Gen. {\bf 38}  (2005) R195--R240.

\bibitem{FM}P.~A.~Ferrari and J.~B.~Martin,
Stationary distributions of multi-type totally asymmetric exclusion processes,
Ann. Probab. {\bf 35} (2007) 807--832.

\bibitem{HKOTT}G.~Hatayama, A.~Kuniba, M.~Okado, T.~Takagi and Z.~Tsuboi,
Paths, crystals and Fermionic formulae,
Prog. in Math. Phys. {\bf 23} 
Birkh{\"a}user (2002) p205--272. 

\bibitem{HK}
J.~Hong and S-J.~Kang,
\textit{Introduction to quantum groups and crystal bases},
Graduate Studies in Math. {\bf 42}, AMS (2002).

\bibitem{IKT}R.~Inoue, A.~Kuniba and T.~Takagi,
Integrable structure of box-ball systems:
crystal, Bethe ansatz, ultradiscretization and tropical geometry,
J. Phys. A. Math. Theor. {\bf 45}  (2012) 073001  (64pp).

\bibitem{J}
M.~Jimbo,
A $q$-difference analogue of $U({\mathfrak g})$ and the
Yang--Baxter equation,
Lett.\ Math.\ Phys.\  {\bf 10} (1985)  63--69.

\bibitem{KV}
M.~M.~Kapranov and V.~A.~Voevodsky,
2-Categories and Zamolodchikov tetrahedron equations
in Proc. Symposia in Pure Mathmathematics {\bf 56} (1994) 177--259.

\bibitem{Ka1}
M.~Kashiwara,
On crystal bases of $q$-analogue of universal enveloping
algebras, Duke Math. J. {\bf 63} (1991) 465--516.

\bibitem{KMN}
S-J.~Kang, M.~Kashiwara, K.~C.~Misra,
T.~Miwa, T.~Nakashima and A.~Nakayashiki,
Affine crystals and vertex models,
Int. J. Mod. Phys. A {\bf 7} (suppl. 1A)  (1992) 449--484.

\bibitem{Ku}
A.~Kuniba,
Combinatorial Yang-Baxter maps arising from tetrahedron equation, preprint
arXiv:1509.02245, to appear in Theor. Math. Phys.
{\bf 189}(1): (2016) 1472--1485. 

\bibitem{KMO} 
A.~Kuniba, S.~Maruyama and M.~Okado,
Multispecies TASEP and combinatorial $R$, 
J. Phys. A: Math. Theor. {\bf 48} (2015) 34FT02 (19pp).

\bibitem{KMO2}
A.~Kuniba, S.~Maruyama and M.~Okado,
Multispecies TASEP and the tetrahedron equation, 
J. Phys. A: Math. Theor. {\bf 49}   (2016) 114001 (22pp).

\bibitem{KMO4}
A.~Kuniba, S.~Maruyama and M.~Okado,
Multispecies totally asymmetric zero range process:
II. Hat relation and tetrahedron equation,  
J. Integrable Syst. {\bf 1}(1): xyw008.

\bibitem{KOS} 
A.~Kuniba, M.~Okado and S.~Sergeev,
Tetrahedron equation and generalized quantum groups,
J. Phys. A: Math. Theor. {\bf 48} (2015) 304001 (38pp).

\bibitem{NY}
A.~Nakayashiki and Y.~Yamada,
Kostka polynomials and energy functions in solvable lattice models,
Selecta Mathematica, New Ser. {\bf 3} (1997) 547--599.

\bibitem{P}
A.~M.~Povolotsky,
On the integrability of zero-range chipping models
with factorized steady states, 
J. Phys. A: Math. Theor. {\bf 46} (2013) 465205 (25pp).

\bibitem{V}
A.~Veselov,
Yang-Baxter maps: Dynamical point of view,
Math. Soc. Japan Memoirs {\bf 17} (2007) 145--167.

\bibitem{Y}
Y.~Yamada,
Birational representation of Weyl group, combinatorial $R$-matrix
and discrete Toda equation,
in {\em Physics and Combinatorics}, 
Proceedings of the Nagoya 2000 International Workshop,
A.~N.~Kirillov and N.~Liskova, ed. 
World Scientific (2001), p305--319.

\bibitem{Zam80}
A.~B.~Zamolodchikov,
Tetrahedra equations and integrable systems in three-dimensional space,
Soviet Phys. JETP {\bf 79} 641--664 (1980).

\end{thebibliography}
\end{document}